%
%
%

\documentclass[graybox]{svmult}


\usepackage{type1cm}        
%
\usepackage{makeidx}         
\usepackage{graphicx}        
\usepackage{multicol}        
\usepackage[bottom]{footmisc}

\usepackage{newtxtext}       %
\usepackage{newtxmath}       

\usepackage{arydshln}
\usepackage{mathtools}

\usepackage{hyperref}

\def \x {{\bf x}}
\def \y {{\bf y}}
\def \z {{\bf z}}
\def \a {{\bf a}}
\def \b {{\bf b}}
\def \c {{\bf c}}
\def \u {{\bf u}}
\def \v {{\bf v}}


\makeindex             


\begin{document}

\title*{On Algebraic Approaches for DNA Codes with Multiple Constraints}
\author{Krishna Gopal Benerjee and Manish K Gupta}
\institute{Krishna Gopal Benerjee \at Department of Electrical Engineering, Indian Institute of Technology Kanpur, India \\ \email{kgopal@iitk.ac.in, kg.benerjee@gmail.com}
\and Manish K Gupta \at Kaushalya: the Skill University, Ahemdabad, India \\ \email{mankg@guptalab.org}}
%
%
\maketitle


\abstract{
DNA strings and their properties are widely studied  since last 20 years due to its applications in DNA computing. 
In this area, one designs a set of DNA strings (called DNA code) which satisfies certain thermodynamic and combinatorial constraints such as reverse constraint, reverse-complement constraint, $GC$-content constraint and Hamming constraint. 
However recent applications of DNA codes in DNA data storage resulted in many new constraints on DNA codes such as avoiding tandem repeats constraint (a generalization of non-homopolymer constraint) and avoiding secondary structures constraint. 
Therefore, in this chapter, we introduce DNA codes with recently developed constraints. 
In particular, we discuss reverse, reverse-complement, $GC$-content, Hamming, uncorrelated-correlated, thermodynamic, avoiding tandem repeats and avoiding secondary structures constraints. 
DNA codes are constructed using various approaches such as algebraic, computational, and combinatorial. 
In particular, in algebraic approaches, one uses a finite ring and a map to construct a DNA code. 
Most of such approaches does not yield DNA codes with high Hamming distance. 
In this chapter, we focus on algebraic constructions using maps (usually an isometry on some finite ring) which yields DNA codes with high Hamming distance. 
We focus on non-cyclic DNA codes. 
We briefly discuss various metrics such as Gau distance, Non-Homopolymer distance etc.
We discuss about algebraic constructions of families of DNA codes that satisfy multiple constraints and/or properties. 
Further, we also discuss about algebraic bounds on DNA codes with multiple constraints. 
Finally, we present some open research directions in this area.
}
\tableofcontents

\section{Introduction}
\label{my sec:1}
DeoxyriboNucleic Acid (DNA) is a blue-print of life storing all the instructions for making living species. The basic structure of DNA is given in Fig. \ref{DNA Fig}. It is a robust molecule and has been used in many emerging areas of DNA computing, DNA nanotechnology, DNA origami, Chemical computing and synthetic biology etc. In most of these applications it is required to construct a set of DNA strings (called DNA codes) that are sufficiently dissimilar. This results in a beautiful but tough problem of construction of DNA strings with certain thermodynamic and combinatorial constraints. There are many ways to construct these objects such as computational (algorithmic ways) and mathematical (algebraic and Combinatorial).  This chapter will focus on algebraic ways to construct such DNA codes. The chapter is organised as follows. 

DNA strings and their properties are discussed in Section \ref{my sec:2}. Section \ref{my sec:3} describes various properties and constrains for DNA codes.
Constructions of DNA codes with various properties and constraints are given in Section \ref{my sec:4} using bijective maps.
Then, DNA codes are constructed from binary codes using Non-Homopolymer Map in Section \ref{my sec:5}.
Further, several algebraic bounds are listed in Section \ref{my sec:6}, and finally  some open problems are given in Section \ref{my sec:7}.

\begin{figure}[t]
\centering
\sidecaption[t]
\includegraphics[scale=.6]{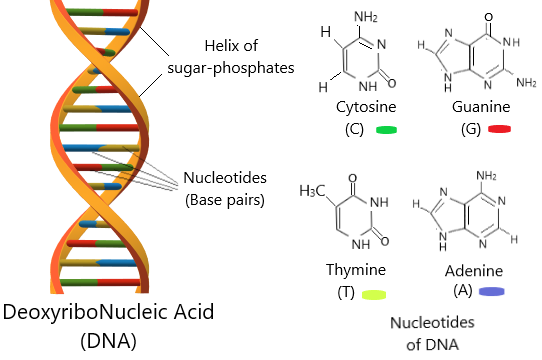}
\caption{DeoxyriboNucleic Acid (DNA) is a double helix structure that is formed by phosphate group, sugar, and four nucleotides (also called bases): Adenine (A), Guanine (G), Cytosine (C), and Thymine (T). 
Adenine and Thymine bind to each other with double hydrogen bond, and similarly, Guanine and Cytosine bind with triple hydrogen bond. 
Thus, Adenine and Thymine, and also, Guanine and Cytosine are Watson-Crick complement to each other. }
\label{DNA Fig}      
\end{figure}

\section{DNA Strings and its Properties}
\label{my sec:2}
In this section, we have defined terms, notations, and properties of DNA strings those are used in this chapter. 
\subsection{DNA Strings}
In this section, we have given formal definitions for string, reverse string, sub-string, concatenated string, DNA string, concatenated DNA string, DNA sub-string in Definition \ref{string def}.
\begin{definition}
For the alphabet $\mathcal{A}_q$ of size $q$ and an integer $n$ ($\geq1$), any one dimensional array $\x$ = $(x_1\ x_2\ \ldots\ x_n)\in\mathcal{A}_q^n$ is called a string of length $n$. 
For any strings $\x$ = $(x_1\ x_2\ \ldots\ x_n)$ over $\mathcal{A}_q$,
\begin{itemize}
\item the reverse string is $\x^r$ = $(x_n\ x_{n-1}\ \ldots\ x_1)$, 
\item for given $1\leq i<j\leq n$, the sub-string is $\x(i,j)$ = $(x_i\ x_{i+1}\ \ldots\ x_j)$, 
\item for given positive integers $i,j,k,l$ ($1\leq i<j\leq n$ and $1\leq k<l\leq n$), two sub-strings $\x(i,j)$ and $\x(k,l)$ are known as disjoint sub-strings of the string $\x$ if $j<k$.
\item for string $\y$ = $(y_1\ y_2\ \ldots\ y_m)$ of length $m$ over $\mathcal{A}_q$, the string 
\[
(\x\ \y) = (x_1\ x_2\ \ldots\  x_n\ y_1\ y_2\ \ldots\ y_m)
\]
of length $n+m$ is called the concatenated string of strings $\x$ and $\y$.
\end{itemize}
\label{string def}
\end{definition}

\begin{example}{Example}
For $q$ = $2$, consider the alphabet $\mathcal{A}_2$ = $\{0,1\}$. 
\begin{itemize}
    \item For the string $\z$ = $(1\ 0\ 0\ 0\ 1\ 1\ 1)$ of length $7$, the reverse string $\z^r$ = $(1\ 1\ 1\ 0\ 0\ 0\ 1)$.
    \item The string $\z(3,6)$ = $(0\ 0\ 1\ 1)$ is a sub-string of the string $\z$ = $(1\ 0\ 0\ 0\ 1\ 1\ 1)$.
    \item The sub-strings $\z(1,3)$ = $(1\ 0\ 0)$ and $\z(5,6)$ = $(1\ 1)$ are disjoint sub-strings of the string $\z$ = $(1\ 0\ 0\ 0\ 1\ 1\ 1)$. 
    \item For $\z_1$ = $(1\ 1\ 1\ 1\ 0)$ and $\z_2$ = $(0\ 0\ 0\ 1)$, the string $(\z_1\ \z_2)$ = $(1\ 1\ 1\ 1\ 0\ 0\ 0\ 0\ 1)$ is the concatenated string of $\z_1$ and $\z_2$.
\end{itemize}
\end{example}
For any string $\x$ of length $n$ over the alphabet $\mathcal{A}_q$, the length of the reverse string $\x^r$ is also $n$.
For any element $a$ in an alphabet $\mathcal{A}_q$ of size $q$, $\a_{r,s}$ is an array of $r$ rows and $s$ columns, $i.e.$,
\[
\a_{r,s} = 
\left(
\begin{array}{cccc}
     a & a & \ldots & a  \\
     a & a & \ldots & a  \\
     \vdots & \vdots & \ddots & \vdots \\
     a & a & \ldots & a  
\end{array}
\right)_{r\times s}.
\]
For the particular case $q$ = $2$, any string and their sub-strings defined over the alphabet $\mathcal{A}_2$ is called binary string and binary sub-strings, respectively. 
Now, we define DNA strings as given in Definition \ref{DNA string def}
\begin{definition}
A DNA string is a string defined over the quaternary alphabet $\Sigma_{DNA}$ = $\{A,C,G,T\}$.
For simplicity, we represent DNA string of length $n$ as $\x$ = $x_1x_2\ldots x_n$. 
For two DNA strings $\x$ and $\y$, the concatenated DNA string of $\x$ and $\y$ is represented by $\textbf{xy}$. 
Similarly, for any DNA string $\x$ = $x_1x_2\ldots x_n$ of length $n$, a sub-string $\x(i,j)$ = $x_ix_{i+1}\ldots x_j$ is called DNA sub-string, where $1\leq i<j\leq n$. 
\label{DNA string def}
\end{definition}
\begin{example}{Example}
Again, the string $AACGAAT\in\Sigma_{DNA}^7$ is a DNA string of length $7$ bps. 
For DNA strings $\x$ = $CACAGT\in\Sigma_{DNA}^6$ and $\y$ = $AAACGCGGG\in\Sigma_{DNA}^9$, strings $\textbf{xy}$ = $CACAGT AAACGCGGG$ and $\textbf{yx}$ = $AAACGCGGG CACAGT$ are concatenated DNA strings each of length $9$ bps.
\end{example}

\subsection{Basic Properties of DNA Strings}
In this section, we have given formal definitions for reverse, reverse-complement and $GC$-weight of any given DNA string.
\begin{definition}
For any given DNA string $\x$ = $x_1x_2\ldots x_n$ of length $n$, 
\begin{itemize}
    \item the \textit{reverse} DNA string is $\x^r$ = $x_nx_{n-1}\ldots x_1$ of length $n$,
    \item the \textit{Watson-Crick complement} or simply \textit{complement} DNA string is $\x^c$ = $x_1^cx_2^c\ldots x_n^c$ of length $n$, and
    \item the \textit{reverse-complement} DNA string is $\x^{rc}$ = $x_n^cx_{n-1}^c\ldots x_1^c$ of length $n$,
\end{itemize}
where $A^c$ = $T$, $C^c$ = $G$, $G^c$ = $C$, and $T^c$ = $A$, $i.e.$, Watson-Crick complement of $A,C,G$ and $T$ are $T,G,C$ and $A$, respectively. 
for simplicity, we call the reverse DNA string and reverse-complement DNA string as R DNA string and RC DNA string, respectively. 
Further, the $GC$-weight of the DNA string $\x$ is the sum of the number of nucleotide $C$ and the number of nucleotide $G$ in the DNA string $\x$.
We denote the $GC$-weight of the DNA string $\x$ by $w_{GC}(\x)$.
\end{definition}

\begin{example}{Example}
For the DNA string $\x$ = $AAGCCAAATC$ of length $10$ bps, 
\begin{itemize}
    \item the reverse DNA string (or R DNA string) is $\x^r$ = $CTAAACCGAA$,
    \item the Watson-Crick complement or complement DNA string is $\x^c$ = $TTCGG$- $TTTAG$, 
    \item the reverse-complement DNA string (or RC DNA string) is $\x^{rc}$ = $GATTTGGCTT$, and 
    \item the $GC$-weight of the DNA string is $w_{GC}(\x)$ = $w_{GC}(AAGCCAAATC)$ = $4$.
\end{itemize}
\end{example}
For several molecular biology techniques, such as designing optimal DNA microarrays, quantitative PCR, and multiplex PCR, DNA hybridization is involved, and it depends on the experimental value of some parameters such as malting temperature \cite{Howley1979,MARMUR1962109,10.1093/bioinformatics/bti066,RUSSELL1969473}.
In \cite{MARMUR1962109}, the melting temperature of a DNA string $\x$ of length $n$ and $GC$-weight $w_{GC}(\x)$ is given by 
\begin{equation}
    T_\x = 64.9+41.0\times\left(\frac{w_{GC}(\x)-16.4}{n}\right).
    \label{DNA temperature 1}
\end{equation}
Further, in \cite{Howley1979}, the salt adjust melting temperature of a DNA string $\x$ of length $n$ and $GC$-weight $w_{GC}(\x)$ is 
\begin{equation}
    T_\x = 100.5+41.0\times\left(\frac{w_{GC}(\x)-36.4}{n}\right)+16.6\log([Na^+]).
    \label{DNA temperature 2}
\end{equation}
Hence, for given length $n$, DNA strings have similar melting temperature if they have similar $GC$-weight.

\subsection{Secondary Structures of DNA strings}
\begin{figure}[t]
\centering
\sidecaption[t]
\includegraphics[scale=.7]{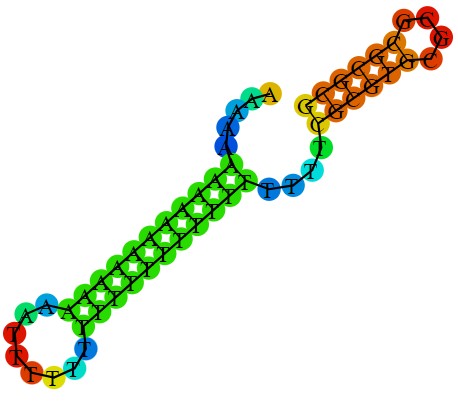}
\caption{Consider the DNA string $\x$ = $AAAAAAAAAAAAAAAAAATTTTTTTTTTTTT$- $TTTTTTTTCGCGTGCGCGCGCGCG$ of length $55$ bps.
The DNA sub-strings $\x(6,16)$ and $\x(40,45)$ bind pairwise with $\x(25,35)^r$ and $\x(50,55)^r$, and it forms two stems one of length $11$ bps and another of length $6$ bps.
The secondary structure for the DNA string $\x$ is predicted by The Vienna RNA Websuite \cite{10.1093/nar/gkn188,Lorenz2011}.}
\label{Sec Str Example Fig}      
\end{figure}
Any chemically active DNA string $\x$ = $x_1x_2\ldots x_n$ of length $n$ form secondary structures by binding upon itself.
An example of such secondary structure in a DNA string is given in Fig. \ref{Sec Str Example Fig}.
Secondary structures can be deduced in a physical DNA using mostly Nuclear Magnetic Resonance (NMR) and X-ray crystallography.
Like all other molecules, DNA must follow the thermodynamic laws, and thus, it is an assumption that the natural fold in any DNA is law energy structure \cite{Deschnes2005AGA}. 
In a given DNA string, secondary structures are \textit{approximately} predicted using a dynamic algorithm known as the Nussinov-Jacobson folding algorithm (NJ algorithm) \cite{Nussinov6309}.
When a DNA string forms a secondary structure then it releases energy called \textit{free energy}, and thus, secondary structures can be predicted by computing the free energy \cite{Clote2000ComputationalMB}.
Further, the free energy can be calculated by computing energies released by binding of $x_i$ with $x_j$ for $i,j\in\{1,2,\ldots,n\}$ and $i<j$. 
The energy released by binding of $x_i$ and $x_j$ are known as \textit{interaction energy} and it is denoted by $\alpha(x_i,x_j)$. 
The assumption for the NJ algorithm is following.
\begin{description}[Assumption]
\item[Assumption] In a DNA string $\x$ = $x_1x_2\ldots x_n$ of length $n$, the interaction energy $\alpha(x_i,x_j)$ between nucleotides $x_i$ and $x_j$ is not depend on all other nucleotide pairs for $1\leq i<j\leq n$.
\end{description}
The interaction energy $\alpha(x_i,x_j)$ is a non-positive value and it depends on the nucleotides $x_i$ and $x_j$.
For any DNA string, the preferable value of interaction energy between $x_i$ and $x_j$ (for details please see \cite{Clote2000ComputationalMB}) is
\begin{equation}
\alpha(x_i,x_j)=\left\{
\begin{array}{rl}
-5 & \mbox{ if } (x_i,x_j)\in\{(G,C),(C,G)\}, \\
-4 & \mbox{ if } (x_i,x_j)\in\{(T,A),(A,T)\}, \\
-1 & \mbox{ if } (x_i,x_j)\in\{(T,G),(G,T)\}, \\
0 & \mbox{ otherwise.}
\end{array}
\right.
\label{interaction energy formula}
\end{equation}
From the assumption, the \textit{minimum free energy}, $E_{i,j}$, for the sub-string $\x(i,j)$ of DNA string $\x$ = $x_1x_2\ldots x_n$ is 
\begin{equation}
  E_{i,j} = \min\left\{E_{i+1,j-1}+\alpha(x_i,x_j), \min_{i<k\leq j}(E_{i,k-1}+E_{k,j})\right\},  
\end{equation}
with the initial conditions $E_{r,r}$ = $0$ and $E_{r-1,r} = 0$ for $r=i,i+1,\ldots,j$ \cite{Clote2000ComputationalMB,10.1007/11779360_9}. 
These initial conditions are followed from the fact that any nucleotide does not interact with itself and immediate neighbours for secondary structures in any DNA.  
For the DNA string $\x$ of length $n$, the minimum free energy is given by $E_{1,n}$. 
A low negative value of $E_{1,n}$ for any DNA string of length $n$ is a good indicator of secondary structures those are exist in the physical DNA.
For any given DNA string, secondary structures are predicted by the RNAfold Web Server using the NJ algorithm \cite{10.1093/nar/gkn188}.
one can observe form the Equation (\ref{interaction energy formula}), any DNA string avoids secondary structures if the DNA string avoids the pairing of $A$ and $T$, the pairing of $G$ and $C$, and the pairing of $G$ and $T$. 
Using the observation, DNA codes that avoids secondary structures are constructed in \cite{9214530}.

From the definition of the interaction energy, we define two terms secondary-complement and reverse-secondary-complement DNA strings as given in Definition \ref{sec-comp reverse-sec-comp definitions}.
\begin{definition}
For any DNA string $\x=x_1x_2\ldots x_n$ of length $n$, 
\begin{itemize}
    \item the \textit{secondary-complement} DNA string $\x^s=x_1^sx_2^s\ldots x_n^s$ of length $n$, and
    \item the \textit{reverse-secondary-complement} DNA string $\x^{rs}=x_n^sx_{n-1}^s\ldots x_1^s$ of length $n$,
\end{itemize}
where $A^s=T$, $T^s=A$, $C^s=G$, $G^s=C$, $G^s=T$ and $T^s=G$.
\label{sec-comp reverse-sec-comp definitions}
\end{definition}

\begin{example}{Example}
Consider the DNA string $\x$ = $ATGAA$ of length $5$ bps.
Then
\begin{itemize}
    \item all the DNA strings $TACTT$, $TGCTT$, $TATTT$ and $TGTTT$ are the secondary-complement DNA strings of $\x$, and
    \item all the DNA strings $TTCAT$, $TTCGT$, $TTTAT$ and $TTTGT$ are the reverse-secondary-complement DNA strings of $\x$.
\end{itemize} 
Note that the secondary-complement and reverse-secondary-complement DNA strings of $\x$ are not unique.
\end{example}
Observe that the secondary-complement of $T$ and $G$ are not unique, and therefore, the secondary complement of any DNA string having the nucleotide $G$ and/or nucleotide $T$ is not unique. 
Also, for any DNA string $\x$,
\begin{itemize}
    \item the DNA string $\x^c$ is a secondary-complement DNA string, and
    \item the DNA string $\x^{rc}$ is a reverse-secondary-complement DNA string.
\end{itemize} 
\begin{proposition}
 If the DNA string $\x$ of length $n$ forms a secondary structure with stem length $\ell$ then there exist two disjoint DNA sub-strings $\x(i,i+\ell-1)$ and $\x(j,j+\ell-1)$ ($i\geq j+\ell$) such that $\x(j,j+\ell-1)=\x(i,i+\ell-1)^s$ or $\x(j,j+\ell-1)=\x(i,i+\ell-1)^{rs}$.
 \label{sec str sec comp proposition}
\end{proposition}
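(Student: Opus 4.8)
The plan is to reduce the geometric statement about stems to a combinatorial identity on nucleotides. First I would record the one fact that links energy to the secondary-complement: comparing the table in Equation~(\ref{interaction energy formula}) with Definition~\ref{sec-comp reverse-sec-comp definitions}, for nucleotides $a,b\in\Sigma_{DNA}$ one has $\alpha(a,b)<0$ exactly when $b$ is a secondary-complement of $a$, i.e.\ when $\{a,b\}$ is one of $\{A,T\}$, $\{C,G\}$, $\{G,T\}$. Thus a base pair in a secondary structure is nothing but a pair of positions whose nucleotides are secondary-complements of each other, and this relation is visibly symmetric ($b=a^s\Rightarrow a=b^s$). This is the only step that uses the explicit energy values.

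Next I would fix what a stem of length $\ell$ means: a run of $\ell$ stacked bonds joining one block of $\ell$ consecutive positions to another block of $\ell$ consecutive positions. Calling the earlier block $\x(j,j+\ell-1)$ and the later one $\x(i,i+\ell-1)$, disjointness of two length-$\ell$ blocks with the first ending before the second begins is exactly $i\geq j+\ell$, which is the index condition in the statement. The stacking then forces one of two bond patterns: the antiparallel (hairpin) pattern, in which position $i+k$ bonds to position $j+\ell-1-k$ for $k=0,1,\ldots,\ell-1$, or the parallel pattern, in which position $i+k$ bonds to position $j+k$.

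Then I would translate each bond through the symmetric characterisation above. In the antiparallel case the bonds read $x_{j+\ell-1-k}=x_{i+k}^s$ for all $k$; since by Definition~\ref{sec-comp reverse-sec-comp definitions} the $m$-th letter of $\x(i,i+\ell-1)^{rs}$ is $x_{i+\ell-m}^s$ and the $m$-th letter of $\x(j,j+\ell-1)$ is $x_{j+m-1}$, the bond relation with $k=\ell-m$ gives $x_{j+m-1}=x_{i+\ell-m}^s$ for every $m$, so $\x(j,j+\ell-1)=\x(i,i+\ell-1)^{rs}$. In the parallel case the bonds read $x_{j+k}=x_{i+k}^s$, which is directly $\x(j,j+\ell-1)=\x(i,i+\ell-1)^{s}$. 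These are precisely the two alternatives claimed.

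The main obstacle is definitional rather than computational: one must pin down \emph{stem} so that the two bond patterns above are the only possibilities, after which the index bookkeeping is forced. A secondary subtlety is that $(\cdot)^s$ is multivalued on $G$ and $T$ because of the wobble $G$--$T$ bond, so the two displayed equalities should be read as ``for a suitable choice of secondary-complement,'' consistent with the non-uniqueness noted after Definition~\ref{sec-comp reverse-sec-comp definitions}; the symmetry of the binding relation guarantees that a compatible choice always exists.
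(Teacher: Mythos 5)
Your proof is correct, but there is nothing in the paper to compare it against: Proposition \ref{sec str sec comp proposition} is stated without proof, as an implicit consequence of Definition \ref{sec-comp reverse-sec-comp definitions} and of Definition \ref{Secondary structure def} (which, somewhat awkwardly, only appears \emph{after} the proposition in the text). Your argument is precisely the unpacking the paper leaves to the reader, and it is consistent with the paper's framework at every step. Your energy-table characterisation of a bond --- $\alpha(a,b)<0$ exactly when $\{a,b\}$ is one of $\{A,T\}$, $\{C,G\}$, $\{G,T\}$, i.e.\ when $b$ is a secondary-complement of $a$ --- mirrors the paper's own remark that the secondary-complement notions are introduced ``from the definition of the interaction energy.'' Your two bond patterns, parallel and antiparallel, correspond exactly to the clause in Definition \ref{Secondary structure def} that the nucleotides of one sub-string bind pairwise to those of $\x(i_t,i_{t+1})$ or of $\x(i_t,i_{t+1})^r$, so your case analysis is exhaustive by that definition rather than by an ad hoc notion of stem; the index bookkeeping also checks out, since $i\geq j+\ell$ is equivalent to disjointness of the two length-$\ell$ blocks in the sense of Definition \ref{string def}, and the substitution $k=\ell-m$ in the antiparallel case correctly yields $x_{j+m-1}=x_{i+\ell-m}^s$ for all $m$, i.e.\ $\x(j,j+\ell-1)=\x(i,i+\ell-1)^{rs}$. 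Your closing caveat is the right one and worth keeping explicit: because $G^s$ and $T^s$ are multivalued, both displayed equalities must be read as holding for \emph{some} choice of secondary-complement --- exactly the reading the paper adopts in its example, where several distinct strings are all called secondary-complements of the same $\x$ --- and the symmetry $b=a^s \Leftrightarrow a=b^s$ (valid for all three pairs) guarantees a compatible choice. In short, your proposal supplies a complete and correct proof of a statement the paper asserts without one.
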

 Now, one can find the following remark.
 \begin{remark}
 Consider a DNA string $\x$ of length $n$ such that the DNA string does not have two sub-strings $\x(i,i+\ell-1)$ and $\x(j,j+\ell-1)$ ($i+\ell< j$) such that $\x(j,j+\ell-1)\neq\x(i,i+\ell-1)^s$ and $\x(j,j+\ell-1)\neq\x(i,i+\ell-1)^{rs}$. Then, the DNA string $\x$ does not form any secondary structure with stems of length $\ell$. 
 \label{Sec Str existence remark}
 \end{remark}
Now, as defined in \cite{Zuker1984}, the secondary structure for any DNA string is defined as following.
\begin{definition}
For any DNA string $\x=x_1x_2\ldots x_n$ of length $n$, consider a set $S$ = $\{\x(i_1,i_2)$, $\x(i_3,i_4)$, $\ldots$, $\x(i_{2j-1},i_{2j})\}$ of DNA sub-strings of $\x$ such that $1\leq i_1<i_2<i_3<\ldots<i_{2j}\leq n$.
A secondary structure is the result of binding pairwise of the nucleotides of DNA sub-strings in the set $S$, $i.e.$, for each $\x(i_s,i_{s+1})\in S$ there exist some $\x(i_t,i_{t+1})\in S$ such that all the nucleotides of the sub-string $\x(i_s,i_{s+1})$ bind pairwise to either the nucleotides of $\x(i_t,i_{t+1})$ or the nucleotides of $\x(i_t,i_{t+1})^r$, where the length of the sub-strings $\x(i_s,i_{s+1})$ and $\x(i_t,i_{t+1})$ are the same, $i.e.$, $i_{t+1}-i+1$ = $i_{s+1}-i_s+1$, and $s,t\in\{1,2,\ldots,2j-1\}$. 
Binding of $\x(i_s,i_{s+1})$ to either $\x(i_t,i_{t+1})$ or $\x(i_t,i_{t+1})^r$ forms stem of length $i_{s+1}-i_s+1$ in the secondary structure for the DNA string $\x$.
Note that every set of DNA sub-strings is not a valid secondary structure, as most possibilities are removed due to chemical and stereochemical constraints. 
\label{Secondary structure def}
\end{definition}
\begin{example}{Example}
As shown in Fig. \ref{Sec Str Example Fig}, for the DNA string $\x$ = $AAAAAAAAAAAAAAAAAA$- $TTTTTTTTTTTTTTTTTTTTTCGCGTGCGCGCGCGCG$ of length $55$ bps, consider $S$ = $\{\x(6,16),$ $\x(25,35),\x(40,45),\x(50,55)\}$, where $\x(6,16)$ = $AAAAAA$- $AAAAA$, $\x(25,35)$ = $TTTTTTTTTTT$, $\x(40,45)$ = $CGCGTG$, and $\x(50,55)$ = $CGCGCG$. 
The DNA sub-strings $\x(6,16)$ of length $11$ bps and $\x(40,45)$ of length $6$ bps bind pairwise with $\x(25,35)^r$ of length $11$ bps and $\x(50,55)^r$ of length $6$ bps, respectively. 
Also, observe that $\x(6,16)$ = $\x(25,35)^{rs}$ and $\x(40,45)$ = $\x(50,55)^{rs}$. 
The secondary structure has two stems of length $11$ bps and $6$ bps.
\end{example}

In Definition \ref{Secondary structure def}, each set of sub-strings of any DNA string is not valid secondary structure, therefore, Proposition \ref{sec str sec comp proposition} is not true in reverse order.
\subsection{Correlations of DNA Strings }
DNA strings can be designed using correlation properties such that the string avoids the forbidden strings or sub-strings.  
In the case of DNA data storage, the block addresses are correspond to forbidden strings in the pool. 
We prefer to design DNA strings in which the part of the information is not encoded into the DNA sub-strings that are the same as the address of any DNA strings. 
This motivates to define self-uncorrelated DNA string and mutually uncorrelated DNA strings \cite{yazdi2015rewritable}.
\begin{definition}
Consider two DNA strings $\x$ and $\y$ of length $n$ and $m$, respectively.
The correlation of $\x$ and $\y$, denoted by $\x\circ\y$, is a binary string $\a$ = $(a_1\ a_2\ \ldots\ a_n)$ of length $n$, where 
\begin{equation*}
    a_i = 
    \begin{cases}
    1 & \mbox{ if }m+i-1<n\mbox{ and }\x(i,m-i-1)=\y(1,m), \\
    0 & \mbox{ if }m+i-1<n\mbox{ and }\x(i,m-i-1)\neq\y(1,m), \\
    1 & \mbox{ if }m+i-1\geq n\mbox{ and }\x(i,n)=\y(1,n-i+1),\mbox{ and } \\
    0 & \mbox{ if }m+i-1\geq n\mbox{ and }\x(i,n)\neq\y(1,n-i+1).
    \end{cases}
\end{equation*}
For any DNA string $\x$ of length $n$, if $\x\circ\x$ = $(1\ \textbf{0}_{1,n-1})$ then the DNA string $\x$ is called self-uncorrelated DNA string.
Further, for any two DNA strings $\x$ of length $n$ and $\y$ of length $m$, if $\x\circ\y$ = $\textbf{0}_{1,n}$ and $\y\circ\x$ = $\textbf{0}_{1,m}$ then the DNA strings $\x$ and $\y$ are called mutually uncorrelated DNA strings.
\end{definition}
\begin{example}{Example}
For DNA strings $\x$ = $ACCATG$ of length $6$ bps and $\y$ = $CATG$ of length $4$ bps, the correlation $\x\circ\y$ = $ACCATG\circ CATG$ = $(0\ 0\ 1\ 0\ 0\ 0)$, where 
\begin{equation*}
    \begin{array}{cccccccccccl}
        \x= & A & C & C & A & T & G &   &   &   &   &  \\
        \y= & C & A & T & G &   &   &   &   &   & \hspace{0.5cm}0 & \hspace{0.5cm}\x(1,4) \neq \y(1,4) \\
            &   & C & A & T & G &   &   &   &   & \hspace{0.5cm}0 & \hspace{0.5cm}\x(2,5) \neq \y(1,4) \\
            &   &   & C & A & T & G &   &   &   & \hspace{0.5cm}1 & \hspace{0.5cm}\x(3,6) = \y(1,4)    \\
            &   &   &   & C & A & T & G &   &   & \hspace{0.5cm}0 & \hspace{0.5cm}\x(4,6) \neq \y(1,3) \\
            &   &   &   &   & C & A & T & G &   & \hspace{0.5cm}0 & \hspace{0.5cm}\x(5,6) \neq \y(1,2) \\
            &   &   &   &   &   & C & A & T & G & \hspace{0.5cm}0 & \hspace{0.5cm}\x(6,6) \neq \y(1,1).
    \end{array}
\end{equation*}
Also, the DNA string $ACAGT$ is self-uncorrelated because $ACAGT\circ ACAGT$ = $(1\ 0\ 0\ 0\ 0)$.
Again, DNA strings $ACAGT$ and $AGCATT$ are mutually uncorrelated because $ACAGT\circ AGCATT$ = $(0\ 0\ 0\ 0\ 0)$ and $AGCATT\circ ACAGT$ = $(0\ 0\ 0\ 0\ 0\ 0)$.
\end{example}
Observe that $\x\circ\y$ and $\y\circ\x$ are not the same in general. 

\section{DNA Codes}
\label{my sec:3}
In this section, we have discussed about the Hamming distance, codes, DNA codes and their properties that helps to reduce cost and errors during synthesis and sequencing physical DNA.

For any positive integers $n$ and $M$, a sub-set $\mathscr{C}\subseteq\mathcal{A}_q^n$ of size $M$ is called a \textit{code} over the alphabet $\mathcal{A}_q$ with the $(n,M,d)$ parameters, where $d$ = $\min\{d(\x,\y):\x,\y\in\mathscr{C}\ s.t.\ \x\neq\y\}$ is called the \textit{minimum distance} and $d(\x,\y)$ is the distance between the strings $\x$ and $\y$ in $\mathcal{A}_q^n$. 
Now, the Hamming distance between $\x$ = $(x_1\ x_2\ \ldots\ x_n)$ and $\y$ = $(y_1\ y_2\ \ldots\ y_n)$ in $\mathcal{A}_q^n$ is the size of the set $\{i:x_i\neq y_i,\mbox{ and }1\leq i\leq n\}$. 
In this chapter, the minimum Hamming distance, and the Hamming distance between $\x$ and $\y$ are denoted by $d_H$ and $H(\x,\y)$, respectively. 
Now, we define the DNA codes in Definition \ref{DNA code definition}.
\begin{definition}
Any $(n,M,d_H)$ code $\mathscr{C}_{DNA}$ defined over the alphabet $\Sigma_{DNA}$ is called DNA code with the minimum Hamming distance $d_H$, the length $n$, and the size $M$. 
\label{DNA code definition}
\end{definition}

\begin{example}{Example}
The set $\mathscr{C}_{DNA}$ = $\{AACC, CCTT, AGGT\}\subset\Sigma_{DNA}^4$ is an $(n=4,M=3,d_H=3)$ DNA code, where $H(AACC, CCTT)$ = $4$,  $H(CCTT, AGGT)$ = $3$, and $H(AACC,$ $AGGT)$ = $3$. 
\end{example}

For any given integer $n$ ($\geq1$), if $\x,\y\in\Sigma_{DNA}^n$ then the following properties are satisfied.
\begin{itemize}
    \item $H(\x,\y^r)$ = $H(\x^r,\y)$.
    \item $H(\x,\y^c)$ = $H(\x^c,\y)$.
    \item $H(\x,\y^{rc})$ = $H(\x^{rc},\y)$ = $H(\x^r,\y^c)$ = $H(\x^c,\y^r)$.
\end{itemize}
\subsection{Constraints on DNA Codes}
\label{subsec:2}

As discussed in \cite{doi:10.1089/10665270152530818}, while reading physical DNA corresponding to DNA string $\x$ in a pool of physical DNA, the non-specific hybridisation can be reduced if, for any physical DNA corresponding to the DNA string $\y$,  
\begin{enumerate}
    \item\label{prop 1} $\x$ and $\y$ are not sufficient similar, 
    \item\label{prop 2} $\x$ and $\y^r$ are not sufficient similar, and 
    \item\label{prop 3} $\x$ and $\y^{rc}$ are not sufficient similar.
\end{enumerate}

The property \ref{prop 1}, motivates to define Hamming constraint with distance parameter $d^*$, that ensures that both the physical DNA strings corresponding to DNA strings $\x$ and $\y$ are differ at at-least $d^*$ positions. 
Formally, Hamming constraint for any DNA code is defined in Section \ref{Hamming Constrain}. 

The property \ref{prop 2}, motivates to define reverse constraint with distance parameter $d^*$.
The reverse constraint ensures the physical DNA string corresponding to DNA string $\x$ is differ with the reverse string of the DNA string corresponding to DNA string $\y$ by at-least $d^*$ positions. 
The reverse constraint for any DNA code is defined in Section \ref{Reverse Constraint}. 

Further, the property $\ref{prop 3}$ indicates that the physical DNA string corresponding to DNA string $\x$ should be differ with the reverse-complement DNA string of the DNA string corresponding to DNA string $\y$ by at-least $d^*$ positions.
It motivates to define the reverse-complement constraint, and the reverse-complement constraint is defined in the Section \ref{RC Constraint}.

\subsubsection{Hamming Constraint with Distance Parameter $d^*$}\label{Hamming Constrain} 
An ($n,M,d_H$) DNA code satisfies the Hamming constraint with the distance parameter $d^*$ if the Hamming distance $H(\x,\y)\geq d^*$ for any DNA codewords $\x,\y\in\mathscr{C}_{DNA}$ and $\x\neq\y$ \cite{doi:10.1089/10665270152530818}.
\begin{example}{Example}
The $(4,3,3)$ DNA code $\mathscr{C}_{DNA}$ = $\{AACC, CCTT, AGGT\}$ satisfies the Hamming constraint with the distance parameter $3$. 
Also, the DNA code satisfies the Hamming constraint with distance parameters $1$ and $2$. 
\end{example}
As given in Definition \ref{DNA code definition}, for any $(n,M,d_H)$ DNA code $\mathscr{C}_{DNA}$
\[
d_H=\min\{H(\x,\y):\x\neq\y\mbox{ and }\x,\y\in\mathscr{C}_{DNA}\},
\]
and thus, $H(\x,\y)\geq d_H$ for each $\x$ and $\y$ in $\mathscr{C}_{DNA}$ such that $\x\neq\y$.
Therefore, the DNA code $\mathscr{C}_{DNA}$ satisfies the Hamming constraint with the distance parameter $d_H$ or simply, we call the property as the Hamming constraint.
Hence, in general, any ($n,M,d_H$) DNA code $\mathscr{C}_{DNA}$ satisfies the Hamming constraint, $i.e.$, $H(\x,\y)\geq d_H$ for $\x,\y\in\mathscr{C}_{DNA}$ and $\x\neq\y$.
Thus, all DNA code discussed in this chapter satisfy the Hamming constraint.

\subsubsection{Reverse Constraint with Distance Parameter $d^*$}\label{Reverse Constraint} 
An ($n,M,d_H$) DNA code $\mathscr{C}_{DNA}$ satisfies reverse constraint with distance parameter $d^*$ if $H(\x^r,\y)\geq d^*$ for any $\x,\y\in\mathscr{C}_{DNA}$ and $\x^r\neq\y$ \cite{doi:10.1089/10665270152530818}.
\begin{description}[\ref{Hamming Constrain}.1]
\item[\ref{Reverse Constraint}.1] Reverse constraint: Any DNA code that satisfies reverse constraint with distance property $d^*$ = $d_H$ is called simply DNA code with reverse constraint, $i.e.$, an ($n,M,d_H$) DNA code $\mathscr{C}_{DNA}$ satisfies reverse constraint if $H(\x^r,\y)\geq d_H$ for any $\x,\y\in\mathscr{C}_{DNA}$ and $\x^r\neq\y$. 
For simplicity, we call the reverse constraint as R constraint in the rest of the chapter.
\end{description}
\begin{example}{Example}
The $(4,3,3)$ DNA code $\mathscr{C}_{DNA}$ = $\{AACC, CCTT, AGGT\}$ satisfies the R constraint with the distance parameter $d^*$ = $2$, where $(AACC)^r$ = $CCAA$, $(CCTT)^r$ = $TTCC$, $(AGGT)^r$ = $TGGA$, and the Hamming distances 
\[
\begin{array}{ll}
    H((AACC)^r,AACC) = 4, & H((CCTT)^r,AACC) = 2, \\
    H((AGGT)^r,AACC) = 4, & H((CCTT)^r,CCTT) = 4, \\
    H((AGGT)^r,CCTT) = 4, & H((AGGT)^r,AGGT) = 2.  
\end{array}
\]
\end{example}

\subsubsection{Reverse-Complement Constraint with Distance Parameter $d^*$}\label{RC Constraint} 
An ($n,M,d_H$) DNA code $\mathscr{C}_{DNA}$ satisfies reverse-complement constraint with distance parameter $d^*$ if $H(\x^{rc},\y)\geq d^*$ for any $\x,\y\in\mathscr{C}_{DNA}$ and $\x^{rc}\neq\y$ \cite{doi:10.1089/10665270152530818}.
\begin{description}[\ref{Hamming Constrain}.1]
\item[\ref{RC Constraint}.1] Reverse-complement constraint: Any DNA code that satisfies reverse-complement constraint with distance property $d^*$ = $d_H$ is called simply DNA code with reverse-complement constraint or RC constraint, $i.e.$, an ($n,M,d_H$) DNA code $\mathscr{C}_{DNA}$ satisfies reverse-complement constraint if $H(\x^{rc},\y)\geq d_H$ for any $\x,\y\in\mathscr{C}_{DNA}$ and $\x^{rc}\neq\y$. 
For simplicity, we call the reverse-complement constraint as RC constraint in the rest of the chapter.
\end{description}
\begin{example}{Example}
The $(4,3,3)$ DNA code $\mathscr{C}_{DNA}$ = $\{AACC, CCTT, AGGT\}$ satisfies the RC constraint with the distance parameter $2$, where $(AACC)^{rc}$ = $GGTT$, $(CCTT)^{rc}$ = $GGAA$, $(AGGT)^{rc}$ = $ACCT$, and the Hamming distances 
\[
\begin{array}{ll}
    H((AACC)^{rc},AACC) = 4, & H((CCTT)^{rc},AACC) = 2, \\
    H((AGGT)^{rc},AACC) = 3, & H((CCTT)^{rc},CCTT) = 4, \\
    H((AGGT)^{rc},CCTT) = 3, & H((AGGT)^{rc},AGGT) = 2.
\end{array}
\]
\end{example}

\subsubsection{Fixed $GC$-Content Constraint with Weight $w$}\label{fixed GC weight}
A DNA code $\mathscr{C}_{DNA}$ satisfies fixed $GC$-content constraint with weight $w$, if $GC$-weight of each DNA string in the DNA code is $w$, $i.e.$, $w_{GC}(\x)$ = $w$ for each $\x\in\mathscr{C}_{DNA}$.
\begin{description}[\ref{fixed GC weight}.1]
    \item[\ref{fixed GC weight}.1] $GC$-content constraint:  
An ($n,M,d_H$) DNA code satisfies $GC$-content constraint if $GC$-content of all DNA strings in the DNA code are same and equal to either $\left\lfloor n/2\right\rfloor$ or $\left\lceil n/2\right\rceil$.
\end{description}
\begin{example}{Example}
The $(4,3,3)$ DNA code $\mathscr{C}_{DNA}$ = $\{AACC, CCTT, AGGT\}$ satisfies the Fixed $GC$-Content constraint with weight $2$. 
Infect, the $(4,3,3)$ DNA code $\mathscr{C}_{DNA}$ also satisfies $GC$-content constraint, since the weight $2$ = $\lfloor 4/2\rfloor$.
\end{example}
From Equation (\ref{DNA temperature 1}) and Equation (\ref{DNA temperature 2}), one can observe that, for given length $n$, the melting temperature of any physical DNA depends on $GC$-weight of the DNA string. 
Therefore, to avoid non-specific hybridisation while sequencing are sequencing physical DNA, DNA strings are preferred those have similar $GC$-weight. 
Also, synthesis and sequencing DNA strings with very high $GC$-weight or very low $GC$-weight pose problems \cite{10.1093/nar/gkn425}.
Again, one can observe that the total number of DNA strings of length $n$ and $GC$-weight $w$ is $\binom{n}{w}2^n$. 
For given $n$, the total number of DNA strings of length $n$ is maximum if $w$ = $\lfloor n/2\rfloor$ or $w$ = $\lceil n/2\rceil$.
So, DNA codes with $GC$-content constraint are preferred.

\subsubsection{Tandem-Free Constraint with Repeat-Length $\ell$}\label{Repeat Free} 
A DNA string $\x$ of length $n$ is called tandem-free DNA string with repeat-length $\ell$ if, for each $m$ = $1,2,\ldots,\ell$, two consecutive sub-strings each of length $m$ are not same, $i.e.$, $\x(i,i+m-1)\neq\x(i+m,i+2m-1)$ for $i=1,2,\ldots,n-2m+1$.
Any DNA code satisfying tandem-free constraint with repeat-length $\ell$, if each DNA codeword of the DNA code is tandem-free DNA strings with repeat-length $\ell$. 
\begin{description}[\ref{Repeat Free}.1]
    \item[\ref{Repeat Free}.1] Homopolymers-free constraint:  
    Any DNA string is called Homopolymers-free, if the DNA string is tandem-free with repeat-length one, $i.e.$, any two nucleotides at consecutive positions are not same. 
    Any DNA code with Homopolymer-free constraint is a DNA code in which all DNA codewords are tandem-free with repeat-length one.
    \item[\ref{Repeat Free}.2] A DNA string is free from Homopolymers of run-length $t$ if there is not exist a DNA sub-string of length $t$ such that all nucleotides of the DNA sub-string are identical. 
\end{description}
\begin{example}{Example}
The DNA string $\x$ = $x_1x_2\ldots x_{12}$ = $TATCTATCAGAT$ is tandem-free with repeat-length $3$, because 
\begin{itemize}
    \item $x_i\neq x_{i+1}$ for $i=1,2,\ldots,11$,
    \item $\x(i,i+1)\neq\x(i+2,i+3)$ for $i=1,2,\ldots,9$, 
    \item $\x(i,i+2)\neq\x(i+3,i+5)$ for $i=1,2,\ldots,7$, but
    \item $\x(1,4)=\x(5,8)$.
\end{itemize}
Further, the $(4,3,3)$ DNA code $\mathscr{C}_{DNA}$ = $\{AACC, CCTT, AGGT\}$ satisfies the tandem-free constraint with repeat-length $3$. 
\end{example}
Some DNA strings can not be synthesised without potential errors such as insertion, deletion and substitution errors. 
For example, DNA strings with Homopolymers of run-length more than two cannot be synthesised without errors. 
Therefore, for large integer $\ell$ ($\geq1$), DNA codes that satisfies tandem-free constraint with repeat-length $\ell$ are preferred.
Again, DNA codes with the Homopolymer-free constraint are also preferred to avoid such potential errors.

\subsubsection{$\ell$-Free Secondary Structures Constraint} 
A DNA string $\x$ of length $n$ is called $\ell$-free secondary structures if there do not exist any two DNA sub-strings $\x(i,i+\ell-1)$ and $\x(j,j+\ell-1)$ such that $\x(i,i+\ell-1)\neq\x(j,j+\ell-1)^s$ and $\x(i,i+\ell-1)\neq\x(j,j+\ell-1)^{rs}$ for each $i\in\{1,2,\ldots,n-2\ell+1\}$, $j\in\{\ell+1,\ell+2,\ldots,n-\ell+1\}$ and $j-i>\ell$. 
An ($n,M,d_H$) DNA code satisfies the $\ell$-free secondary structures constraint if all DNA codewords of the DNA code is free from secondary structures of stem length $\ell$.
\begin{example}{Example}
All the codewords of the $(12,4,4)$ DNA code 
\begin{equation*}
\begin{split}
    \mathscr{C}_{DNA} & = \left\{ACACACACACAC,ACTCTCACTCTC,\right. \\
    &\hspace{4cm}\left. CATCACTCACTC,TCACTCTCACTC\right\}
\end{split}
\end{equation*} 
are $3$-free secondary structures, and therefore, the DNA code satisfy $3$-free secondary structures constraint. 
\end{example}

DNA strings with secondary structures are needed to unfold while reading in wet lab since the DNA is quit slow to react against chemical reagents.
Thus, some additional energy and resources are needed to read the DNA, and it increase the cost. 
Therefore, DNA strings, and thus, DNA codes are preferred that avoids secondary structures. 

\subsubsection{Uncorrelated-Correlated Constraint} 
An ($n,M,d_H$) DNA code $\mathscr{C}_{DNA}$ is called mutually uncorrelated if 
\begin{itemize}
    \item each DNA codeword in $\mathscr{C}_{DNA}$ is self-uncorrelated, $i.e.$, $\x\circ\x$ = $(1\ \textbf{0}_{1,n-1})$ for all $\x\in\mathscr{C}_{DNA}$, and
    \item any two DNA codewords in $\mathscr{C}_{DNA}$ are mutually uncorrelated, $i.e.$, $\x\circ\y$ = $\textbf{0}_{1,n}$ for all $\x,\y\in\mathscr{C}_{DNA}$ and $\x\neq\y$. 
\end{itemize}
\begin{example}{Example}
For the ($5,3,3$) DNA code $\mathscr{C}_{DNA}$ = $\{ACAGT,AGCAT,ACGCG\}$, it can be observed that 
\begin{itemize}
    \item all the correlations $ACAGT\circ ACAGT$, $AGCAT\circ AGCAT$, $ACGCG\circ ACGCG$ are $(1\ \textbf{0}_{1,4})$, and
    \item correlations $ACAGT\circ AGCAT$, $AGCAT\circ ACGCG$, $ACAGT\circ ACGCG$ are $\textbf{0}_{1,5}$.
\end{itemize}  
Thus, the DNA code is mutually uncorrelated. 
\end{example}

\subsubsection{Thermodynamic Constraint} 
A DNA code $\mathscr{C}_{DNA}$ satisfy the thermodynamic constraint if, for given real $\delta\geq0$, 
\begin{equation*}
    |\Delta G_\x-\Delta G_\y|\leq\delta\mbox{ for each }\x,\y\in\mathscr{C}_{DNA},
\end{equation*}
where $|a|$ is the absolute value of the real number $a$, and the terms $\Delta G_\x$ and $\Delta G_\y$ represent the minimum free energy of the DNA strings $\x$ and $\y$, respectively. 
The details are given in \cite{DBLP:journals/corr/LimbachiyaRG16,doi:10.1089/10665270152530818}.

\section{DNA Codes from Bijective Maps and the Hamming Distance}\label{my sec:4}
For any positive integers $q$ and $t$, consider two sets $\mathcal{A}_q$ and $\mathscr{D}\subseteq\Sigma_{DNA}^t$ such that size of both sets are the same and equal to $q$.
Now, consider a bijective map 
\begin{equation}
\varphi:\mathcal{A}_q\rightarrow\mathscr{D}.
\label{Bijective Map}
\end{equation}
For any $\x$ = $(x_1\ x_2\ \ldots\ x_n)\in\mathcal{A}_q^n$, consider $\varphi(\x) = \varphi(x_1)\varphi(x_2)\ldots \varphi(x_n)\in\Sigma_{DNA}^{nt}$. 
For any $\mathscr{C}\subset\mathcal{A}_q^n$, $\varphi(\mathscr{C})=\{\varphi(\x):\mbox{ for each }\x\in\mathscr{C}\}$.
Now, for any $x$ and $y$ in $\mathcal{A}_q$, we define a map 
\begin{equation}
    \begin{split}
        &d:\mathcal{A}_q\times\mathcal{A}_q\rightarrow\mathbb{R} \\ &d(x,y) = H(\varphi(x),\varphi(y)).
    \end{split}
    \label{Bijective hamming distance}
\end{equation}
\begin{lemma}
The map $d:\mathcal{A}_q\times\mathcal{A}_q\rightarrow\mathbb{R}$ such that $d(x,y) = H(\varphi(x),\varphi(y))$, as given in (\ref{Bijective hamming distance}), is a distance.
\end{lemma}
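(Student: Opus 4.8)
The plan is to verify that $d$ satisfies the four defining axioms of a metric—non-negativity, the identity of indiscernibles, symmetry, and the triangle inequality—by transporting each property from the Hamming distance $H$ on $\Sigma_{DNA}^t$, which is already a genuine metric. Since $d(x,y)$ is literally defined as $H(\varphi(x),\varphi(y))$, the strategy throughout is to replace $x,y,z\in\mathcal{A}_q$ by their images $\varphi(x),\varphi(y),\varphi(z)\in\Sigma_{DNA}^t$ and read off the desired inequality or equality from the corresponding fact about $H$.

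First I would dispatch the three easy axioms. Non-negativity is immediate, since $H(\varphi(x),\varphi(y))\geq0$ for all inputs, so $d(x,y)\geq0$. Symmetry follows from the symmetry of $H$, giving $d(x,y)=H(\varphi(x),\varphi(y))=H(\varphi(y),\varphi(x))=d(y,x)$. For the triangle inequality, for any $x,y,z\in\mathcal{A}_q$ I would apply the triangle inequality already enjoyed by $H$ to the three DNA strings $\varphi(x),\varphi(y),\varphi(z)$, obtaining
\[
H(\varphi(x),\varphi(z))\leq H(\varphi(x),\varphi(y))+H(\varphi(y),\varphi(z)),
\]
which is exactly $d(x,z)\leq d(x,y)+d(y,z)$.

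The remaining axiom is the identity of indiscernibles, namely $d(x,y)=0$ if and only if $x=y$. The forward-trivial direction is that $x=y$ gives $\varphi(x)=\varphi(y)$, hence $d(x,y)=H(\varphi(x),\varphi(x))=0$. The hard part—though still short—will be the converse, since it is the only step that uses the bijectivity hypothesis on $\varphi$ rather than merely the metric properties of $H$. Here I would argue that $d(x,y)=0$ means $H(\varphi(x),\varphi(y))=0$, and because $H$ is a metric this forces $\varphi(x)=\varphi(y)$; then injectivity of $\varphi$ (which holds as $\varphi$ is bijective) yields $x=y$. This isolates the single place where the standing assumption on $\varphi$ is genuinely required, and it confirms that $d$ is well-defined as a distance precisely because $\varphi$ is a bijection onto $\mathscr{D}$.

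In summary, every axiom except one is a direct pull-back of the corresponding property of the Hamming metric, and the lone nontrivial point is ensuring that distinct alphabet symbols cannot collapse to Hamming distance zero, which is guaranteed by injectivity. No case analysis on the structure of the DNA strings $\varphi(x)$ is needed, so I anticipate no real obstacle beyond correctly citing that $H$ is itself a metric on $\Sigma_{DNA}^t$.
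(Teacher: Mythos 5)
Your proposal is correct and follows essentially the same route as the paper's proof: verifying non-negativity, symmetry, and the triangle inequality by pulling back the corresponding properties of the Hamming metric through $\varphi$, and invoking the injectivity of the bijection $\varphi$ exactly and only for the identity of indiscernibles. Your explicit remark isolating where bijectivity is genuinely used is a small clarity improvement over the paper's presentation, but the argument itself is the same.
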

\begin{proof}
From the bijective property of the map $\varphi$ and the distance property of the Hamming distance, one can observe the following. 
\begin{description}[Identity of Indiscernibles:]
    \item[Non Negative Property:] For any $\varphi(x)$ and $\varphi(y)$ in $\mathscr{D}$, $H(\varphi(x),\varphi(y))\geq0$. Therefore, $d(x,y)\geq0$ for any $x,y\in\mathcal{A}_q$.
    \item[Identity of Indiscernibles:] For any $\varphi(x)$ and $\varphi(y)$ in $\mathscr{D}$,
    \begin{equation*}
        \begin{split}
            & H(\varphi(x),\varphi(y))=0 \\
            \Leftrightarrow\hspace{0.1cm} & \varphi(x)=\varphi(y).
        \end{split}
    \end{equation*}
    Thus, from the definitions of map $\varphi$ and the map $d$,  
    \begin{equation*}
        \begin{split}
            & d(x,y)=0 \\
            \Leftrightarrow\hspace{0.1cm} & x=y.
        \end{split}
    \end{equation*}
    \item[Symmetric Property:] For any $\varphi(x)$ and $\varphi(y)$ in $\mathscr{D}$, 
    \[
    H(\varphi(x),\varphi(y)) = H(\varphi(y),\varphi(x)).
    \]
    And therefore, for any $x,y\in\mathcal{A}_q$, \[d(x,y) = d(y,x).\]
    \item[Triangular Property:] For any $\varphi(x)$, $\varphi(y)$ and $\varphi(z)$ in $\mathscr{D}$, 
    \[
    H(\varphi(x),\varphi(z))\leq H(\varphi(x),\varphi(y))+H(\varphi(y),\varphi(z)).
    \]
    This implies, for any $x,y,z\in\mathcal{A}_q$, \[d(x,z)\leq d(x,y)+d(y,z).\] 
\end{description}
Hence, the map given in (\ref{Bijective hamming distance}) is a distance.
\end{proof}
Now, for any $\x$ = $(x_1\ x_2\ \ldots\ x_n)\in\mathcal{A}_q^n$ and $\y$ = $(y_1\ y_2\ \ldots\ y_n)\in\mathcal{A}_q^n$, we define 
\[
d(\x,\y)=\sum_{i=1}^nd(x_i,y_i).
\] 
Now, for any $\x,\y\in\mathcal{A}_q^n$, the distance 
\begin{equation}
    \begin{split}
        d(\x,\y)=&\sum_{i=1}^nd(x_i,y_i) \\
        =&\sum_{i=1}^nH(\varphi(x_i),\varphi(y_i)) \\
        =&H(\varphi(\x),\varphi(\y)).
    \end{split}
    \label{d(x,y)=H(x,y) equation}
\end{equation}

For any code $\mathscr{C}$ over $\mathcal{A}_q$, the minimum distance 
\begin{equation}
d=\min\{d(\x,\y):\x,\y\in\mathscr{C}\mbox{ such that }\x\neq\y\}.
    \label{min d set}
\end{equation}
Now, a relation between the distance for any binary code and the Hamming distance for respective DNA code is given in Lemma \ref{$d$ = $d_H$ binary lemma}.
\begin{lemma}
If the minimum distance is $d$ for any code $\mathscr{C}$ over $\mathcal{A}_q$, and the minimum Hamming distance is $d_H$ for the DNA code $\varphi(\mathscr{C})$ over $\mathscr{D}$, then $d$ = $d_H$.  
\label{$d$ = $d_H$ binary lemma}
\end{lemma}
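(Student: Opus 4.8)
The plan is to reduce the statement to the pointwise-to-word identity already recorded in equation (\ref{d(x,y)=H(x,y) equation}), namely $d(\x,\y)=H(\varphi(\x),\varphi(\y))$ for all $\x,\y\in\mathcal{A}_q^n$, and then argue that the two minimizations defining $d$ (in (\ref{min d set})) and $d_H$ (in Definition \ref{DNA code definition}) range over exactly the same set of real numbers. Since the minimum of a finite set is determined by the set itself, equality of the sets forces $d=d_H$.

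First I would check that the componentwise extension $\x\mapsto\varphi(\x)$ from $\mathcal{A}_q^n$ to $\Sigma_{DNA}^{nt}$ is injective. Because $\varphi$ is a bijection on the alphabet, $\varphi(\x)=\varphi(\y)$ gives $\varphi(x_i)=\varphi(y_i)$ and hence $x_i=y_i$ for every $i$, so $\x=\y$. Consequently, for codewords $\x,\y\in\mathscr{C}$ we have the equivalence $\x\neq\y\Leftrightarrow\varphi(\x)\neq\varphi(\y)$, and the assignment $\x\mapsto\varphi(\x)$ is a bijection from $\mathscr{C}$ onto $\varphi(\mathscr{C})$ that carries ordered pairs of distinct codewords to ordered pairs of distinct DNA codewords.

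Next I would combine this bijection of pair-sets with (\ref{d(x,y)=H(x,y) equation}) to obtain the set identity
\begin{equation*}
\{\,d(\x,\y):\x,\y\in\mathscr{C},\ \x\neq\y\,\}=\{\,H(\u,\v):\u,\v\in\varphi(\mathscr{C}),\ \u\neq\v\,\}.
\end{equation*}
Indeed, every value on the left equals $H(\varphi(\x),\varphi(\y))$ for a distinct pair $\varphi(\x),\varphi(\y)$ on the right, and every value on the right arises this way since $\varphi$ is onto $\varphi(\mathscr{C})$. Taking minima of both sides and using (\ref{min d set}) together with the definition of the minimum Hamming distance $d_H$ of $\varphi(\mathscr{C})$ yields $d=d_H$.

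I do not expect a genuine obstacle here; the only point requiring care is the injectivity of the extended map, which guarantees that the $\x\neq\y$ condition translates faithfully into $\varphi(\x)\neq\varphi(\y)$ so that no pair is spuriously added or dropped when passing between the two index sets. Once that equivalence is in hand, the rest is the elementary fact that equal finite sets have equal minima.
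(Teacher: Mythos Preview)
Your proposal is correct and follows essentially the same approach as the paper's own proof: both invoke the injectivity/bijectivity of the extended map $\x\mapsto\varphi(\x)$ together with equation (\ref{d(x,y)=H(x,y) equation}) to identify the two sets of pairwise distances, and then take minima. Your write-up is slightly more careful in spelling out why injectivity is needed to transfer the condition $\x\neq\y$ to $\varphi(\x)\neq\varphi(\y)$, but the argument is the same.
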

\begin{proof}
From the bijection property of the map $\varphi:\mathcal{A}_q\rightarrow\mathscr{D}$, the map $\varphi:\mathcal{A}_q^n\rightarrow\mathscr{D}^n$ is also bijective for any integer $n\geq1$. 
Now, from Equation (\ref{d(x,y)=H(x,y) equation}),
\begin{equation*}
    \begin{split}
        & d(\x,\y) = H(\x,\y)\, \mbox{ for any }\x,\y\in\mathcal{A}_q^n\\
        \Rightarrow & \{d(\x,\y):\x\neq\y\mbox{ and }\x,\y\in\mathscr{C}\} \\ 
        & \hspace{2cm} = \{H(\x,\y):\varphi(\x)\neq\varphi(\y)\mbox{ and }\varphi(\x),\varphi(\y)\in\varphi(\mathscr{C})\} \\ 
        \Rightarrow & \min\{d(\x,\y):\x\neq\y\mbox{ and }\x,\y\in\mathscr{C}\} \\ 
        & \hspace{2cm} = \min\{H(\x,\y):\varphi(\x)\neq\varphi(\y)\mbox{ and }\varphi(\x),\varphi(\y)\in\varphi(\mathscr{C})\} \\
        \Rightarrow & d = d_H.
    \end{split}
\end{equation*}
Hence, it follows the proof.
\end{proof}
Thus, one can obtain an isometry as given in Lemma \ref{bijective isometry preposition} as follows. 
\begin{lemma}
    The map $\varphi:(\mathcal{A}_q^n,d) \rightarrow (\mathscr{D}^n, d_H)$ is an isometry. 
\label{bijective isometry preposition}
\end{lemma}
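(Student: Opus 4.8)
The plan is to unpack the definition of isometry and verify its two ingredients, both of which are essentially already in hand from the preceding lemmas. Recall that a map between metric spaces is an isometry precisely when it is a distance-preserving bijection, so I would first confirm that both $(\mathcal{A}_q^n,d)$ and $(\mathscr{D}^n,d_H)$ are genuine metric spaces, then establish that $\varphi$ is a bijection, and finally check that it preserves the relevant distances.

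First I would note that $d$ extends to a metric on $\mathcal{A}_q^n$ through the coordinate-wise sum $d(\x,\y)=\sum_{i=1}^n d(x_i,y_i)$. Since the first lemma shows that $d$ is a distance on $\mathcal{A}_q$, each of the four metric axioms (non-negativity, identity of indiscernibles, symmetry, and the triangle inequality) passes to the sum term-by-term; likewise $(\mathscr{D}^n,d_H)$ is a metric space because the Hamming distance $H$ on $\Sigma_{DNA}^{nt}$ is a metric. Here $\mathscr{D}^n$ is understood as the set of $n$-fold concatenations $\varphi(x_1)\cdots\varphi(x_n)\in\Sigma_{DNA}^{nt}$, and this identification should be stated explicitly.

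Second, for bijectivity I would simply reuse the argument already invoked in the proof of Lemma \ref{$d$ = $d_H$ binary lemma}: because $\varphi:\mathcal{A}_q\rightarrow\mathscr{D}$ is a bijection on the alphabet, its coordinate-wise extension $\varphi:\mathcal{A}_q^n\rightarrow\mathscr{D}^n$ sending $(x_1\ x_2\ \ldots\ x_n)$ to $\varphi(x_1)\varphi(x_2)\ldots\varphi(x_n)$ is a bijection as well.

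Finally, the distance-preservation requirement $d(\x,\y)=H(\varphi(\x),\varphi(\y))$ is exactly Equation (\ref{d(x,y)=H(x,y) equation}), coming from the chain $d(\x,\y)=\sum_i d(x_i,y_i)=\sum_i H(\varphi(x_i),\varphi(y_i))=H(\varphi(\x),\varphi(\y))$. There is therefore no real obstacle: all the content already lives in (\ref{d(x,y)=H(x,y) equation}), and the only point deserving explicit care is the last equality, which rests on the observation that the Hamming distance between two length-$nt$ concatenated DNA strings decomposes additively over the $n$ disjoint blocks of length $t$. Combining the bijection with this identity yields that $\varphi$ is a distance-preserving bijection, hence an isometry.
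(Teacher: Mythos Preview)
Your proposal is correct and follows essentially the same approach as the paper: the core argument is the coordinate-wise chain $d(\x,\y)=\sum_i d(x_i,y_i)=\sum_i H(\varphi(x_i),\varphi(y_i))=H(\varphi(\x),\varphi(\y))$, which is exactly what the paper writes out. Your version is more thorough in that you also verify the metric-space structure and the bijectivity of the extended map, whereas the paper's proof treats ``isometry'' as synonymous with ``distance-preserving'' and records only the distance identity.
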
 
\begin{proof}
One can find that $d(x,y)$ = $H(\varphi(x),\varphi(y))$ for any $x,y\in\mathcal{A}_q$. 
Thus, for any $\x$ = $(x_1\ x_2\ \ldots\ x_n)$ and $\y$ = $(y_1\ y_2\ \ldots\ y_n)$ in $\mathcal{A}_q^n$, 
\begin{equation*}
    \begin{split}
        d(\x,\y) = & \sum_{i=1}^nd(x_i,y_i) \\
        = & \sum_{i=1}^nH(\varphi(x_i),\varphi(y_i)) \\
        = &\ H(\varphi(\x),\varphi(\y)). 
    \end{split}
\end{equation*}
Thus, the result follows.
\end{proof}
From the distance isometry and the map property, one can get the parameter of DNA code as given in Theorem \ref{parameter map DNA code}.
\begin{theorem}
There exists $(t\cdot n,M,d_H)$ DNA code $\varphi(\mathscr{C})$ for an $(n,M,d)$ code $\mathscr{C}$ over $\mathcal{A}_q$, where $d$ = $d_H$. 
\label{parameter map DNA code}
\end{theorem}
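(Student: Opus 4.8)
The plan is to read the statement as a bundle of three assertions about $\varphi(\mathscr{C})$ — that its block length is $t\cdot n$, that its size is $M$, and that its minimum Hamming distance $d_H$ coincides with the minimum distance $d$ of $\mathscr{C}$ — and to verify each of these in turn, drawing on the lemmas already proved. Since all three facts have essentially been assembled in the preceding development, the proof amounts to packaging them together rather than proving anything new.

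First I would settle the length. By the definition of $\varphi$ on strings, for any $\x = (x_1\ x_2\ \ldots\ x_n) \in \mathscr{C} \subseteq \mathcal{A}_q^n$ we have $\varphi(\x) = \varphi(x_1)\varphi(x_2)\ldots\varphi(x_n)$, and each factor $\varphi(x_i) \in \mathscr{D} \subseteq \Sigma_{DNA}^t$ is a DNA string of length $t$. Concatenating the $n$ factors gives a DNA string of length $t\cdot n$, so $\varphi(\mathscr{C}) \subseteq \Sigma_{DNA}^{t\cdot n}$, which fixes the length. For the size, I would invoke the observation made in the proof of Lemma \ref{$d$ = $d_H$ binary lemma}, namely that the bijection $\varphi:\mathcal{A}_q\rightarrow\mathscr{D}$ lifts to a bijection $\varphi:\mathcal{A}_q^n\rightarrow\mathscr{D}^n$. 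In particular $\varphi$ is injective, so its restriction to $\mathscr{C}$ is injective and $|\varphi(\mathscr{C})| = |\mathscr{C}| = M$.

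Finally the minimum distance is immediate: Lemma \ref{$d$ = $d_H$ binary lemma} already asserts that the minimum distance $d$ of $\mathscr{C}$ equals the minimum Hamming distance $d_H$ of $\varphi(\mathscr{C})$, and equivalently the isometry of Lemma \ref{bijective isometry preposition}, built on Equation (\ref{d(x,y)=H(x,y) equation}), transports every pairwise distance — and hence the minimum over all distinct pairs — from $(\mathcal{A}_q^n, d)$ to $(\mathscr{D}^n, d_H)$ without change. Combining the three parameters yields the $(t\cdot n, M, d_H)$ DNA code $\varphi(\mathscr{C})$ with $d = d_H$, as claimed.

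There is no genuine analytic obstacle here; the only point demanding care is bookkeeping about which metric is meant by $d$. The equality $d = d_H$ holds on the nose precisely because the distance $d$ defining the $(n,M,d)$ parameters of $\mathscr{C}$ is the $\varphi$-induced distance of Equation (\ref{Bijective hamming distance}) and Equation (\ref{min d set}), not the ordinary Hamming distance over the alphabet $\mathcal{A}_q$. I would flag this explicitly so that the step ``$d = d_H$'' is not misread as a coincidence between two unrelated notions of distance, but rather as the content of the isometry established earlier.
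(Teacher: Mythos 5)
Your proposal is correct and follows essentially the same route as the paper: length from concatenation of the length-$t$ images, size from the bijectivity of $\varphi$, and the distance equality from the isometry of Lemma \ref{bijective isometry preposition} (equivalently Lemma \ref{$d$ = $d_H$ binary lemma}). Your closing caveat that $d$ in the $(n,M,d)$ parameters is the $\varphi$-induced distance of Equation (\ref{Bijective hamming distance}) rather than the Hamming distance on $\mathcal{A}_q^n$ is a worthwhile clarification the paper leaves implicit, but it does not change the argument.
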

\begin{proof}
Consider an $(n,M,d)$ code $\mathscr{C}$ over $\mathcal{A}_q$. 
The map $\varphi:\mathcal{A}_q\rightarrow\mathscr{D}$ maps an element in $\mathcal{A}_q$ to a DNA string of length $t$, where $\mathscr{D}\subseteq\Sigma_{DNA}^t$. 
Therefore, the DNA codeword length of $\varphi(\mathscr{C})$ is $t\cdot n$. 
From the bijection property of the map $\varphi:\mathcal{A}_q\rightarrow\mathscr{D}$, the size of the DNA code $\varphi(\mathscr{C})$ is the same as the size of the code $\mathscr{C}$, $i.e.$, $M$.
From Lemma \ref{bijective isometry preposition}, the result on distance holds.
\end{proof}
In Lemma \ref{r_rc_exist}, Lemma \ref{closed reverse}, Lemma \ref{closed complement} and Lemma \ref{closed RC}, properties on DNA strings with reverse, complement and reverse-complement DNA strings are given.
\begin{lemma}
For any $\z \in \varphi(\mathscr{C})$, if $\z^r\in\varphi(\mathscr{C})$ and $\z^c\in\varphi(\mathscr{C})$ then $\z^{rc} \in \varphi(\mathscr{C})$ for each $z\in\varphi(\mathscr{C})$.
\label{r_rc_exist}
\end{lemma}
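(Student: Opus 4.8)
The plan is to reduce the reverse-complement operation to a composition of the reverse and complement operations, and then invoke the two closure hypotheses in sequence. The starting observation is the elementary identity $\z^{rc} = (\z^r)^c = (\z^c)^r$, which I would verify directly from the definitions of the reverse, complement, and reverse-complement strings. Writing $\z = z_1 z_2 \ldots z_n$, the reverse is $\z^r = z_n z_{n-1} \ldots z_1$, and complementing each coordinate of this string yields $z_n^c z_{n-1}^c \ldots z_1^c$, which is exactly $\z^{rc}$. The same string arises by first complementing and then reversing, confirming both forms of the identity.

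With this identity in hand, the argument is a two-step application of the hypotheses, read as the statement that $\varphi(\mathscr{C})$ is closed under reverse and closed under complement. First, applying the reverse-closure hypothesis to $\z$ gives $\z^r\in\varphi(\mathscr{C})$. Crucially, the complement-closure hypothesis is then applied not to $\z$ itself but to the element $\z^r$, which the previous step has placed inside the code; this yields $(\z^r)^c\in\varphi(\mathscr{C})$. By the identity, $(\z^r)^c=\z^{rc}$, and therefore $\z^{rc}\in\varphi(\mathscr{C})$, as desired.

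The only point requiring care --- and the step I would single out as the crux --- is the order in which the closures are applied: one must complement the already-reversed word $\z^r$ rather than attempt to combine the raw facts $\z^r\in\varphi(\mathscr{C})$ and $\z^c\in\varphi(\mathscr{C})$ for a single $\z$ directly. Knowing merely that $\z^r$ and $\z^c$ lie in the code for one fixed $\z$ does not by itself produce $\z^{rc}$; what makes the proof go through is that the hypotheses hold for \emph{every} codeword, so in particular the complement-closure may legitimately be invoked on $\z^r$. No metric or isometry properties from the earlier lemmas are needed here --- the result is purely a set-theoretic closure statement, and its entire content lies in the composition identity together with this sequencing of the two closures.
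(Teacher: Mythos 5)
Your proposal is correct and takes essentially the same route as the paper: both arguments rest on the identity $\z^{rc}=(\z^r)^c$ and then apply reverse-closure to $\z$ followed by complement-closure to the already-reversed word $\z^r$, the latter step being legitimate because the hypotheses hold for every codeword. Your explicit remark that the complement closure must be invoked on $\z^r$ rather than on $\z$ itself simply makes visible a step the paper's chain of implications leaves implicit; the substance of the two proofs is identical.
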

\begin{proof}
For any string $\z$ = $(z_1\ z_2\ \ldots\ z_n)$ in $\varphi(\mathscr{C})$, consider $\z^r$ = $(z_n\ z_{n-1}\ \ldots\ z_1)$ and $\z^c$ = $(z_1^c\ z_2^c\ \ldots\ z_n^c)$ in $\varphi(\mathscr{C})$.
Now, 
\begin{equation*}
    \begin{split}
        & \z = (z_1\ z_2\ \ldots\ z_n) \in \varphi(\mathscr{C}) \\ 
        \implies & \z^r = (z_n\ z_{n-1}\ \ldots\ z_1) \in \varphi(\mathscr{C}) \\
        \implies & (\z^r)^c = (z_n^c\ z_{n-1}^c\ \ldots\ z_1^c) \in \varphi(\mathscr{C}) \\
        \implies & \z^{rc}  \in \varphi(\mathscr{C})
    \end{split}
\end{equation*}
Hence, it follows the result.
\end{proof}

\begin{lemma}
    For any $\x\in\mathscr{C}$, DNA string $\varphi^{-1}(\varphi(\x)^r)\in\mathscr{C}$ if and only if $\z^r\in \varphi(\mathscr{C})$ for each $\z\in\varphi(\mathscr{C})$.
\label{closed reverse}
\end{lemma}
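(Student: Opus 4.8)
The plan is to reduce the equivalence entirely to the bijectivity of the extended map $\varphi:\mathcal{A}_q^n\rightarrow\mathscr{D}^n$, which was already recorded in the proof of Lemma~\ref{$d$ = $d_H$ binary lemma}. The one bookkeeping fact I would state first is the defining equivalence $\z\in\varphi(\mathscr{C})\Leftrightarrow\varphi^{-1}(\z)\in\mathscr{C}$, valid for any $\z\in\mathscr{D}^n$, together with the observation that every element of $\varphi(\mathscr{C})$ has the form $\varphi(\x)$ for a unique $\x\in\mathscr{C}$. With these in hand, each of the two implications becomes a matter of chasing the reversal $(\cdot)^r$ through $\varphi$ and $\varphi^{-1}$, with no metric or distance considerations needed.

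For the forward direction I would assume $\varphi^{-1}(\varphi(\x)^r)\in\mathscr{C}$ for every $\x\in\mathscr{C}$ and fix an arbitrary $\z\in\varphi(\mathscr{C})$. Writing $\z=\varphi(\x)$ with $\x\in\mathscr{C}$, the hypothesis applied to this $\x$ gives $\varphi^{-1}(\z^r)\in\mathscr{C}$; applying $\varphi$ to both sides and using $\varphi\circ\varphi^{-1}=\mathrm{id}$ then yields $\z^r=\varphi\big(\varphi^{-1}(\z^r)\big)\in\varphi(\mathscr{C})$, which is exactly the desired conclusion.

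For the converse I would assume $\z^r\in\varphi(\mathscr{C})$ for every $\z\in\varphi(\mathscr{C})$ and fix an arbitrary $\x\in\mathscr{C}$. Setting $\z=\varphi(\x)\in\varphi(\mathscr{C})$, the hypothesis gives $\z^r\in\varphi(\mathscr{C})$, so $\z^r=\varphi(\x')$ for some $\x'\in\mathscr{C}$; applying the bijection $\varphi^{-1}$ then gives $\varphi^{-1}(\varphi(\x)^r)=\varphi^{-1}(\z^r)=\x'\in\mathscr{C}$, as required.

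The only point that needs genuine care, and the one place the argument could silently break, is the well-definedness of $\varphi^{-1}(\varphi(\x)^r)$: the symbol $\varphi^{-1}$ is defined only on $\mathscr{D}^n$, yet the block-reversed string $\varphi(\x)^r$ need not a priori lie in $\mathscr{D}^n$ unless $\mathscr{D}$ is closed under reversal of its length-$t$ blocks. I would therefore make explicit that in the forward direction the hypothesis $\varphi^{-1}(\varphi(\x)^r)\in\mathscr{C}$ already presupposes $\varphi(\x)^r\in\mathscr{D}^n$, so $\varphi$ may legitimately be applied to it, while in the converse direction the inclusion $\z^r\in\varphi(\mathscr{C})\subseteq\mathscr{D}^n$ supplies precisely this membership, so $\varphi^{-1}$ is applicable. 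Once this well-definedness is flagged, the rest is immediate from bijectivity.
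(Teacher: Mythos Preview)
Your proof is correct and follows essentially the same approach as the paper: both reduce the equivalence to the bijectivity of $\varphi$, writing $\z=\varphi(\x)$ and using $\varphi^{-1}(\varphi(\x)^r)\in\mathscr{C}\Leftrightarrow\varphi(\x)^r\in\varphi(\mathscr{C})\Leftrightarrow\z^r\in\varphi(\mathscr{C})$. Your version is simply more explicit in separating the two implications, and your remark on the well-definedness of $\varphi^{-1}(\varphi(\x)^r)$ when $\mathscr{D}$ need not be closed under block reversal is a genuine subtlety that the paper's proof passes over in silence.
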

\begin{proof}
For any $\x$ = $(x_1\ x_2\ \ldots\ x_n)\in\mathscr{C}$, consider $\z$ = $\varphi(\x)$ = $\varphi(x_1)\varphi(x_2)\ldots\varphi(x_n)$, and therefore, $\z^r$ = $\varphi(\x)^r$ = $\varphi(x_n)^r\varphi(x_{n-1})^r\ldots\varphi(x_1)^r$. 
Now, 
\begin{equation*}
    \begin{split}
        &\varphi^{-1}(\varphi(\x)^r)\in\mathscr{C} \\
        \Leftrightarrow \hspace{3mm} & \varphi(\x)^r\in\varphi(\mathscr{C}) \\
        \Leftrightarrow \hspace{3mm} & \z^r\in\varphi(\mathscr{C})
    \end{split}
\end{equation*}
Hence, it follows the result.
\end{proof}

\begin{lemma}
For any $\x \in \mathscr{C}$, DNA string $\varphi^{-1}(\varphi(\x)^c)\in\mathscr{C}$ if and only if $\z^c\in\varphi(\mathscr{C})$ for each $z\in\varphi(\mathscr{C})$.
\label{closed complement}
\end{lemma}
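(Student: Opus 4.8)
The plan is to mirror the argument used for Lemma \ref{closed reverse}, simply replacing the reverse operation $(\cdot)^r$ by the Watson--Crick complement $(\cdot)^c$. The crucial structural feature that makes both proofs go through is that complementation, like reversal, is a well-defined operation $\mathscr{D}^n \to \Sigma_{DNA}^{n}$ that commutes with the way $\varphi$ builds DNA strings symbol-by-symbol. Concretely, for any $\x = (x_1\ x_2\ \ldots\ x_n)\in\mathscr{C}$ I would set $\z = \varphi(\x) = \varphi(x_1)\varphi(x_2)\ldots\varphi(x_n)$, and observe that the complement acts coordinate-wise, so that $\z^c = \varphi(\x)^c = \varphi(x_1)^c\varphi(x_2)^c\ldots\varphi(x_n)^c$.

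The heart of the argument is a short chain of biconditionals driven entirely by the bijectivity of $\varphi$. As already noted in the proof of Lemma \ref{$d$ = $d_H$ binary lemma}, the bijection $\varphi:\mathcal{A}_q\rightarrow\mathscr{D}$ induces a bijection $\varphi:\mathcal{A}_q^n\rightarrow\mathscr{D}^n$, and hence $\varphi$ maps $\mathscr{C}$ bijectively onto $\varphi(\mathscr{C})$. Consequently an element lies in $\mathscr{C}$ if and only if its image under $\varphi$ lies in $\varphi(\mathscr{C})$, and applying $\varphi$ to the string $\varphi^{-1}(\varphi(\x)^c)$ I get
\begin{equation*}
    \begin{split}
        & \varphi^{-1}(\varphi(\x)^c)\in\mathscr{C} \\
        \Leftrightarrow \hspace{3mm} & \varphi(\x)^c\in\varphi(\mathscr{C}) \\
        \Leftrightarrow \hspace{3mm} & \z^c\in\varphi(\mathscr{C}).
    \end{split}
\end{equation*}
This establishes the pointwise equivalence for an arbitrary codeword $\x$.

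Finally, I would handle the universal quantifiers: since $\varphi$ restricts to a bijection between $\mathscr{C}$ and $\varphi(\mathscr{C})$, ranging over all $\x\in\mathscr{C}$ is the same as ranging over all $\z=\varphi(\x)\in\varphi(\mathscr{C})$, so the statement ``$\varphi^{-1}(\varphi(\x)^c)\in\mathscr{C}$ for each $\x\in\mathscr{C}$'' is equivalent to ``$\z^c\in\varphi(\mathscr{C})$ for each $\z\in\varphi(\mathscr{C})$.'' I do not expect any genuine obstacle here; the only point that requires a moment's care is confirming that complementation is defined symbol-wise so that $\varphi(\x)^c$ stays inside $\mathscr{D}^n$ whenever the relevant closure holds, exactly as reversal was treated in Lemma \ref{closed reverse}. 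The proof is therefore routine once the coordinate-wise nature of $(\cdot)^c$ and the bijectivity of $\varphi$ are in place.
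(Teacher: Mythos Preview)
Your proposal is correct and follows essentially the same argument as the paper: set $\z=\varphi(\x)$, note that complementation acts coordinate-wise, and use bijectivity of $\varphi$ to obtain the biconditional chain $\varphi^{-1}(\varphi(\x)^c)\in\mathscr{C}\Leftrightarrow\varphi(\x)^c\in\varphi(\mathscr{C})\Leftrightarrow\z^c\in\varphi(\mathscr{C})$. Your explicit handling of the universal quantifiers is slightly more careful than the paper's version, but the approach is identical.
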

\begin{proof}
For any $\x$ = $(x_1\ x_2\ \ldots\ x_n)\in\mathscr{C}$, consider $\z$ = $\varphi(\x)$ = $\varphi(x_1)\varphi(x_2)\ldots\varphi(x_n)$, and therefore, $\z^c$ = $\varphi(\x)^c$ = $\varphi(x_1)^c\varphi(x_2)^c\ldots\varphi(x_n)^c$. 
Now, 
\begin{equation*}
    \begin{split}
        &\varphi^{-1}(\varphi(\x)^c)\in\mathscr{C} \\
        \Leftrightarrow \hspace{3mm} & \varphi(\x)^c\in\varphi(\mathscr{C}) \\
        \Leftrightarrow \hspace{3mm} & \z^c\in\varphi(\mathscr{C})
    \end{split}
\end{equation*}
Hence, it follows the result.
\end{proof}

\begin{lemma}
For any $\x \in \mathscr{C}$, if $\varphi^{-1}(\varphi(\x)^c)\in\mathscr{C}$ and $\varphi^{-1}(\varphi(\x)^r)\in\mathscr{C}$ then $\z^{rc}\in\varphi(\mathscr{C})$ for each $z\in\varphi(\mathscr{C})$.
\label{closed RC}
\end{lemma}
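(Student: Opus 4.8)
The plan is to chain the three preceding lemmas, since the hypotheses of this statement are precisely the conditions that Lemma~\ref{closed complement} and Lemma~\ref{closed reverse} translate into closure statements for $\varphi(\mathscr{C})$, and those closure statements are exactly the premises of Lemma~\ref{r_rc_exist}. In other words, no new machinery is needed: the result is obtained by composing equivalences already available.

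First I would invoke Lemma~\ref{closed complement}. The hypothesis supplies $\varphi^{-1}(\varphi(\x)^c)\in\mathscr{C}$ for every $\x\in\mathscr{C}$, which is the left-hand side of the ``if and only if'' in Lemma~\ref{closed complement}; hence $\z^c\in\varphi(\mathscr{C})$ for every $\z\in\varphi(\mathscr{C})$. Symmetrically, the hypothesis $\varphi^{-1}(\varphi(\x)^r)\in\mathscr{C}$ for every $\x\in\mathscr{C}$ feeds into Lemma~\ref{closed reverse} and yields $\z^r\in\varphi(\mathscr{C})$ for every $\z\in\varphi(\mathscr{C})$.

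Finally, with both $\z^r\in\varphi(\mathscr{C})$ and $\z^c\in\varphi(\mathscr{C})$ now established for each $\z\in\varphi(\mathscr{C})$, I would apply Lemma~\ref{r_rc_exist} directly to conclude that $\z^{rc}\in\varphi(\mathscr{C})$ for each $\z\in\varphi(\mathscr{C})$, which is exactly the desired claim.

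The argument is essentially bookkeeping with no genuine obstacle; the only point needing a moment of care is that the universal quantifier over $\x\in\mathscr{C}$ in the hypothesis must be carried through each step, so that the two closure conclusions hold for \emph{all} $\z\in\varphi(\mathscr{C})$ simultaneously before Lemma~\ref{r_rc_exist} is invoked. Because $\varphi$ is a bijection between $\mathscr{C}$ and $\varphi(\mathscr{C})$, letting $\x$ range over $\mathscr{C}$ is the same as letting $\z=\varphi(\x)$ range over $\varphi(\mathscr{C})$, so this alignment of quantifiers is automatic and the chaining goes through without friction.
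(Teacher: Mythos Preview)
Your proposal is correct and follows essentially the same route as the paper: translate the two hypotheses into closure of $\varphi(\mathscr{C})$ under reverse and under complement, then compose to obtain closure under reverse-complement. The paper writes this as a short chain of equivalences rather than explicitly citing Lemmas~\ref{closed reverse}, \ref{closed complement}, and \ref{r_rc_exist}, but the underlying argument is identical.
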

\begin{proof}
For any $\x$ = $(x_1\ x_2\ \ldots\ x_n)\in\mathscr{C}$, consider $\z$ = $\varphi(\x)$ = $\varphi(x_1)\varphi(x_2)\ldots\varphi(x_n)$.
Therefore, 
\[
\z^r = \varphi(\x)^r = \varphi(x_n)^r\varphi(x_{n-1})^r\ldots\varphi(x_1)^r,
\] 
and 
\[
\z^c = \varphi(\x)^c = \varphi(x_1)^c\varphi(x_2)^c\ldots\varphi(x_n)^c.
\] 
Now, 
\begin{equation*}
    \begin{split}
        &\varphi^{-1}(\varphi(\x)^r),\varphi^{-1}(\varphi(\x)^c)\in\mathscr{C} \\
        \Leftrightarrow \hspace{3mm} & \varphi(\x)^r,\varphi(\x)^c\in\varphi(\mathscr{C}) \\
        \Leftrightarrow \hspace{3mm} & (\z^r)^c\in\varphi(\mathscr{C}) \\
        \Leftrightarrow \hspace{3mm} & \z^{rc}\in\varphi(\mathscr{C}) 
    \end{split}
\end{equation*}
Hence, it follows the result.
\end{proof}

\subsection{DNA Codes from the Map for the Ring $\mathbb{Z}_4+u\mathbb{Z}_4$ with $u^2$ = $2+2u$}\label{Sec: Gau theory}
For $t$ = $2$, consider $\mathscr{D}$ = $\Sigma_{DNA}^2$ and $\mathcal{A}_q$ = $\mathbb{Z}_4+u\mathbb{Z}_4$ with $u^2$ = $2+2u$. 
Then, the map as given in (\ref{Bijective Map}) and the distance as shown in (\ref{Bijective hamming distance}) are $Gau$ map and $Gau$ distance, respectively, where $Gau$ map and $Gau$ distance are discussed in \cite{8437313,limbachiyaDNA}.
\subsubsection{The Ring $\mathbb{Z}_4+u\mathbb{Z}_4$ with $u^2$ = $2+2u$}
The ring $\mathbb{Z}_4+u\mathbb{Z}_4$ = $\{a+bu: a,b\in\mathbb{Z}_4\mbox{ and }u^2=2+2u\}$ of size $16$ is the finite commutative local chain ring.
We denote the ring $\mathbb{Z}_4+u\mathbb{Z}_4$ with $u^2$ = $2+2u$ by $R$ in the remaining part of Section \ref{Sec: Gau theory}.
For the ring $R$, zero divisors and unit elements are listed as follows.
\begin{itemize}
    \item Zero divisors: $0$, $2$, $u$, $2+u$, $2u$, $2+2u$, $3u$, $2+3u$, and
    \item Unites: $1$, $3$, $1+u$, $3+u$, $1+2u$, $3+2u$, $1+3u$, $3+3u$.
\end{itemize}
The distinct ideals of the ring are as follows. 
\[
\begin{array}{ll}
\langle 0\rangle &= \{0\}   \\
\langle 2u\rangle &= \{0,2u\} \\ 
\langle 2\rangle &= \langle 2+2u \rangle=\{0,2,2u,2+2u\}   \\ 
\langle u\rangle  &= \langle 2+u\rangle = \langle 3u\rangle = \langle 2+3u\rangle
                            = \{0,2,u,2+u,2u,2+2u,3u,2+3u\} \\ 
\langle1\rangle&=\langle3\rangle=\langle1+u\rangle=\langle3+u\rangle=\langle1+2u\rangle=\langle3+2u\rangle=\langle1+3u\rangle=\langle3+3u\rangle = R \\
\end{array}
\]
Now, for any matrix $G$ with $k$ rows $\textbf{g}_1,\textbf{g}_2\ldots \textbf{g}_k$ over the ring $R$, we denote 
\[
\langle G\rangle =  \left\{\sum_{i=1}^ka_i\textbf{g}_i:a_i\in R\mbox{ for }i=1,2,\ldots,k\right\}.
\]
Any matrix that can be deduced into 
\begin{equation}\label{general_formR}
G=\left(
\begin{array}{ccccc}
I_{k_0} & B_{0,1}   & B_{0,2}   & B_{0,3}   & B_{0,4}   \\
0       & uI_{k_1}  & uB_{1,2}  & uB_{1,3}  & uB_{1,4}  \\
0       & 0         & 2I_{k_2}  & 2B_{2,3}  & 2B_{2,4}  \\
0       & 0         & 0         & 2uI_{k_3} & 2uB_{3,4} \\
\end{array} \right)=\left(
\begin{array}{c}
\textbf{g}_1    \\
\textbf{g}_2         \\
\vdots         \\
\textbf{g}_k        \\
\end{array} \right)
\end{equation} 
is called the matrix of type $\{k_0,k_1,k_2,k_3\}$, 
 where the blocks $B_{i,j}$ ($0\leq i<j\leq 4$) are defined over the ring $R$ and $k$ = $k_0+k_1+k_2+k_3$.
For any matrix of type $\{k_0,k_1,k_2,k_3\}$, the size of $\langle G\rangle$ is $16^{k_0}8^{k_1}4^{k_2}2^{k_3}$ \cite{choie2005codes}.
Any sub-module of $R^n$ is known as a linear code $\mathscr{C}$ over the ring $R$. 
\begin{proposition}
The size of any linear code $\mathscr{C}$ over the ring $R$ with the generator matrix $G$ of type $\{k_0,k_1,k_2,k_3\}$ is $16^{k_0}8^{k_1}4^{k_2}2^{k_3}$.
\label{matrix size}
\end{proposition}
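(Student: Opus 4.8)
The plan is to count the codewords of $\langle G\rangle$ directly, exploiting the staircase shape of $G$ together with the chain-ring structure of $R$. First I would record the three facts about $R$ on which the count rests: the principal ideals $uR=\langle u\rangle$, $2R=\langle 2\rangle$ and $2uR=\langle 2u\rangle$ have sizes $8$, $4$ and $2$ respectively (these appear among the listed ideals of $R$), while $R$ itself has size $16$. Equivalently, the multiplication maps $a\mapsto ua$, $a\mapsto 2a$ and $a\mapsto 2ua$ on $R$ are surjective onto these ideals, with exactly those image sizes.

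Next I would set up coordinates adapted to $G$. Partition the $n$ columns into blocks of sizes $k_0,k_1,k_2,k_3$ and a final block, matching the block columns of (\ref{general_formR}), and split the coefficient vector $(a_1,\ldots,a_k)\in R^k$ accordingly into $a^{(0)}\in R^{k_0}$, $a^{(1)}\in R^{k_1}$, $a^{(2)}\in R^{k_2}$, $a^{(3)}\in R^{k_3}$, so that a codeword is $\sum_i a_i\textbf{g}_i$. The key observation is that the total contribution of the $i$-th block of rows to a codeword depends only on its \emph{effective coefficient}: row block $0$ contributes through $a^{(0)}$, row block $1$ only through $ua^{(1)}$, row block $2$ only through $2a^{(2)}$, and row block $3$ only through $2ua^{(3)}$, because every entry of a given block row carries the common scalar factor $1,u,2$ or $2u$.

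With this in hand I would read the effective coefficients off a codeword one block column at a time. The first $k_0$ coordinates of a codeword equal $a^{(0)}$ (the lower block rows vanish there), so $a^{(0)}$ is determined and ranges freely over $R^{k_0}$. Subtracting the now-known block-$0$ contribution, the next $k_1$ coordinates equal $ua^{(1)}$, which is thereby determined and ranges over all of $(uR)^{k_1}$; continuing, the following coordinate blocks pin down $2a^{(2)}\in(2R)^{k_2}$ and $2ua^{(3)}\in(2uR)^{k_3}$. This staircase recovery exhibits a bijection between $\langle G\rangle$ and the set $R^{k_0}\times(uR)^{k_1}\times(2R)^{k_2}\times(2uR)^{k_3}$, whose cardinality is $16^{k_0}8^{k_1}4^{k_2}2^{k_3}$, giving the claim.

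The step I expect to require the most care is the factoring observation of the second paragraph: one must verify that the ambiguity in $a^{(i)}$ (determined only modulo the annihilator of its leading scalar) genuinely leaves the \emph{entire} codeword unchanged, not merely its pivot block. This is exactly where the special form of (\ref{general_formR}) is used — since the whole $i$-th block row is a scalar multiple by $u$, $2$ or $2u$ of a row of the shape $(0,\ldots,0,I,B_{i,i+1},\ldots)$, the contribution of that block row to every column block factors through the effective coefficient, so replacing $a^{(i)}$ by another representative of the same coset of the annihilator changes nothing at all. Once this is checked, the independence of the blocks and hence the product formula follow immediately.
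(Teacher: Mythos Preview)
Your argument is correct and self-contained. The paper does not actually prove this proposition: it simply records the size formula with a citation to \cite{choie2005codes} and then restates it as Proposition~\ref{matrix size} without further argument. Your staircase-recovery bijection between $\langle G\rangle$ and $R^{k_0}\times(uR)^{k_1}\times(2R)^{k_2}\times(2uR)^{k_3}$ is exactly the standard counting argument for generator matrices in standard form over a finite chain ring, and your explicit check that the annihilator ambiguity in each $a^{(i)}$ leaves the entire codeword unchanged (because the common scalar $1,u,2,2u$ factors out of the whole block row) is precisely the point that makes the bijection well defined. So where the paper defers to the literature, you supply a complete direct proof.
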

\begin{example}{Example}
Consider the matrix 
\[
G=\left(
\begin{array}{ccccc}
1 & 1 & 1 & 1  & 1  \\
0 & u & u & u  & u  \\
0 & 0 & 2 & 2  & 2  \\
0 & 0 & 0 & 2u & 2u \\
\end{array} \right)
\]
over the ring $R$.
The matrix is of type $\{1, 1, 1, 1\}$, and therefore, the size of the $\langle G\rangle$ is $16^1\cdot8^1\cdot4^1\cdot2^1$ = $1024$, where
\begin{equation*}
    \begin{split}
        \langle G\rangle = & \left\{a_1(1\ 1\ 1\ 1\ 1)+a_2(0\ u\ u\ u\ u)+a_3(0\ 0\ 2\ 2\ 2)+\right. \\ 
        & \hspace{2.5cm}\left. a_4(0\ 0\ 0\ 2u\ 2u): a_1,a_2,a_3,a_4\in R\right\}.
    \end{split}
\end{equation*}
\end{example}

For $\x$ = $(x_1\ x_2\ \ldots\ x_n)\in R^n$, we denote $\x^r$ = $(x_n\ x_{n-1}\ \ldots\ x_1)\in R^n$. 

\subsubsection{The $Gau$ Map}
Consider a bijective map $\varphi_G:R\rightarrow\Sigma_{DNA}^2$ such that Table \ref{Gau Map} holds.
\begin{table}
\caption{The $Gau$ Map.}
\centering
\begin{tabular}{|p{2.8cm}||p{1cm}|p{1cm}|p{1cm}|p{1cm}|p{1cm}|p{1cm}|p{1cm}|p{1cm}|}
\hline
Ring element $x$    & $0$  & $1$    & $2$    & $3$    & $u$  & $1+u$  & $2+u$  & $3+u$ \\ \hline
DNA image $\varphi_G(x)$ & $AA$ & $AG$   & $GG$   & $GA$   & $TG$ & $TA$   & $CA$   & $CG$  \\ \hline\hline
Ring element $x$    & $2u$ & $1+2u$ & $2+2u$ & $3+2u$ & $3u$ & $1+3u$ & $2+3u$ & $3+3u$   \\ \hline
DNA image $\varphi_G(x)$ & $CC$ &  $CT$  & $TT$   & $TC$   & $GT$ & $GC$   & $AC$   & $AT$ \\ \hline 
\end{tabular}
\label{Gau Map}
\end{table}

For any $\x$ = $(x_1\ x_2\ \ldots\ x_n)\in R^n$, consider 
\[
\varphi_G(\x) = \varphi_G(x_1)\varphi_G(x_2)\ldots \varphi_G(x_n)\in\Sigma_{DNA}^{2n}.
\] 

Then, for any $\mathscr{C}\subseteq R^n$, we define 
\[
\varphi(\mathscr{C})=\{\varphi_G(\x):\x\in\mathscr{C}\}.
\]
Now, the properties of the $Gau$ map $\varphi_G$ are as follows.
\begin{enumerate}
    \item Reverse property: For each $x\in R$, $\varphi_G(x)^r$ = $\varphi_G(3x)$.
    \item Complement property: For each $x\in R$, $\varphi_G(x)^c$ = $\varphi_G(x+(2+2u))$.
    \item Reverse-complement property: For each $x\in R$, $\varphi_G(x)^{rc}$ = $\varphi_G(3x+(2+2u))$.
\end{enumerate}
Also, some fundamental $Gau$ map properties are listed in Table \ref{Gau Map property}.
\begin{table}
\caption{Some fundamental properties of the $Gau$ Map.}
\centering
\begin{tabular}{|c|p{5cm}|p{6cm}|}
\hline
Sr. no. & Properties of $x\in R$    & Properties of $\varphi_G(x)\in\Sigma_{DNA}^2$   \\ \hline
 1. & For each $x \in R$, $3x$ is unique & For each $\varphi_G(x) \in \Sigma_{DNA}^2$, $\varphi_G(x)^r$ is unique  \\ \hline
 2. & $x$ = $3x$ for $x=0,2,2u,2+2u$  & $\varphi_G(x)$ = $\varphi_G(x)^r$ \newline for $\varphi_G(x)$ = $AA$, $GG$, $CC$, $TT$  \\ \hline
 3. & For each $x \in R$, $x+(2+2u)$ is unique & For each $\varphi_G(x) \in \Sigma_{DNA}^2$, $\varphi_G(x)^c$ is unique  \\ \hline
 4. & For each $x \in R$, $x\neq x+(2+2u)$ & For each $\varphi_G(x)\in\Sigma_{DNA}^2$, $\varphi_G(x)\neq \varphi_G(x)^c$  \\ \hline
 5. & For each $x \in R$, $3x+(2+2u)$ is unique & For each $\varphi_G(x) \in \Sigma_{DNA}^2$, $\varphi_G(x)^{rc}$ is unique  \\ \hline
 6. & $x$ = $3x+(2+2u)$ \newline for $x=3+3u, 1+u, 3+u, 1+3u$  & $\varphi_G(x)$ = $\varphi_G(x)^{rc}$ \newline for $\varphi_G(x)$ = $AT$, $TA$, $CG$, $GC$  \\ \hline
 7. & There is not exists $x$ in $R$ such that \newline $x$ = $3x+(2+2u)$, and $x$ = $3x$  & There is not exists $\varphi_G(x)$ in $\Sigma_{DNA}^2$ such that \newline $\varphi_G(x)$ = $\varphi_G(x)^{rc}$, and $\varphi_G(x)$ = $\varphi_G(x)^r$  \\ \hline
\end{tabular}
\label{Gau Map property}
\end{table}

\subsubsection{The $Gau$ Distance}
In order to compute the Hamming distance on $\Sigma_{DNA}^2$, as given in (\ref{gaumatrix}), the sixteen elements of the ring are arranged in a square matrix $\mathcal{M}$ = $[m_{i,j}]$ such that 
\begin{equation}
    H(\varphi_G(m_{i,j}),\varphi_G(m_{i',j'})) =
\begin{cases}
0\ \mbox{ if } i=i' \mbox{ and } j=j', \\ 
1\ \mbox{ if } i=i' \mbox{ and } j\neq j', \\ 
1\ \mbox{ if } i\neq i' \mbox{ and } j=j', \mbox{ and } \\ 
2\ \mbox{ if } i\neq i' \mbox{ and } j\neq j'. \\ 
\end{cases}
\label{Gau distance basic property}
\end{equation}
For the ring $R$ and set $\Sigma_{DNA}^2$, the square matrix $\mathcal{M}$ with the property as given in Equation \ref{Gau distance basic property} is not unique, and one of the possible arrangement for the square matrices $\mathcal{M}$ = $[m_{i,j}]$ and $\varphi_G(\mathcal{M})$ = $[\varphi_G(m_{i,j})]$ are 
\begin{equation}
\begin{array}{cc}
    \mathcal{M} = &  \begin{array}{cc}
\begin{array}{cccccccccccccccccccccccccccccccc}
            A & && && & & G && && & & & C & &&& & & & T
        \end{array}  
        & \\
        \left(\begin{array}{cccc}
         0   & 3    & 2+u & 1+u  \\
        1    & 2    & 3+u & u     \\
        2+3u & 1+3u & 2u & 3+2u \\
        3+3u & 3u   & 1+2u & 2+2u
\end{array}\right)
        &
        \begin{array}{c}
           A \\
            G \\
            C \\
            T 
        \end{array}  
    \end{array}
\end{array},
\label{gaumatrix}
\end{equation}
and 
\[
\varphi(\mathcal{M}) = \left(\begin{array}{cccc}
        AA & GA & CA & TA  \\
        AG & GG & CG & TG  \\
        AC & GC & CC & TC  \\
        AT & GT & CT & TT
\end{array}\right).
\]
Thus, Gau distance is defined over the ring $R$ such that these properties are preserved. 

For any $x, y \in R$, there exist $0\leq i,i',j,j'\leq3$ such that let $x = m_{i,j}$ and $y = m_{i',j'}$.
Now, $Gau$ distance is defined as 
\begin{equation} 
d_G(x, y) = \min\{1, i + 3i'(\mbox{mod } 4)\} + \min\{1, j + 3j'(\mbox{mod } 4)\},
\end{equation}
where, one can observe that the terms 
\[
\min\{1, i + 3i'(\mbox{mod }4)\} = 
\begin{cases}
0 & \mbox{ if }i=i', \\
1 & \mbox{ if }i\neq i',
\end{cases}
\]
and 
\[
\min\{1, j + 3j'(\mbox{mod }4)\} = 
\begin{cases}
0 & \mbox{ if }j=j', \\
1 & \mbox{ if }j\neq j'.
\end{cases}
\]
Also, for any two elements $m_{i,j}$ and $m_{i,j'}$ of the matrix $\mathcal{M}$ over the ring $R$, $m_{i,j}$ = $m_{i,j'}$ if and only if $i$ = $i'$ and $j$ = $j'$.
\begin{example}{Example}
For $m_{0,1}$ = $3$ and $m_{3,2}$ = $1+3u$, the Gau distance 
\begin{equation*}
    \begin{split}
d_G(3, 1+3u) = & \min\{1, 0 + 3\cdot 3(\mbox{mod }4)\} + \min\{1, 1 + 3\cdot2(\mbox{mod }4)\} \\
                 = & \min\{1, 1\} + \min\{1, 1 + 3\} \\
                 = & 2
    \end{split}
\end{equation*}
\end{example}
Now, one can establish a distance isometry between the ring $R$ and the set $\Sigma_{DNA}^2$ as given in Theorem \ref{gau preserve theorem}.
\begin{theorem}
(\cite[Theorem 1]{8437313}) The $Gau$ map $\varphi_G:(R^n,d_G) \rightarrow (\Sigma_{DNA}^{2n}, d_H)$ is a distance preserving map. 
\label{gau preserve theorem}
\end{theorem}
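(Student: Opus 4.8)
The plan is to reduce the $n$-coordinate claim to the single-symbol case and then hand the rest over to the general machinery of Lemma \ref{bijective isometry preposition}. The decisive observation is that once I show $d_G(x,y) = H(\varphi_G(x),\varphi_G(y))$ for every single pair $x,y\in R$, the $Gau$ distance $d_G$ is literally the induced distance $d$ of (\ref{Bijective hamming distance}) attached to the bijection $\varphi_G$, and Lemma \ref{bijective isometry preposition} then delivers the isometry $\varphi_G:(R^n,d_G)\to(\Sigma_{DNA}^{2n},d_H)$ with no further work. So the entire theorem rests on verifying one base identity.

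First I would establish that base identity. Writing $x=m_{i,j}$ and $y=m_{i',j'}$ for the positions in the square matrix $\mathcal{M}$ of (\ref{gaumatrix}), I read off from the companion matrix $\varphi(\mathcal{M})$ that the column index $j$ fixes the first nucleotide of $\varphi_G(x)$ while the row index $i$ fixes the second nucleotide. Consequently the two images agree in the first position exactly when $j=j'$ and in the second position exactly when $i=i'$, so $H(\varphi_G(x),\varphi_G(y)) = [\,i\neq i'\,] + [\,j\neq j'\,]$, which is precisely the pattern recorded in (\ref{Gau distance basic property}). It then remains to check that the defining formula for $d_G$ computes this pair of indicators. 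Since $3\equiv -1\pmod 4$, we have $i+3i'\equiv i-i'\pmod 4$, which is $0$ when $i=i'$ and a nonzero residue in $\{1,2,3\}$ otherwise; hence $\min\{1,(i+3i')\bmod 4\}=[\,i\neq i'\,]$, and identically $\min\{1,(j+3j')\bmod 4\}=[\,j\neq j'\,]$. Adding the two terms gives $d_G(x,y)=[\,i\neq i'\,]+[\,j\neq j'\,]=H(\varphi_G(x),\varphi_G(y))$, as required.

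With the base identity secured, the lift to $R^n$ is formal. The $Gau$ distance on $R^n$ is the coordinatewise sum $d_G(\x,\y)=\sum_{i=1}^n d_G(x_i,y_i)$, exactly as in the general construction preceding (\ref{d(x,y)=H(x,y) equation}), while the Hamming distance is additive over the concatenation $\varphi_G(\x)=\varphi_G(x_1)\varphi_G(x_2)\cdots\varphi_G(x_n)$, so $H(\varphi_G(\x),\varphi_G(\y))=\sum_{i=1}^n H(\varphi_G(x_i),\varphi_G(y_i))$. Combining these two sums with the per-symbol equality yields $d_G(\x,\y)=H(\varphi_G(\x),\varphi_G(\y))$ for all $\x,\y\in R^n$, which is the assertion that $\varphi_G$ is distance preserving; this is the same additive argument that proves Lemma \ref{bijective isometry preposition}, now instantiated with $\varphi=\varphi_G$ and $d=d_G$.

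I expect essentially all of the content to sit in the base case, and within it the one genuinely delicate point is confirming that the chosen arrangement $\mathcal{M}$ in (\ref{gaumatrix}) really does realise the ``agree-in-a-coordinate iff equal-index'' property for $\varphi_G$. This is a finite check against the $Gau$ map table (Table \ref{Gau Map}) rather than a deep argument: one must make sure the column-to-first-nucleotide and row-to-second-nucleotide correspondence is read off consistently across all sixteen entries. The arithmetic reformulation through $3\equiv-1\pmod 4$ is routine, and the additive lifting is purely formal, so the main obstacle is bookkeeping accuracy rather than any substantive difficulty.
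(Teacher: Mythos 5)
Your proposal is correct and follows essentially the same route as the paper's proof: establish the per-symbol identity $d_G(x,y)=H(\varphi_G(x),\varphi_G(y))$ and then lift it to $R^n$ by the coordinatewise additivity of both distances, exactly as in Lemma \ref{bijective isometry preposition}. The only difference is that where the paper dismisses the base identity with ``using computation, it can be easily observed,'' you make the finite check explicit via the row/column structure of $\mathcal{M}$ in (\ref{gaumatrix}) and the observation $3\equiv-1\pmod 4$ — a welcome amplification, not a different argument.
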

\begin{proof}
Using computation, it can be easily observed that, for any $x$ and $y$ in $R$, $d_G(x,y)$ = $H(\varphi(x),\varphi(y))$. 
Therefore, for any $\x$ = $(x_1\ x_2\ \ldots\ x_n)$ and $\y$ = $(y_1\ y_2\ \ldots\ y_n)$ in $R^n$, 
\begin{equation*}
    \begin{split}
        d_G(\x,\y) = & \sum_{i=1}^nd_G(x_i,y_i) \\
        = & \sum_{i=1}^nH(\varphi(x_i),\varphi(y_i)) \\
        = & H(\varphi(\x),\varphi(\y)). 
    \end{split}
\end{equation*}
Thus, the result follows.
\end{proof}
Now, for any $x$ and $y$ in $R$, we define a distance 
\begin{equation}
    \begin{split}
        &d:R\times R\rightarrow\mathbb{R} \\ &d(x,y) = H(\varphi_G(x),\varphi_G(y)).
    \end{split}
    \label{distance for Gau}
\end{equation}
Now, one can observe Lemma \ref{Gau distance = Hamming distance lemma} as follows.
\begin{lemma}
For any $x,y\in R$, $d(x,y)$ = $d_G(x,y)$.
\label{Gau distance = Hamming distance lemma}
\end{lemma}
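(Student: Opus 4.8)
The plan is to recognise that this lemma merely isolates the pointwise identity already contained in the proof of Theorem~\ref{gau preserve theorem}. By the definition in~(\ref{distance for Gau}) we have $d(x,y)=H(\varphi_G(x),\varphi_G(y))$ for every $x,y\in R$, so the whole task reduces to checking that the $Gau$ distance $d_G(x,y)$ coincides with this same Hamming distance.

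To verify this directly I would fix the square matrix $\mathcal{M}=[m_{i,j}]$ from~(\ref{gaumatrix}) and write $x=m_{i,j}$, $y=m_{i',j'}$ for the unique indices $0\le i,i',j,j'\le 3$ (uniqueness holds because $m_{i,j}=m_{i',j'}$ exactly when $i=i'$ and $j=j'$). The two observations recorded just after the definition of $d_G$ show that $\min\{1,\,i+3i' \bmod 4\}$ equals $0$ when $i=i'$ and $1$ otherwise, and similarly for the column term; hence $d_G(x,y)$ takes the values $0,1,1,2$ in the four cases $(i=i',\,j=j')$, $(i=i',\,j\ne j')$, $(i\ne i',\,j=j')$, $(i\ne i',\,j\ne j')$. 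On the other side, property~(\ref{Gau distance basic property}) of the arrangement $\mathcal{M}$ assigns to $H(\varphi_G(m_{i,j}),\varphi_G(m_{i',j'}))$ the identical values $0,1,1,2$ in precisely these four cases. Matching the two tables gives $d_G(x,y)=H(\varphi_G(x),\varphi_G(y))$.

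Combining this with the definition of $d$ yields $d(x,y)=H(\varphi_G(x),\varphi_G(y))=d_G(x,y)$, which is the claim. There is essentially no obstacle here: the argument uses nothing beyond the definitions, the elementary case analysis of the two $\min$-terms, and the assumed fact that the explicit arrangement in~(\ref{gaumatrix}) satisfies~(\ref{Gau distance basic property}). Indeed, one could bypass the case analysis entirely by simply quoting the base step $d_G(x,y)=H(\varphi_G(x),\varphi_G(y))$ already established inside the proof of Theorem~\ref{gau preserve theorem}.
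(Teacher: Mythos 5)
Your proposal is correct and takes essentially the same route as the paper, which proves this lemma in one line by combining the definition in (\ref{distance for Gau}) with the pointwise identity $d_G(x,y)=H(\varphi_G(x),\varphi_G(y))$ asserted (``using computation'') inside the proof of Theorem \ref{gau preserve theorem}. Your explicit case analysis of the two $\min$-terms against property (\ref{Gau distance basic property}) of the arrangement $\mathcal{M}$ merely fills in the verification that the paper leaves implicit, so no further comment is needed.
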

\begin{proof}
The result follows from Equation (\ref{distance for Gau}) and Theorem \ref{gau preserve theorem}.
\end{proof}

\subsubsection{Properties of $Gau$ Map and $Gau$ Distance}
In this section, we have discussed some conditions on codes defined over the ring $R$ that ensures the reverse and complement properties in the DNA codes obtained using $Gau$ map on the codes. 

A linear property for reverse strings defined over the ring $R$ is given in Lemma \ref{linear on reverse} as follows.
\begin{lemma}
For any $\x,\y\in R^n$, and any $a,b\in R$, 
\[
\varphi_G^{-1}(\varphi_G(a\x+b\y)^r) = a\varphi_G^{-1}(\varphi_G(\x)^r) + b\varphi_G^{-1}(\varphi_G(\y)^r).
\]
\label{linear on reverse}
\end{lemma}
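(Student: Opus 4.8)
The plan is to first convert the map $\x \mapsto \varphi_G^{-1}(\varphi_G(\x)^r)$ into an explicit algebraic operation on $R^n$, and then read off the claimed $R$-linearity by a short direct computation.

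First I would exploit the Reverse property of the $Gau$ map, namely $\varphi_G(x)^r = \varphi_G(3x)$ for every $x \in R$, to handle a single coordinate: applying $\varphi_G^{-1}$ gives $\varphi_G^{-1}(\varphi_G(x)^r) = 3x$. To lift this to strings, recall from the proof of Lemma~\ref{closed reverse} that reversing the concatenation $\varphi_G(\x) = \varphi_G(x_1)\varphi_G(x_2)\ldots\varphi_G(x_n)$ reverses the length-two blocks individually and in reversed order, so that
\[
\varphi_G(\x)^r = \varphi_G(x_n)^r\varphi_G(x_{n-1})^r\ldots\varphi_G(x_1)^r = \varphi_G(3x_n)\varphi_G(3x_{n-1})\ldots\varphi_G(3x_1).
\]
Consequently
\[
\varphi_G^{-1}(\varphi_G(\x)^r) = (3x_n\ 3x_{n-1}\ \ldots\ 3x_1) = 3\,\x^r,
\]
where $\x^r = (x_n\ x_{n-1}\ \ldots\ x_1)$ denotes coordinate reversal in $R^n$.

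With this closed form available, the lemma reduces to verifying that $\x \mapsto 3\,\x^r$ is $R$-linear, which is immediate: coordinate reversal is $R$-linear, and multiplication by the scalar $3$ is $R$-linear because $R$ is commutative. Explicitly, for any $a, b \in R$ one has $\varphi_G^{-1}(\varphi_G(a\x + b\y)^r) = 3(a\x + b\y)^r = 3(a\,\x^r + b\,\y^r) = a(3\,\x^r) + b(3\,\y^r)$, and the right-hand side is exactly $a\varphi_G^{-1}(\varphi_G(\x)^r) + b\varphi_G^{-1}(\varphi_G(\y)^r)$. I do not expect a genuine obstacle here; the only step requiring care is the block-wise order reversal in the passage from $\varphi_G(\x)^r$ to its coordinate description, and that is already justified by Lemma~\ref{closed reverse}.
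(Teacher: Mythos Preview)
Your proposal is correct and follows essentially the same route as the paper: both arguments use the Reverse property $\varphi_G(x)^r=\varphi_G(3x)$ together with the block-wise reversal of $\varphi_G(\x)$ to reduce $\varphi_G^{-1}(\varphi_G(\x)^r)$ to $3\,\x^r$, and then exploit commutativity of $R$ to pull out the scalars $a,b$. The only cosmetic difference is that you isolate the closed form $\x\mapsto 3\,\x^r$ and invoke its $R$-linearity abstractly, whereas the paper carries out the same manipulation as one long coordinate-wise chain of equalities.
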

\begin{proof}
For any $\x$ = $(x_1\ x_2\ \ldots\ x_n)$ and $\y$ = $(y_1\ y_2\ \ldots\ y_n)$ in $R^n$ and $a,b\in R$, $\varphi_G(a\x+b\y)$ = $\varphi_G(ax_1+by_1)\varphi_G(ax_2+by_2)\dots \varphi_G(ax_n+by_n)$.
Thus, $\varphi_G(a\x+b\y)^r$ = $\varphi_G(ax_n+by_n)^r\ \varphi_G(ax_{n-1}+by_{n-1})^r\dots \varphi_G(ax_1+by_1)^r$.
Therefore, 
\begin{equation*}
\begin{split}
    \varphi_G^{-1}(\varphi_G(a\x+&b\y)^r) \\ 
    = & (\varphi_G^{-1}(\varphi_G(ax_n+by_n)^r)\ \varphi_G^{-1}(\varphi_G(ax_{n-1}+by_{n-1})^r)\ \dots \\
    &\hspace{5.6cm}\dots\ \varphi_G^{-1}(\varphi_G(ax_1+by_1)^r)) \\
     = & (3ax_n+3by_n\ 3ax_{n-1}+3by_{n-1}\ \dots\ 3ax_1+3by_1) \\
     = & (a(3x_n)+b(3y_n)\ a(3x_{n-1})+b(3y_{n-1})\ \dots\ a(3x_1)+b(3y_1)) \\
     = & (a\varphi_G^{-1}(\varphi_G(x_n)^r)+b\varphi_G^{-1}(\varphi_G(y_n)^r)\ a\varphi_G^{-1}(\varphi_G(x_{n-1})^r) \\ 
        & \hspace{0.7cm}+b\varphi_G^{-1}(\varphi_G(y_{n-1})^r)\ \ldots\  a\varphi_G^{-1}(\varphi_G(x_1)^r)+b\varphi_G^{-1}(\varphi_G(y_1)^r)) \\
     = & a(\varphi_G^{-1}(\varphi_G(x_n)^r)\ \varphi_G^{-1}(\varphi_G(x_{n-1})^r)\ \dots\ \varphi_G^{-1}(\varphi_G(x_1)^r)) \\ 
     &\hspace{1cm}+ b((\varphi_G^{-1}(\varphi_G(y_n)^r)\ \varphi_G^{-1}(\varphi_G(y_{n-1})^r)\dots \varphi_G^{-1}(\varphi_G(y_1)^r))) \\
     = & a\varphi_G^{-1}(\varphi_G(\x)^r) + b\varphi_G^{-1}(\varphi_G(\y)^r).
\end{split}
\end{equation*}
It follows the result.
\end{proof}
\begin{example}{Example}
For $\x$ = $(1\ 1\ 2u)\in R^3$, $\y$ = $(0\ 1\ u)\in R^3$, $a$ = $3u\in R$ and $b$ = $2\in R$,  
\begin{equation*}
    \begin{split}
& a\x+b\y = 3u(1\ 1\ 2u)+2(0\ 1\ u)  \\
& \hspace{1.1cm}= (3u\ 2+3u\ 2u) \hspace{1cm} (\because 6u^2 = 2u^2 = 2(2+2u) = 0) \\
& \varphi_G(a\x+b\y) = GTACCC \\
& \varphi_G(a\x+b\y)^r = CCCATG \\
& \varphi_G^{-1}(\varphi_G(a\x+b\y)^r) = (2u\ 2+u\ u)
    \end{split}
\end{equation*}
On the other hand, 
\begin{equation*}
    \begin{split}
& \varphi_G(\x) = AGAGCC  \\
& \varphi_G(\x)^r = CCGAGA \\
& \varphi_G^{-1}(\varphi_G(\x)^r) = (2u\ 3\ 3) \\
& a\varphi_G^{-1}(\varphi_G(\x)^r) = 3u(2u\ 3\ 3) \\
& \hspace{2.1cm} = (0\ u\ u) 
    \end{split}
\end{equation*}
Similarly,
\[
b\varphi_G^{-1}(\varphi_G(\y)^r) = (2u\ 2\ 0)
\]
Therefore, 
\begin{equation*}
    \begin{split}
        a\varphi_G^{-1}(\varphi_G(\x)^r) + b\varphi_G^{-1}(\varphi_G(\y)^r) & = (0\ u\ u) + (2u\ 2\ 0) \\
        & = (2u\ 2+u\ u).
    \end{split}
\end{equation*}
Hence, it is clear that $\varphi_G^{-1}(\varphi_G(a\x+b\y)^r)$ = $a\varphi_G^{-1}(\varphi_G(\x)^r) + b\varphi_G^{-1}(\varphi_G(\y)^r)$.
\end{example}

Similarly one can generalise the Lemma \ref{linear on reverse} as Proposition \ref{linear on revers gen}. 
\begin{proposition}
For any given positive integer $k$ and $i=1,2,\ldots, k$, if $\x_i\in R^n$, then $\varphi_G^{-1}(\varphi_G(\sum_{i=1}^ka_i\x_i)^r) = \sum_{i=1}^ka_i\varphi_G^{-1}(\varphi_G(\x_i)^r)$, where $a_i \ \in R$.
\label{linear on revers gen}
\end{proposition}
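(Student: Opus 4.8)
The plan is to prove the statement by induction on $k$, using Lemma~\ref{linear on reverse} both as the base case and as the engine driving the inductive step. For $k=1$ the assertion reads $\varphi_G^{-1}(\varphi_G(a_1\x_1)^r) = a_1\varphi_G^{-1}(\varphi_G(\x_1)^r)$, which follows from Lemma~\ref{linear on reverse} upon taking the second vector to be the zero vector and $b=0$; alternatively, one may start the induction at $k=2$, where the claim is precisely Lemma~\ref{linear on reverse} itself.

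For the inductive step, I would assume the identity holds for some $k\geq 1$ and consider vectors $\x_1,\ldots,\x_{k+1}\in R^n$ together with scalars $a_1,\ldots,a_{k+1}\in R$. First I would collect the first $k$ summands into a single vector $\v = \sum_{i=1}^k a_i\x_i$, which still lies in $R^n$ because $R^n$ is an $R$-module, so that $\sum_{i=1}^{k+1} a_i\x_i = 1\cdot\v + a_{k+1}\x_{k+1}$. Applying the two-term identity of Lemma~\ref{linear on reverse} to the pair $\v,\x_{k+1}$ with coefficients $1$ and $a_{k+1}$ then yields
\begin{equation*}
\varphi_G^{-1}\left(\varphi_G\left(\sum_{i=1}^{k+1} a_i\x_i\right)^r\right) = \varphi_G^{-1}(\varphi_G(\v)^r) + a_{k+1}\varphi_G^{-1}(\varphi_G(\x_{k+1})^r).
\end{equation*}

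Finally I would apply the inductive hypothesis to $\v$, rewriting $\varphi_G^{-1}(\varphi_G(\v)^r) = \sum_{i=1}^k a_i\varphi_G^{-1}(\varphi_G(\x_i)^r)$, and substitute this into the displayed equation; the right-hand side then collapses into $\sum_{i=1}^{k+1} a_i\varphi_G^{-1}(\varphi_G(\x_i)^r)$, which is exactly the claim for $k+1$ and closes the induction. I do not anticipate a genuine obstacle, since the entire argument is carried by the linearity already established in Lemma~\ref{linear on reverse}; the only point deserving a moment's attention is the grouping step, where one must observe that the partial sum $\v$ is itself a single element of $R^n$ to which the lemma legitimately applies, with the harmless unit coefficient $1$ multiplying it.
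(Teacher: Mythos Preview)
Your induction argument is correct; the grouping step and the appeal to Lemma~\ref{linear on reverse} with coefficients $1$ and $a_{k+1}$ are exactly as they should be, and the base case via $b=0$ is clean. The paper does not spell out a proof here at all---it merely remarks that one can ``similarly generalise'' Lemma~\ref{linear on reverse}---so your induction is a legitimate way to make that remark precise. An equally natural alternative, and perhaps what the word ``similarly'' hints at, is to rerun the coordinate-wise computation from the proof of Lemma~\ref{linear on reverse} directly with $k$ summands in place of two, using the identity $\varphi_G^{-1}(\varphi_G(z)^r)=3z$ at each coordinate; but your inductive route and that direct route are interchangeable and equally valid.
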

Using the linear property as given in Proposition \ref{linear on revers gen}, a condition on generator matrix for linear code defined over the ring $R$ is obtained that ensures the the R constraint in respective DNA code. 
\begin{lemma}
For any matrix 
\[
G = \left(\begin{array}{c}
      \textbf{g}_1 \\
    \textbf{g}_2 \\
    \vdots \\
    \textbf{g}_k
\end{array}\right)
\]
with $k$ rows $\textbf{g}_1$, $\textbf{g}_2$, $\ldots$, $\textbf{g}_k$ over the ring $R$, the DNA code $\varphi_G(\langle G\rangle)$ contains the R DNA strings of each DNA codewords if and only if $\textbf{g}_i^r \in\langle G\rangle$ for each $i=1,2,\ldots,k$.
\label{closure reverse Gau}
\end{lemma}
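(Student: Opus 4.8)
The plan is to reduce the DNA-string condition to an intrinsic closure statement about the $R$-module $\langle G\rangle$ via Lemma \ref{closed reverse}, and then to upgrade a generator-wise condition to all codewords using the linearity recorded in Proposition \ref{linear on revers gen}. First I would invoke Lemma \ref{closed reverse} with $\mathscr{C}=\langle G\rangle$ and the map $\varphi_G$: the assertion that $\varphi_G(\langle G\rangle)$ contains the R DNA string $\z^r$ of every codeword $\z\in\varphi_G(\langle G\rangle)$ is equivalent to the assertion that $\varphi_G^{-1}(\varphi_G(\x)^r)\in\langle G\rangle$ for every $\x\in\langle G\rangle$. This converts a statement about DNA strings into a closure condition inside $R^n$.

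Next I would make the quantity $\varphi_G^{-1}(\varphi_G(\x)^r)$ explicit. Applying the reverse property $\varphi_G(x)^r=\varphi_G(3x)$ block-by-block and reversing the order of the two-letter blocks, I would obtain $\varphi_G^{-1}(\varphi_G(\x)^r)=3\x^r$ for every $\x\in R^n$, where $\x^r=(x_n\ x_{n-1}\ \ldots\ x_1)$. Since $3$ is a unit in $R$ (indeed $3\cdot 3=1$), the module $\langle G\rangle$ satisfies $3\v\in\langle G\rangle\Leftrightarrow\v\in\langle G\rangle$; hence $\varphi_G^{-1}(\varphi_G(\x)^r)\in\langle G\rangle$ if and only if $\x^r\in\langle G\rangle$. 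Specializing to a generator row gives the key translation $\varphi_G^{-1}(\varphi_G(\textbf{g}_i)^r)\in\langle G\rangle\Leftrightarrow\textbf{g}_i^r\in\langle G\rangle$.

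For the direction $\Leftarrow$, assuming $\textbf{g}_i^r\in\langle G\rangle$ for every $i$, I would take an arbitrary codeword $\x=\sum_{i=1}^k a_i\textbf{g}_i$ with $a_i\in R$ and apply Proposition \ref{linear on revers gen} to write $\varphi_G^{-1}(\varphi_G(\x)^r)=\sum_{i=1}^k a_i\,\varphi_G^{-1}(\varphi_G(\textbf{g}_i)^r)$. Each summand lies in $\langle G\rangle$ by the translation above, and since $\langle G\rangle$ is an $R$-module it is closed under such $R$-linear combinations, so $\varphi_G^{-1}(\varphi_G(\x)^r)\in\langle G\rangle$; Lemma \ref{closed reverse} then yields the desired containment of R DNA strings. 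For the converse $\Rightarrow$, I would merely specialize the closure condition $\varphi_G^{-1}(\varphi_G(\x)^r)\in\langle G\rangle$ to the generator $\x=\textbf{g}_i\in\langle G\rangle$, giving $\varphi_G^{-1}(\varphi_G(\textbf{g}_i)^r)\in\langle G\rangle$, which is $\textbf{g}_i^r\in\langle G\rangle$.

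The only mildly delicate point is the identity $\varphi_G^{-1}(\varphi_G(\x)^r)=3\x^r$ combined with the fact that $3$ is a unit: this is exactly what lets me pass between the hypothesis phrased via $\textbf{g}_i^r$ and the quantity $\varphi_G^{-1}(\varphi_G(\textbf{g}_i)^r)$ that appears in Proposition \ref{linear on revers gen}. Everything else is a routine use of module closure, so I expect no genuine obstacle beyond being careful that the unit $3$ is inserted and removed consistently.
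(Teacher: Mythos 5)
Your proposal is correct and takes essentially the same route as the paper: reduce via the bijectivity of $\varphi_G$ (Lemma \ref{closed reverse}) to a closure condition inside $\langle G\rangle$, exploit the reverse property in the form $\varphi_G^{-1}(\varphi_G(\x)^r)=3\x^r$, and combine Proposition \ref{linear on revers gen} with $R$-module closure to pass from the generators to an arbitrary codeword. If anything, your write-up is more careful than the paper's single chain of biconditionals, which inserts the hypothesis $\textbf{g}_i^r\in\langle G\rangle$ mid-chain and leaves implicit the point you make explicit, namely that $3$ being a unit lets you trade $3\textbf{g}_i^r\in\langle G\rangle$ for $\textbf{g}_i^r\in\langle G\rangle$ in the converse direction.
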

\begin{proof}
Consider a matrix $G$ with $k$ rows $\textbf{g}_1,\textbf{g}_2,\ldots,\textbf{g}_k$.
For any $\x\in\langle G\rangle$, there exist some $a_i$ ($i=1,2,\ldots,k$) such that $\x$ = $\sum_{i=1}^k a_i\textbf{g}_i$.
\begin{equation*}
    \begin{split}
    &\varphi_G(\y)\in\varphi_G(\langle G\rangle) \\
    \Leftrightarrow & \y\in\langle G\rangle \\
    \Leftrightarrow & \sum_{i=1}^k a_i\textbf{g}_i\in\langle G\rangle \ \mbox{ for some }a_i\in R\mbox{ and }i=1,2,\ldots,k \\
    \Leftrightarrow & \sum_{i=1}^k a_i\textbf{g}_i^r\in\langle G\rangle\hspace{0.5cm} \mbox{ given } \textbf{g}_i^r \in\langle G\rangle \mbox{ for each }i=1,2,\ldots,k \\
    \Leftrightarrow & \sum_{i=1}^k a_i(3\textbf{g}_i^r)\in\langle G\rangle\hspace{0.5cm}\mbox{ from closer property of }\langle G\rangle \\
    \Leftrightarrow & \sum_{i=1}^k a_i\varphi_G^{-1}(\varphi_G(\textbf{g}_i)^r)\in\langle G\rangle\hspace{0.5cm} \mbox{ from reverse property of Gau map } \\
    \Leftrightarrow & \varphi_G^{-1}\left(\varphi_G\left(\sum_{i=1}^k a_i\x_i\right)^r\right)\in\langle G\rangle\hspace{0.5cm}\mbox{ from Proposition \ref{linear on revers gen}} \\
    \Leftrightarrow & \varphi_G^{-1}(\varphi_G(\y)^r)\in\langle G\rangle \\
    \Leftrightarrow & \varphi_G(\y)^r\in\varphi_G(\langle G\rangle) 
    \end{split}
\end{equation*} 
It follows the result.
\end{proof}
\begin{example}{Example}
For the matrix 
\[
G = \left(\begin{array}{ccc}
      1 & 0 & 3
    \end{array}\right),
\]
$k$ = $1$ and $\textbf{g}_1$ = $(1\ 0\ 3)$. 
Observe that $\textbf{g}_1^r$ = $(3\ 0\ 1)$ = $3(1\ 0\ 3)$ = $3\textbf{g}_1$, and therefore, $\textbf{g}_1^r\in\langle G\rangle$.

Also $\langle G\rangle$ = $\{(0\ 0\ 0)$, $(1\ 0\ 3)$, $(2\ 0\ 2)$, $(3\ 0\ 1)$, $(u\ 0\ 3u)$, $(2u\ 0\ 2u)$, $(3u\ 0\ u)$, $(1+u\ 0\ 3+3u)$, $(2+u\ 0\ 2+3u)$, $(3+u\ 0\ 1+3)$, $(1+2u\ 0\ 3+2u)$, $(2+2u\ 0\ 2+2u)$, $(3+2u\ 0\ 1+2u)$, $(1+3u\ 0\ 3+u)$, $(2+3u\ 0\ 2+u)$, $(3+3u\ 0\ 1+u)\}$. 

Therefore, 
$\varphi(\langle G\rangle)$ = $\{AAAAAA$, $AGAAGA$, $GGAAGG$, $GAAAAG$, $TGAAGT$, $CCAACC$, $GTAATG$, $TAAAAT$, $CAAAAC$, $CGAAGC$, $CTAATC$, $TTAATT$, $TCAACT$, $GCAACG$, $ACAACA$, $ATAATA\}$. 
Note that, for each $\z\in\varphi(\langle G\rangle)$, $\z^r\in\varphi(\langle G\rangle)$.
\end{example}

Now, a condition on the linear code defined over the ring $R$ is discussed in Lemma \ref{closure reverse comp Gau} as follows. 
\begin{lemma}
For any given matrix $G$ over the ring $R$, consider the DNA code $\varphi_G(\langle G\rangle)$.
Then, for each $\x\in\varphi_G(\langle G\rangle)$, $\x^c\in\varphi_G(\langle G\rangle)$ if and only if $\textbf{2+2u}_{1,n}\in\langle G\rangle$.  
\label{closure reverse comp Gau}
\end{lemma}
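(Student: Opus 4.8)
The plan is to reduce the statement to the additive closure of the linear code $\langle G\rangle$ by way of the complement property of the $Gau$ map. Recall that $\varphi_G(x)^c = \varphi_G(x+(2+2u))$ for every $x\in R$, and that complementation of a DNA string acts coordinate-wise. Thus for any $\x = (x_1\ x_2\ \ldots\ x_n)\in R^n$ I would first record the vector identity
\begin{equation*}
\varphi_G(\x)^c = \varphi_G(x_1+(2+2u))\,\varphi_G(x_2+(2+2u))\cdots\varphi_G(x_n+(2+2u)) = \varphi_G\!\left(\x + \textbf{2+2u}_{1,n}\right),
\end{equation*}
and hence, applying the bijection $\varphi_G^{-1}$,
\begin{equation*}
\varphi_G^{-1}\!\left(\varphi_G(\x)^c\right) = \x + \textbf{2+2u}_{1,n}.
\end{equation*}
This converts ``$\varphi_G(\langle G\rangle)$ is closed under complement'' into ``$\langle G\rangle$ is closed under adding the fixed vector $\textbf{2+2u}_{1,n}$''.

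For the ``if'' direction, I would suppose $\textbf{2+2u}_{1,n}\in\langle G\rangle$. Since $\langle G\rangle$ is a submodule of $R^n$, it is closed under addition, so $\x + \textbf{2+2u}_{1,n}\in\langle G\rangle$ for every $\x\in\langle G\rangle$. By the displayed identity and the bijectivity of $\varphi_G$ this means $\varphi_G(\x)^c\in\varphi_G(\langle G\rangle)$, i.e.\ the complement of every DNA codeword again lies in $\varphi_G(\langle G\rangle)$.

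For the ``only if'' direction, the cleanest move is to test the closure hypothesis on the zero codeword. As $\langle G\rangle$ is a submodule it contains $\textbf{0}_{1,n}$, so $\varphi_G(\textbf{0}_{1,n})\in\varphi_G(\langle G\rangle)$ and, by hypothesis, $\varphi_G(\textbf{0}_{1,n})^c\in\varphi_G(\langle G\rangle)$. The displayed identity gives $\varphi_G^{-1}\!\left(\varphi_G(\textbf{0}_{1,n})^c\right) = \textbf{0}_{1,n} + \textbf{2+2u}_{1,n} = \textbf{2+2u}_{1,n}$, and injectivity of $\varphi_G$ then forces $\textbf{2+2u}_{1,n}\in\langle G\rangle$, as required.

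I do not expect a genuine obstacle here: the whole content is the complement property of the $Gau$ map together with the additive closure of the linear code $\langle G\rangle$. The only step needing care is propagating the per-symbol relation $\varphi_G(x)^c = \varphi_G(x+(2+2u))$ to the full vector and recognising that adding $2+2u$ in each coordinate is precisely addition of $\textbf{2+2u}_{1,n}$; once that identity is fixed, the ``if'' direction is just closure under addition and the ``only if'' direction is the single evaluation at $\textbf{0}_{1,n}$.
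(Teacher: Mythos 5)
Your proposal is correct and follows essentially the same route as the paper's proof: the forward direction uses the per-symbol complement property $\varphi_G(x)^c=\varphi_G(x+(2+2u))$ together with additive closure of the submodule $\langle G\rangle$, and the converse is obtained, exactly as in the paper, by testing the closure hypothesis on the zero codeword $\textbf{0}_{1,n}$. Your only stylistic difference is isolating the vector identity $\varphi_G(\x)^c=\varphi_G(\x+\textbf{2+2u}_{1,n})$ up front, which the paper instead carries through its chain of implications.
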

\begin{proof}
For any $\x$ = $(x_1\ x_2\ \ldots\ x_n)\in\langle G\rangle$, if $\textbf{2+2u}_{1,n} = (2+2u\ 2+2u\ \ldots \ 2+2u) \in \varphi_G(\langle G\rangle)$, then
\begin{equation*}
    \begin{split}
        & \varphi_G(\x) \in \varphi_G(\langle G\rangle) \\
        \Rightarrow & \x \in \langle G\rangle \\
        \Rightarrow & \x+(\textbf{2+2u}_{1,n}) \in \langle G\rangle \\
        \Rightarrow & (x_1\ x_2\ \ldots\ x_n)+(2+2u\ 2+2u\ \ldots\ 2+2u) \in \langle G\rangle \\
        \Rightarrow & (x_1+(2+2u)\ x_2+(2+2u)\ \ldots\ x_n+(2+2u)) \in \langle G\rangle \\
        \Rightarrow & \varphi_G(x_1+(2+2u)\ x_2+(2+2u)\ \ldots\ x_n+(2+2u)) \in \varphi_G(\langle G\rangle) \\ 
        \Rightarrow & \varphi_G(x_1+(2+2u))\varphi_G(x_2+(2+2u))\ldots\varphi_G(x_n+(2+2u)) \in \varphi_G(\langle G\rangle) \\ 
        \Rightarrow & \varphi_G(x_1)^c\varphi_G(x_2)^c\ldots\varphi_G(x_n)^c\in\varphi_G(\langle G\rangle)\\
        \Rightarrow & \varphi_G(\x)^c\in\varphi_G(\langle G\rangle).
    \end{split}
\end{equation*} 
For the other side, if $\varphi_G(\x)^c\in\varphi_G(\langle G\rangle)$ for any $\varphi_G(\x)\in\varphi_G(\langle G\rangle)$ then 
\begin{equation*}
    \begin{split}
        & \varphi_G(\textbf{0}_{1,n})\in\varphi_G(\langle G\rangle)\hspace{0.5cm}\mbox{ for particular } \textbf{0}_{1,n} = (0\ 0\ \ldots\ 0)\in\langle G\rangle\\
        \Rightarrow & \varphi_G(\textbf{0}_{1,n})^c\in\varphi_G(\langle G\rangle) \\
        & \hspace{3cm} \varphi_G(\x)^c\in\varphi_G(\langle G\rangle) \mbox{ for any } \varphi_G(\x)\in\varphi_G(\langle G\rangle)\\
        \Rightarrow & \varphi_G(0)^c\varphi_G(0)^c\ldots\varphi_G(0)^c\in\varphi_G(\langle G\rangle)\\
        \Rightarrow & \varphi_G(0+(2+2u))\varphi_G(0+(2+2u))\ldots\varphi_G(0+(2+2u))\in\varphi_G(\langle G\rangle)\\
        \Rightarrow & \varphi_G(2+2u)\varphi_G(2+2u)\ldots\varphi_G(2+2u)\in\varphi_G(\langle G\rangle)\\
        \Rightarrow & \varphi_G(\textbf{2+2u}_{1,n}) \in \varphi_G(\langle G\rangle) \\
        \Rightarrow & \textbf{2+2u}_{1,n}\in\langle G\rangle 
    \end{split}
\end{equation*} 
It follows the result.
\end{proof}
\begin{example}{Example}
For the matrix 
\[
G = \left(\begin{array}{ccc}
      u & u & u
    \end{array}\right),
\]
$k$ = $1$ and $\textbf{g}_1$ = $(u\ u\ u)$. 
Observe that $u\textbf{g}_1$ = $u(u\ u\ u)$ = $(2+2u\ 2+2u\ 2+2u)$, where $u^2$ = $2+2u$. 
But, $u\textbf{g}_1\in\langle G\rangle$, and thus, $(2+2u\ 2+2u\ 2+2u)\in\langle G\rangle$

Also note $\langle G\rangle$ = $\{(0\ 0\ 0)$, $(2\ 2\ 2)$, $(u\ u\ u)$, $(2+u\ 2+u\ 2+u)$, $(2u\ 2u\ 2u)$, $(2+2u\ 2+2u\ 2+2u)$, $(3u\ 3u\ 3u)$, $(2+3u\ 2+3u\ 2+3u)\}$. 

Therefore, 
$\varphi(\langle G\rangle)$ = $\{AAAAAA$, $GGGGGG$, $TGTGTG$, $CACACA$, $CCCCCC$, $TTTTTT$, $GTGTGT$, $ACACAC\}$. 
Note that, for each $\z\in\varphi(\langle G\rangle)$, $\z^{rc}\in\varphi(\langle G\rangle)$.
\end{example}

Now, the parameter of DNA codes obtained from the codes over the ring $R$ using the $Gau$ map are calculated in Theorem \ref{distance preserving Gau} as follows. 
\begin{theorem}
There is an $(2n,M,d_H)$ DNA code $\varphi_G(\mathscr{C})$ for any $(n,M,d_G)$ code $\mathscr{C}$ over the ring $R$, where $d_H$ = $d_G$.
\label{distance preserving Gau}
\end{theorem}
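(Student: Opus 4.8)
The plan is to treat this as the $t = 2$, $\mathcal{A}_q = R$ specialization of Theorem \ref{parameter map DNA code}; but since the $Gau$ distance $d_G$ was introduced separately from the generic Hamming-induced distance $d$ of Equation (\ref{Bijective hamming distance}), I would carry out the three parameter verifications directly so that the reader need only invoke the $Gau$-specific facts, namely Theorem \ref{gau preserve theorem} and Lemma \ref{Gau distance = Hamming distance lemma}. The three quantities to pin down for $\varphi_G(\mathscr{C})$ are its block length, its size, and its minimum Hamming distance.

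First I would settle the length. Each codeword $\x = (x_1\ x_2\ \ldots\ x_n)\in\mathscr{C}\subseteq R^n$ has $n$ coordinates, and by Table \ref{Gau Map} the $Gau$ map sends every ring element to a DNA string of length $2$; hence $\varphi_G(\x) = \varphi_G(x_1)\varphi_G(x_2)\ldots\varphi_G(x_n)$ has length $2n$, so $\varphi_G(\mathscr{C})\subseteq\Sigma_{DNA}^{2n}$. Next I would settle the size: since $\varphi_G:R\rightarrow\Sigma_{DNA}^2$ is a bijection, its coordinatewise extension $\varphi_G:R^n\rightarrow\Sigma_{DNA}^{2n}$ is also a bijection, and it therefore restricts to a bijection from $\mathscr{C}$ onto $\varphi_G(\mathscr{C})$, giving $|\varphi_G(\mathscr{C})| = |\mathscr{C}| = M$.

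The central step is the minimum-distance equality. By Theorem \ref{gau preserve theorem} (equivalently Lemma \ref{Gau distance = Hamming distance lemma}) one has $H(\varphi_G(\x),\varphi_G(\y)) = d_G(\x,\y)$ for all $\x,\y\in R^n$. Because $\varphi_G$ is injective, $\x\neq\y$ holds if and only if $\varphi_G(\x)\neq\varphi_G(\y)$, so the two collections
\[
\{d_G(\x,\y):\x\neq\y,\ \x,\y\in\mathscr{C}\}
= \{H(\varphi_G(\x),\varphi_G(\y)):\varphi_G(\x)\neq\varphi_G(\y),\ \varphi_G(\x),\varphi_G(\y)\in\varphi_G(\mathscr{C})\}
\]
coincide as sets. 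Taking the minimum of each side then yields $d_H = d_G$, exactly mirroring the argument already used for Lemma \ref{$d$ = $d_H$ binary lemma}.

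I expect no serious obstacle: the theorem is a direct corollary of the distance-preserving property of the $Gau$ map. The only point requiring a line of care is the bijectivity bookkeeping in the last step, where injectivity of $\varphi_G$ must be used to guarantee that passing to images neither collapses a pair of distinct codewords nor creates one, so that the two minimization sets are genuinely identical rather than one merely containing the other.
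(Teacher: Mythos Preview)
Your proposal is correct and follows essentially the same approach as the paper: the paper's proof also verifies the length via $\varphi_G(\x)\in\Sigma_{DNA}^{2n}$, the size via bijectivity of $\varphi_G$, and the distance via Lemma \ref{Gau distance = Hamming distance lemma}. Your version is simply more explicit, in particular in spelling out the injectivity bookkeeping that transfers the minimum from $d_G$ to $d_H$ (the paper leaves this implicit, effectively relying on the argument of Lemma \ref{$d$ = $d_H$ binary lemma}).
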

\begin{proof}
The result on length of the DNA codeword follows from the fact that, for any $\x\in R^n$, $\varphi_G(\x)\in\Sigma_{DNA}^{2n}$.
Similarly, the result on the size of the DNA code follows from the fact the $Gau$ map $\varphi_G$ is bijective. 
And, the result on distance follows from Lemma \ref{Gau distance = Hamming distance lemma}.  
\end{proof}

\subsubsection{Constructions of DNA Codes}
Motivated from the $r^{th}$ order binary Reed Muller code, DNA codes are constructed from Reed Muller type code over the ring $R$.
For any integers $r$, $m$ ($0 \leq r \leq m$) and any given element $z\in R$, the generator matrix of the code $\mathcal{R}(r,m,z)$ over the ring $R$ is 
\[
G_{r,m,z} =
\begin{pmatrix}
G_{r,m-1,z} & G_{r,m-1,z}    \\ 
0 & G_{r-1,m-1,z} \\
\end{pmatrix},\ 1 \leq r \leq m-1.
\]
with
\[
 G_{m,m,z}\ = 
 \left(
 \begin{array}{c}
      G_{m-1,m,z}  \\
      0\ 0\ldots\ 0\ z
 \end{array}
 \right)
 \]
and $G_{0,m,z}$ = $\textbf{1}_{1,2^m}$.
Now, in Lemma \ref{Parameter RM code Gau}, the parameter of the $r^{th}$ order Reed Muller type code $\mathcal{R}(r,m,z)$ is calculated. 
\begin{lemma}
Consider the $r^{th}$ order Reed Muller type code $\mathcal{R}(r,m,z)$ with the $(n,M,d_G)$ parameter over the ring $R$.
Then,  
\begin{itemize}
    \item the length 
    \[
    n = 2^m
    \]
    \item the size \[
M = \left \{ \,
\begin{array}{ll}
2^{\left(4\sum_{i=0}^r\binom{m}{i}-3\sum_{i=0}^{r-1}\binom{m-1}{i}\right)} & \mbox{ if } z \in \{2u\},\\
2^{\left(4\sum_{i=0}^r\binom{m}{i}-2\sum_{i=0}^{r-1}\binom{m-1}{i}\right)} & \mbox{ if } z \in \{2,2+2u\}, \\
2^{\left(4\sum_{i=0}^r\binom{m}{i}-\sum_{i=0}^{r-1}\binom{m-1}{i}\right)} & \mbox{ if }  z\in \{u,2+u,3u,2+3u\}, \\
2^{\left(4\sum_{i=0}^r\binom{m}{i}\right)} & \mbox{ if }  z\mbox{ is a unit element of the ring }R, 
\end{array}
\right.\\
\]
and
    \item the minimum Gau distance 
    \[
d_G = \left \{ \,
\begin{array}{ll}
2^{m-r+1} & \mbox{ if } z \in \{2u,2,2+2u\},\\
2^{m-r} & \mbox{ if } z \in R\backslash\{0,2u,2,2+2u\}. 
\end{array}
\right.
\]
\end{itemize}
\label{Parameter RM code Gau}
\end{lemma}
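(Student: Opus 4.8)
The plan is to prove all three assertions together by induction on $m$, exploiting the recursive block form of the generator matrix. There are two base families, $r=0$ (where $G_{0,m,z}=\textbf{1}_{1,2^m}$) and $r=m$ (where the row $(0\ \ldots\ 0\ z)$ is appended to $G_{m-1,m,z}$), together with the Plotkin-type step for $1\le r\le m-1$, in which the top block of $G_{r,m,z}$ is $(A\ A)$ and the bottom block is $(\textbf{0}\ B)$ with $A=G_{r,m-1,z}$ and $B=G_{r-1,m-1,z}$. Writing $\mathscr{C}_1=\langle A\rangle$ and $\mathscr{C}_2=\langle B\rangle$, the assignment $(\a,\b)\mapsto(\a,\a+\b)$ is a bijection onto $\langle G_{r,m,z}\rangle$, so $\langle G_{r,m,z}\rangle=\{(\a,\a+\b):\a\in\mathscr{C}_1,\ \b\in\mathscr{C}_2\}$; this identity, together with the fact that $d_G$ splits over the two halves, drives every step.

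The length is immediate: the recursive step doubles the number of columns and neither base case disturbs the count, so $n=2^m$ follows by induction. For the size I would avoid weights entirely and instead track the \emph{type} $\{k_0,k_1,k_2,k_3\}$ of the generator matrix and invoke Proposition \ref{matrix size}. Under the block step the pivots of $A$ occupy the first half of the coordinates and those of $B$ the second half, so the types simply add; the base matrix $\textbf{1}_{1,2^m}$ contributes one free ($k_0$) row, while the appended row $(0\ \ldots\ 0\ z)$ contributes a single row whose leading coefficient is $z$, hence a $k_0$-, $k_1$-, $k_2$- or $k_3$-row according as $|\langle z\rangle|$ equals $16,8,4$ or $2$. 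Thus every such \emph{special} row carries the \emph{same} type, and a short induction via Pascal's rule shows there are exactly $\sum_{i=0}^{r-1}\binom{m-1}{i}$ of them among the $\sum_{i=0}^{r}\binom{m}{i}$ rows. Substituting these two counts into $16^{k_0}8^{k_1}4^{k_2}2^{k_3}$ reproduces the four size formulae.

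For the minimum distance I would aim at the Plotkin relation $d_G(\langle G_{r,m,z}\rangle)=\min\{2d_1,d_2\}$, where $d_1=d_G(\mathscr{C}_1)$ and $d_2=d_G(\mathscr{C}_2)$. The upper bound is clean and needs no weight argument: the pairs $(\a_1,\a_1),(\a_2,\a_2)$ realize $2d_1$ and the pairs $(\textbf{0},\b_1),(\textbf{0},\b_2)$ realize $d_2$, because the distance decomposes across the two halves. Plugging in the inductive values, for generic $z$ one has $d_1=2^{(m-1)-r}$ and $d_2=2^{(m-1)-(r-1)}$, whence $\min\{2d_1,d_2\}=2^{m-r}$, and for $z\in\{2,2u,2+2u\}$ the doubled analogues give $2^{m-r+1}$; the base values $r=0$ (distance $2^m$) and $r=m$ (distance $1$ for a unit, $2$ for $z\in\{2,2u,2+2u\}$) are checked directly, and the remaining bookkeeping reduces to routine binomial identities.

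The hard part will be the lower bound $d_G\ge\min\{2d_1,d_2\}$. For an ordinary translation-invariant weight this is the textbook Plotkin estimate applied to a single nonzero codeword $(\a,\a+\b)$, but that argument cannot be quoted here, because the Gau distance is \emph{not} translation invariant: with $w_G(\cdot):=d_G(\cdot,\textbf{0})$ one has $d_G(x,y)\neq w_G(x-y)$ in general, for example $d_G(1,\,2+3u)=1$ while $w_G\big((1)-(2+3u)\big)=w_G(3+u)=2$. Consequently the minimum distance of an $R$-linear code cannot be read off from one minimum-weight word, and in the block step the two halves $d_G(\a_1,\a_2)$ and $d_G(\a_1+\b_1,\a_2+\b_2)$ need not agree. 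The crux is therefore to show that for these submodules the minimum Gau distance nevertheless equals the minimum Gau weight — equivalently, that the shifted half is still bounded below by $d_1$ (when $\b_1=\b_2$) or by $d_2$ (otherwise). I would prove this by passing through Theorem \ref{gau preserve theorem} to the Hamming geometry of $\varphi_G(\langle G_{r,m,z}\rangle)$ and using the chain-ring structure of $R$. The same ``no codeword is supported on a single coordinate'' fact (minimum distance $\ge 2$ for $\mathcal{R}(m-1,m,z)$) is exactly what legitimizes the size identity $M(m,m)=M(m-1,m)\cdot|\langle z\rangle|$, so it is natural to carry the size and distance inductions in tandem.
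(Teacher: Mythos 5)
Your length and size arguments are sound and, modulo packaging, are the paper's own: it likewise counts $\sum_{i=0}^{r-1}\binom{m-1}{i}$ rows whose nonzero entries equal $z$ among the $\sum_{i=0}^{r}\binom{m}{i}$ rows, reads off the type $\{k_0,k_1,k_2,k_3\}$ for each class of $z$, and invokes Proposition \ref{matrix size}. Your observation that the Gau metric is not translation invariant is also correct and well chosen: $1=m_{1,0}$ and $2+3u=m_{2,0}$ share a column, so $d_G(1,2+3u)=1$, while $1-(2+3u)=3+u=m_{1,2}$ has Gau weight $2$. But that observation is precisely where your treatment of the distance stops. The lower bound $d_G(\langle G_{r,m,z}\rangle)\geq\min\{2d_1,d_2\}$ --- the only nontrivial step in your Plotkin induction --- is never established: you correctly reduce it to the claim that the ``shifted halves'' still obey $d_G(\a_1+\b,\a_2+\b)\geq d_1$ (and the analogue when $\b_1\neq\b_2$), i.e.\ a coset minimum-distance statement, and then propose to obtain it by ``passing through Theorem \ref{gau preserve theorem} and the chain-ring structure.'' The isometry, however, only relabels the problem: $\varphi_G(\langle G_{r,m,z}\rangle)$ carries no linear structure in the Hamming space, so minimum distance versus minimum weight is exactly as open there as in $R^n$, and your own counterexample shows the needed equality is not a generic fact about $d_G$. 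As written, the induction proves only the upper bound $d_G\leq\min\{2d_1,d_2\}$, so the distance claim of the lemma is not proved.

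Two further points. First, a slip in your size bookkeeping: under the Gau metric, minimum distance $\geq 2$ for $\mathcal{R}(m-1,m,z)$ does \emph{not} mean ``no codeword is supported on a single coordinate'' --- the word $(0\ \ldots\ 0\ 2)$ has Gau weight $2$. What legitimizes $M(m,m)=M(m-1,m)\cdot|\langle z\rangle|$, i.e.\ $\langle G_{m-1,m,z}\rangle\cap\langle(0\ \ldots\ 0\ z)\rangle=\{0\}$, is a minimum \emph{symbol} (Hamming-over-$R$) weight of at least $2$, a quantity your tandem induction never tracks; you would have to carry it explicitly. Second, for orientation: the paper's proof takes a different and more direct route for the distance, without any Plotkin recursion or distance-equals-weight reduction. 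It argues from the symmetry of $G_{r,m,z}$ that any two distinct codewords differ in at least $2^{m-r}$ coordinates of $R^{2^m}$, that each differing coordinate contributes at least $d_z=\min\{d_G(x,y):x\in R,\ y\in\langle z\rangle\}$ to the Gau distance (so $d_G\geq 2^{m-r}d_z$, with $d_z=2$ for $z\in\{2,2u,2+2u\}$ and $d_z=1$ otherwise), and then exhibits an explicit codeword pair attaining $2^{m-r}d_z$. That argument is itself terse, but it shows the intended shape of the missing step: if you keep your inductive architecture, the coset bound above is the lemma you must actually prove, and it cannot be waved through.
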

\begin{proof}
For the generator matrix $G_{r,m,z}$, if we denote the number of columns in the matrix $G_{r,m,z}$ by $\ell(G_{r,m,z})$ then, from the generator matrix $G_{r,m,z}$, $\ell(G_{r,m,z})$ = $2\ell(G_{r,m-1,z})$ with the condition $\ell(G_{0,m,z})$ = $2^m$ and $\ell(G_{m,m,z})$ = $\ell(G_{m-1,m,z})$.
after solving the difference equation, we have $\ell(G_{r,m,z})$ = $2^m$, and it follows the result on size of the code $\mathcal{R}(r,m,z)$. 
Note, the total number of rows of the matrix $G_{r,m,z}$ is $\sum_{i=0}^r\binom{m}{i}$.
Also, all the nonzero entry of any given row of the generator matrix $G_{r,m,z}$ are same and it is either $1$ or the element $z$. 
From recurrence, one can calculate that the total number of rows containing the element $z$ is $\sum_{i=0}^{r-1}\binom{m-1}{i}$.
Thus, the matrix $G_{r,m,z}$ is of
\begin{itemize}
    \item type $\left\{\sum_{i=0}^r\binom{m}{i}-\sum_{i=0}^{r-1}\binom{m-1}{i},0,0,\sum_{i=0}^{r-1}\binom{m-1}{i}\right\}$ for $z \in \{2u\}$,
    \item type $\left\{\sum_{i=0}^r\binom{m}{i}-\sum_{i=0}^{r-1}\binom{m-1}{i},0,\sum_{i=0}^{r-1}\binom{m-1}{i},0\right\}$ for $z \in \{2,2+2u\}$, 
    \item type $\left\{\sum_{i=0}^r\binom{m}{i}-\sum_{i=0}^{r-1}\binom{m-1}{i},\sum_{i=0}^{r-1}\binom{m-1}{i},0,0\right\}$ for $z \in \{u,2+u,3u,2+3u\}$, and
    \item type $\left\{\sum_{i=0}^r\binom{m}{i},0,0,0\right\}$ for any unit element $z$ in the ring $R$. 
\end{itemize}  
Hence, the result on code size holds from Proposition \ref{matrix size}.
Now, from symmetry of the matrix $G_{r,m,z}$, any two codewords in $\mathcal{R}(r,m,z)$ are differ at least at $2^{m-r}$ positions. 
Therefore, if $d_z$ = $\min\{d_G(x,y):x\in R\mbox{ and }y\in\langle z\rangle\}$ then the minimum $Gau$ distance $d_G\geq2^{m-r}d_z$, since $d_G(x,y)\geq H(x,y)$ for any $x,y\in R$.
Consider two codewords $\textbf{0}_{1,2^m}$, all zero codeword, and $\textbf{0}_{1,2^m-r}\ \z_{1,r})$, last $r$ positions are $z$ and remaining are zero, in $\mathcal{R}(r,m,z)$.
Then, the $Gau$ distance between these two codewords are $2^{m-r}d_z$, since $d_z\geq1$. 
Thus, from the bound $d_G\geq2^{m-r}d_z$, $d_G$ = $2^{m-r}d_z$. 
Hence, it follows the result on distance for various $z$.
\end{proof}

Now, the properties of the $r^{th}$ order Reed Muller type code $\mathcal{R}(r,m,z)$ is given in Lemma \ref{R and RC RM code Gau}. 
\begin{lemma}
The $r^{th}$ order Reed Muller type code $\mathcal{R}(r,m,z)$ with the generator matrix $G_{r,m,z}$ satisfies
\begin{itemize}
    \item $\textbf{2+2u}_{1,2^m}\in\langle G_{r,m,z}\rangle$, and
    \item $\textbf{g}_i^r\in\langle G_{r,m,z}\rangle$ for each row $\textbf{g}_i$ ($i=1,2,\ldots,k$).
\end{itemize}
\label{R and RC RM code Gau}
\end{lemma}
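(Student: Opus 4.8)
The plan is to prove the two assertions separately. For the first, I would observe that the all-$1$ vector is always literally the top row of $G_{r,m,z}$: indeed $G_{0,m,z}=\textbf{1}_{1,2^m}$, and in both the generic recursion and the $r=m$ recursion the first row of $G_{r,m,z}$ is obtained either by juxtaposing the first row of $G_{r,m-1,z}$ with itself or by copying down the first row of $G_{m-1,m,z}$, so a straightforward induction on $m$ gives $\textbf{1}_{1,2^m}\in\langle G_{r,m,z}\rangle$. Since $\langle G_{r,m,z}\rangle$ is an $R$-module and $2+2u\in R$, multiplying this row by the scalar $2+2u$ yields $\textbf{2+2u}_{1,2^m}=(2+2u)\,\textbf{1}_{1,2^m}\in\langle G_{r,m,z}\rangle$, which is the first bullet.

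For the second bullet I would first record a Plotkin-type description of the module. Writing a length-$2^m$ vector as $(\textbf{c}\mid\textbf{c}')$ with halves of length $2^{m-1}$, the generic recursion gives
\[
\langle G_{r,m,z}\rangle=\{(\textbf{c}\mid\textbf{c}+\textbf{d}):\textbf{c}\in\langle G_{r,m-1,z}\rangle,\ \textbf{d}\in\langle G_{r-1,m-1,z}\rangle\}
\]
for $1\le r\le m-1$; and I would check that the special matrix $G_{m,m,z}$ admits the same description with both blocks equal to $\langle G_{m-1,m-1,z}\rangle$, because the lower block $(0\mid G_{m-2,m-1,z})$ together with the adjoined row $(0\ \ldots\ 0\ z)$ reassembles exactly $(0\mid G_{m-1,m-1,z})$. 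With this in hand I would establish the nesting $\langle G_{r-1,m,z}\rangle\subseteq\langle G_{r,m,z}\rangle$ by induction on $m$: an element $(\textbf{c}'\mid\textbf{c}'+\textbf{d}')$ of the smaller module has $\textbf{c}'\in\langle G_{r-1,m-1,z}\rangle\subseteq\langle G_{r,m-1,z}\rangle$ and $\textbf{d}'\in\langle G_{r-2,m-1,z}\rangle\subseteq\langle G_{r-1,m-1,z}\rangle$ by the inductive hypothesis, so it already lies in the larger module.

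The core step is to show that $\langle G_{r,m,z}\rangle$ is invariant under the coordinate reversal $\x\mapsto\x^r$; since reversal is $R$-linear, this is equivalent to the claim that $\textbf{g}_i^r\in\langle G_{r,m,z}\rangle$ for every row. I would prove invariance by induction on $m$, using that reversing a concatenation swaps and individually reverses the halves, i.e. $(\textbf{c}\mid\textbf{c}')^r=((\textbf{c}')^r\mid\textbf{c}^r)$. Applying this to a codeword $(\textbf{c}\mid\textbf{c}+\textbf{d})$ produces $(\textbf{c}^r+\textbf{d}^r\mid\textbf{c}^r)$, which I want to rewrite in the form $(\textbf{c}''\mid\textbf{c}''+\textbf{d}'')$; the forced choice is $\textbf{c}''=\textbf{c}^r+\textbf{d}^r$ and $\textbf{d}''=-\textbf{d}^r$. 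Here the inductive hypothesis gives $\textbf{c}^r\in\langle G_{r,m-1,z}\rangle$ and $\textbf{d}^r\in\langle G_{r-1,m-1,z}\rangle$, and the nesting lemma upgrades $\textbf{d}^r$ to an element of $\langle G_{r,m-1,z}\rangle$ so that the sum $\textbf{c}''$ lands in $\langle G_{r,m-1,z}\rangle$ as required; the same hypotheses handle the $r=m$ case through the Plotkin description above.

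The main obstacle is precisely this left/right asymmetry: reversal sends the lower generators $(\textbf{0}\mid\textbf{b})$ to $(\textbf{b}^r\mid\textbf{0})$, whose support has migrated into the first half, so membership can no longer be read off from a single block. The nesting inclusion $\langle G_{r-1,m-1,z}\rangle\subseteq\langle G_{r,m-1,z}\rangle$ is exactly what is needed to reabsorb this shifted part, which is why I would prove nesting before attempting the reversal invariance. The base cases ($r=0$, where the module is $R\,\textbf{1}_{1,2^m}$ and reversal fixes constant vectors, together with small $m$) are routine.
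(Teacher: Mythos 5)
Your proof is correct, and on the second bullet it takes a genuinely more substantive route than the paper. For the first bullet the two arguments coincide: the paper likewise notes that the top row of $G_{r,m,z}$ is $\textbf{1}_{1,2^m}$ and multiplies by the scalar $2+2u$. For the second bullet, however, the paper's entire proof is the sentence that ``from symmetry of the matrix $G_{r,m,z}$'' the reverses of the rows lie in $\langle G_{r,m,z}\rangle$ --- an assertion, not an argument, and not an obvious one: the rows are generally not palindromes (e.g.\ the row $(0\ z\ 0\ z)$ of $G_{1,2,z}$ reverses to $(z\ 0\ z\ 0)$, which lies in the module only via the nontrivial combination $z\,\textbf{1}_{1,4}+3\,(0\ z\ 0\ z)$). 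Your proof supplies exactly the missing machinery: the Plotkin-type description $\langle G_{r,m,z}\rangle=\{(\textbf{c}\mid\textbf{c}+\textbf{d})\}$, the nesting $\langle G_{r-1,m,z}\rangle\subseteq\langle G_{r,m,z}\rangle$ to reabsorb the reversed lower generators whose support migrates into the first half, and the rewriting $(\textbf{c}\mid\textbf{c}+\textbf{d})^r=(\textbf{c}''\mid\textbf{c}''+\textbf{d}'')$ with $\textbf{c}''=\textbf{c}^r+\textbf{d}^r$, $\textbf{d}''=-\textbf{d}^r=3\textbf{d}^r$. This buys more than rigor: you prove the stronger statement that the whole module is closed under reversal (by $R$-linearity of reversal this implies, and is implied by, the row condition), which is precisely what the paper's Lemma~\ref{closure reverse Gau} ultimately consumes, and you handle the $r=m$ case --- which the paper's recursion treats separately and its proof never mentions --- by reassembling the lower rows into $(0\mid G_{m-1,m-1,z})$. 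One small caveat: that reassembly presupposes $m\geq 2$; at $m=1$ the two blocks degenerate to $R$ and $\langle z\rangle$ rather than two copies of $\langle G_{0,0,z}\rangle$, but reversal invariance there is the one-line check $(c\mid c+d)^r=(c+d\mid c)$ with $c''=c+d$, $d''=-d\in\langle z\rangle$, consistent with your deferral of small $m$ to routine base cases.
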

\begin{proof}
For any code $\mathcal{R}(r,m,z)$ with the generator matrix $G_{r,m,z}$, the first row of $G_{r,m,z}$ is all one string, $i.e.$, $\textbf{g}_1$ = $\textbf{1}_{1,2^m}$, and therefore, the string $\textbf{1}_{1,2^m}\in\langle G_{r,m,z}\rangle$.
Thus, from closure property, $(2+2u)\textbf{1}_{1,2^m}\in\langle G_{r,m,z}\rangle$, and thus, $\textbf{\textit{2+2u}}_{1,2^m}\in\langle G_{r,m,z}\rangle$.
It follows the first part of the result.
From symmetry of the matrix $G_{r,m,z}$, it is easy to observe that, for each row $\textbf{g}_i$ ($i=1,2,\ldots,k$) if the matrix $G_{r,m,z}$, the reverse $\textbf{g}_i^r$ belongs to $\langle G_{r,m,z}\rangle$. 
Hence, it follows the result.
\end{proof}

Now, the properties of the DNA code obtained from the $r^{th}$ order Reed Muller type code $\mathcal{R}(r,m,z)$ is given in Theorem \ref{gen_r_rm gau}. 
\begin{theorem}
For any $(n,M,d_H)$ DNA code $\varphi_G(\mathcal{R}(r,m,z))$,  
\begin{itemize}
    \item Length: 
    \[
    n = 2^{m+1}
    \]
    \item Size: 
    \[
M = \left \{ \,
\begin{array}{ll}
2^{\left(4\sum_{i=0}^r\binom{m}{i}-3\sum_{i=0}^{r-1}\binom{m-1}{i}\right)} & \mbox{ if } z \in \{2u\},\\
2^{\left(4\sum_{i=0}^r\binom{m}{i}-2\sum_{i=0}^{r-1}\binom{m-1}{i}\right)} & \mbox{ if } z \in \{2,2+2u\}, \\
2^{\left(4\sum_{i=0}^r\binom{m}{i}-\sum_{i=0}^{r-1}\binom{m-1}{i}\right)} & \mbox{ if }  z\in \{u,2+u,3u,2+3u\}, \\
2^{\left(4\sum_{i=0}^r\binom{m}{i}\right)} & \mbox{ if }  z\mbox{ is a unit element of the ring }R, 
\end{array}
\right.\\
\]
    \item Minimum Hamming distance: 
    \[
d_H = \left \{ \,
\begin{array}{ll}
2^{m-r+1} & \mbox{ if } z \in \{2,2u,2+2u\},\\
2^{m-r} & \mbox{ if }  z\in R\backslash\{0,2,2u,2+2u\}.
\end{array}
\right.
\]
\end{itemize}
Further, the DNA code $\varphi_G(\mathcal{R}(r,m,z))$ is closed with R and RC DNA strings. 
\label{gen_r_rm gau}
\end{theorem}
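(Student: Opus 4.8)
The plan is to assemble the statement from results already established for the code $\mathcal{R}(r,m,z)$ over the ring $R$ and for the $Gau$ map. The three numerical claims (length, size, and minimum Hamming distance) will follow by feeding the parameters of $\mathcal{R}(r,m,z)$ recorded in Lemma \ref{Parameter RM code Gau} into the distance-preserving correspondence of Theorem \ref{distance preserving Gau}, while the two closure claims will follow from the structural properties of the generator matrix $G_{r,m,z}$ supplied by Lemma \ref{R and RC RM code Gau} together with the closure criteria of Lemma \ref{closure reverse Gau} and Lemma \ref{closure reverse comp Gau}.

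First I would invoke Lemma \ref{Parameter RM code Gau}, by which $\mathcal{R}(r,m,z)$ is an $(n,M,d_G)$ code over $R$ with $n = 2^m$, with $M$ given by the stated case split on $z$, and with $d_G = 2^{m-r+1}$ for $z\in\{2,2u,2+2u\}$ and $d_G = 2^{m-r}$ for $z\in R\setminus\{0,2,2u,2+2u\}$. Applying Theorem \ref{distance preserving Gau} to $\mathscr{C}=\mathcal{R}(r,m,z)$ then produces a $(2n,M,d_H)$ DNA code $\varphi_G(\mathcal{R}(r,m,z))$ with $d_H = d_G$. Since $2n = 2\cdot 2^m = 2^{m+1}$, the length is as claimed; since $\varphi_G$ is a bijection the size $M$ is carried over unchanged; and since the distance is preserved, the case split for $d_H$ is exactly the one for $d_G$, so the stated distance formula holds.

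For the closure claims I would use Lemma \ref{R and RC RM code Gau}, which guarantees both that $\textbf{2+2u}_{1,2^m}\in\langle G_{r,m,z}\rangle$ and that $\textbf{g}_i^r\in\langle G_{r,m,z}\rangle$ for every row $\textbf{g}_i$ of $G_{r,m,z}$. The latter fact is precisely the hypothesis of Lemma \ref{closure reverse Gau}, which then yields closure of $\varphi_G(\mathcal{R}(r,m,z))$ under the R operation. The former fact is precisely the hypothesis of Lemma \ref{closure reverse comp Gau}, which yields closure under the complement operation. Finally, combining R-closure and complement-closure through Lemma \ref{r_rc_exist} gives, for each DNA codeword $\z$, that $\z^{rc}\in\varphi_G(\mathcal{R}(r,m,z))$, that is, closure under RC. This establishes that $\varphi_G(\mathcal{R}(r,m,z))$ is closed with R and RC DNA strings.

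Since the genuine content was already discharged inside Lemma \ref{Parameter RM code Gau} and Lemma \ref{R and RC RM code Gau}, there is no single hard computational step here; the proof is mainly a matter of matching each hypothesis to the correct earlier result. The two points that still require attention are, first, verifying that the distance case split written as $z\in\{2u,2,2+2u\}$ in Lemma \ref{Parameter RM code Gau} and as $z\in\{2,2u,2+2u\}$ in the theorem describe the same subset of $R$, so that $d_H=d_G$ transfers verbatim, and second, noting that RC-closure is not established directly but must be deduced from R-closure and complement-closure jointly via Lemma \ref{r_rc_exist}.
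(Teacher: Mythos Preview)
Your proposal is correct and follows essentially the same route as the paper: parameters via Lemma \ref{Parameter RM code Gau} and Theorem \ref{distance preserving Gau}, and the closure properties via Lemma \ref{R and RC RM code Gau} combined with Lemma \ref{closure reverse Gau} and Lemma \ref{closure reverse comp Gau}. Your explicit invocation of Lemma \ref{r_rc_exist} to pass from R- and complement-closure to RC-closure is a welcome clarification that the paper leaves implicit.
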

\begin{proof}
The result on parameters of the DNA code $\varphi_G(\mathcal{R}(r,m,z))$ follows from Lemma \ref{Parameter RM code Gau} and Theorem \ref{distance preserving Gau}.
The result on reverse and reverse-complement properties follow from Lemma \ref{closure reverse Gau}, Lemma \ref{closure reverse comp Gau} and Lemma \ref{R and RC RM code Gau}. 
\end{proof}
From Theorem \ref{gen_r_rm gau}, the DNA code $\varphi_G(\mathcal{R}(r,m,z))$ satisfies
\begin{itemize}
\item Hamming constraint,
\item R constraint, and
\item RC constraint.
\end{itemize}


\subsection{DNA Codes from the Bijective Map over the Quinary Field}\label{sec: Z5}
For $t$ = $2$, consider $\mathscr{D}$ = $\{AA,AC,CA,CC,TC\}\subset\Sigma_{DNA}^2$ and $\mathcal{A}_q$ = $\mathbb{Z}_5$. Then, the map as given in (\ref{Bijective Map}) and the distance as shown in (\ref{Bijective hamming distance}) are the map and the distance discussed in \cite{9214530}.
We denote the set $\{AA,AC,CA,CC,TC\}$ by $\Sigma$ in Section \ref{sec: Z5}.

\subsubsection{The Bijective Map}
Consider a bijective map $\varphi:\mathbb{Z}_5\rightarrow\Sigma$ such that Table \ref{bijective Map on Z5} holds.
\begin{table}
\caption{The Bijective Map.}
\centering
\begin{tabular}{|p{2.5cm}||p{1cm}|p{1cm}|p{1cm}|p{1cm}|p{1cm}|}
\hline
Field element $x$    & $0$  & $1$    & $2$    & $3$    & $4$   \\ \hline
DNA image $\varphi(x)$ & $CC$ & $CA$   & $AC$   & $AA$   & $TC$   \\ \hline
\end{tabular}
\label{bijective Map on Z5}
\end{table}

For any $\x$ = $(x_1\ x_2\ \ldots\ x_n)\in\mathbb{Z}_5^n$, consider $\varphi(\x) = \varphi(x_1)\varphi(x_2)\ldots \varphi(x_n)\in\Sigma^n$. 
For any $\mathscr{C}\subseteq\mathbb{Z}_5^n$, $\varphi(\mathscr{C})=\{\varphi(\x):\x\in\mathscr{C}\}$.
Now, the properties of the map $\varphi$ as following.
\begin{lemma}
	Any DNA string defined over $\Sigma$ does not from any secondary structure with stems of length more than two. 
	\label{Sec Str property lemma Z5}
\end{lemma}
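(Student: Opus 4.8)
The plan is to reduce the statement to a short counting argument via Proposition~\ref{sec str sec comp proposition}, which guarantees that a stem of length $\ell$ forces two disjoint length-$\ell$ substrings $\x(i,i+\ell-1)$ and $\x(j,j+\ell-1)$ with $\x(j,j+\ell-1)=\x(i,i+\ell-1)^s$ or $\x(j,j+\ell-1)=\x(i,i+\ell-1)^{rs}$. First I would record the crucial observation that every DNA string built from the blocks of $\Sigma=\{AA,AC,CA,CC,TC\}$ uses only the nucleotides $A$, $C$, $T$ and never $G$. Reading off the interaction-energy rule of Equation~(\ref{interaction energy formula}), equivalently the secondary-complement relations of Definition~\ref{sec-comp reverse-sec-comp definitions}, the only binding pair available inside $\{A,C,T\}$ is $\{A,T\}$, since the remaining binding pairs $\{G,C\}$ and $\{G,T\}$ both require the absent letter $G$. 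In particular $C$ can bind to nothing over $\Sigma$, so in either relation neither of the two substrings may contain a $C$, and wherever one substring carries an $A$ the other must carry a $T$, and conversely.

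Next I would extract the single structural feature of $\Sigma$ that does the real work: the letter $T$ occurs only as the left half of the block $TC$, so in any word over $\Sigma$ every $T$ is immediately followed by a $C$. Hence, inside any $C$-free substring a $T$ can appear only in the final position, and therefore a $C$-free substring contains \emph{at most one} $T$.

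Finally I would combine these facts. Assume a string over $\Sigma$ has a stem of length $\ell$ and take the substrings $\x(i,i+\ell-1)$ and $\x(j,j+\ell-1)$ supplied by Proposition~\ref{sec str sec comp proposition}. By the first paragraph both are $C$-free and are obtained from one another by exchanging $A$ and $T$, so the number of $A$'s in $\x(i,i+\ell-1)$ equals the number of $T$'s in $\x(j,j+\ell-1)$. Since $\x(j,j+\ell-1)$ is $C$-free it has at most one $T$, whence $\x(i,i+\ell-1)$ has at most one $A$; since $\x(i,i+\ell-1)$ is itself $C$-free it has at most one $T$; and as it contains no other letters, $\ell\le 2$, contradicting $\ell>2$. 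The reverse-secondary-complement case needs no separate treatment because reversal leaves the letter counts untouched.

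The argument is elementary once set up, so I expect the only genuinely delicate point to be the structural claim that $T$ can sit only at the end of a $C$-free substring; everything downstream is bookkeeping. I would therefore justify that claim cleanly from the single fact that $T$ lives exclusively inside the block $TC$, and confirm that the multivaluedness of the secondary complement on $T$ and $G$ in Definition~\ref{sec-comp reverse-sec-comp definitions} does no harm, since over $\{A,C,T\}$ the relations still force the rigid $A\leftrightarrow T$ exchange used in the count.
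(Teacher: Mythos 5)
Your proof is correct, but it takes a genuinely different route from the paper's. The paper works at the level of length-$3$ windows: for any $\x\in\Sigma^n$ it collects all $3$-letter substrings into a fixed finite set $S\subseteq\Sigma_{DNA}^3$, verifies by inspection that for no $z\in S$ do $z^s$ or $z^{rs}$ lie in $S$, and then concludes via Remark~\ref{Sec Str existence remark} (the contrapositive of Proposition~\ref{sec str sec comp proposition}, which you invoke directly). You replace this enumeration with a structural counting argument: absence of $G$ in words over $\Sigma$ forces both stem substrings to be $C$-free with a rigid $A\leftrightarrow T$ exchange (you correctly neutralize the multivaluedness of $T^s$ in Definition~\ref{sec-comp reverse-sec-comp definitions}, since over $\{A,C,T\}$ only $T^s=A$ survives), and the observation that $T$ occurs in $\Sigma$ only as the left half of $TC$ --- so every $T$ in the string is followed by $C$, and a $C$-free substring carries at most one $T$ --- caps each side of a stem at one $A$ plus one $T$, forcing $\ell\le 2$. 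Your handling of the $rs$ case by invariance of letter counts under reversal is also sound. Each approach buys something: the paper's finite check is mechanical, easily verified computationally, and transfers to any other block alphabet by simply re-running the check, whereas your argument isolates exactly which features of $\Sigma$ make the lemma true (no $G$; $T$ confined to the block $TC$) and so explains the design principle rather than certifying a single instance. One point worth making explicit when you write it up: the $C$-freeness of the two sides has asymmetric justifications --- a $C$ in $\x(i,i+\ell-1)$ would force a $G$ at the matching position of $\x(j,j+\ell-1)$, while a $C$ in $\x(j,j+\ell-1)$ could only arise as the image $G^s$ of an absent $G$; your phrase ``neither of the two substrings may contain a $C$'' covers both, but the one-line reasons differ.
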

\begin{proof}
For $x_i\in\Sigma_{DNA}$ $(i=1,2,\ldots,2n)$ a DNA string $\x$ = $x_1x_2\ldots x_{2n}\in\Sigma^n$, consider a set $S_\x$ = $\{x_ix_{i+1}x_{i+2}:\mbox{ for }i=1,2,\ldots,2n-2\}\subseteq\Sigma_{DNA}^3$.
Then, for any $\x$ in $\Sigma^n$, $S_\x\subseteq \left\{AAA,AAC,ACA,CAA,CCA,CAC,ACC,CCC,TCA,TCC,TCT,ATC\right.$,  $\left.CTC,AAT,ACT,CAT,CCT,TCT\right\}$ = $S$. 
Now one can easily observe that, for any $z\in S$, $z^s$ and $z^{rs}$ are not belong to the set $S$.
Since any sub-string of length $3$ bps does not have its secondary-complement and reverse-secondary-complement DNA sub-strings in the DNA string, therefore, the DNA string is free from  secondary-complement and reverse-secondary-complement DNA sub-strings of length more than $2$ bps.
Thus, from Remark \ref{Sec Str existence remark}, the DNA string is independent from secondary structures of stem length more than two. 
\end{proof}
\begin{note}
In \cite{9214530}, authors have considered only reverse-secondary-complement DNA sub-strings to analysis secondary structures for any DNA string, and thus, in \cite[Lemma 3]{9214530}, they have concluded that any DNA string in $\Sigma^n$ is free from secondary structures of stem length more than one. 
\end{note}

\subsubsection{The Distance}
For any $x$ and $y$ in $\mathbb{Z}_5$, we define the distance 
\begin{equation}
    \begin{split}
        &d:\mathbb{Z}_5\times\mathbb{Z}_5\rightarrow\mathbb{R} \\ 
        &d(x,y) = H(\varphi(x),\varphi(y)).
    \end{split}
    \label{distance for Z5}
\end{equation}
Now, an isometry between $\mathbb{Z}_5^n$ and $\Sigma^n$ is given in Lemma \ref{bijective isometry preposition for Z5}.
\begin{lemma}
    The map $\varphi:(\mathbb{Z}_5^n,d) \rightarrow (\Sigma^n, d_H)$ is an isometry. 
\label{bijective isometry preposition for Z5}
\end{lemma}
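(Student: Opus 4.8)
The plan is to recognize this lemma as a direct specialization of the general isometry result in Lemma \ref{bijective isometry preposition}, instantiated with $\mathcal{A}_q = \mathbb{Z}_5$ (so $q = 5$), block length $t = 2$, and image alphabet $\mathscr{D} = \Sigma$. Since $\varphi:\mathbb{Z}_5\rightarrow\Sigma$ is bijective and the coordinate distance $d$ on $\mathbb{Z}_5$ is defined precisely by $d(x,y) = H(\varphi(x),\varphi(y))$ in Equation (\ref{distance for Z5}), the whole argument reduces to showing that this per-symbol identity lifts to strings of length $n$.

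First I would invoke the defining Equation (\ref{distance for Z5}) to record that $d(x,y) = H(\varphi(x),\varphi(y))$ for every pair $x,y\in\mathbb{Z}_5$. Next I would use the additive extension $d(\x,\y) = \sum_{i=1}^n d(x_i,y_i)$ of the distance to $\mathbb{Z}_5^n$, exactly as set up in Equation (\ref{d(x,y)=H(x,y) equation}) for the general map.

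The crux is the block decomposition of the Hamming distance. Since $\varphi(\x) = \varphi(x_1)\varphi(x_2)\cdots\varphi(x_n)$ is the concatenation of the length-$2$ images, and the Hamming distance simply counts disagreeing positions, the mismatches of $\varphi(\x)$ against $\varphi(\y)$ split as a disjoint sum over the $n$ blocks, giving $H(\varphi(\x),\varphi(\y)) = \sum_{i=1}^n H(\varphi(x_i),\varphi(y_i))$. Chaining the three equalities then yields $d(\x,\y) = \sum_{i=1}^n H(\varphi(x_i),\varphi(y_i)) = H(\varphi(\x),\varphi(\y))$, which is exactly the isometry claim.

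The only point requiring care---and it is mild---is confirming that no cross-block overlap occurs in the Hamming count. This holds because the image alphabet $\Sigma\subset\Sigma_{DNA}^2$ has fixed block length $t = 2$, so the $i$-th block of $\varphi(\x)$ occupies exactly positions $2i-1$ and $2i$ and is compared only against the corresponding block of $\varphi(\y)$; thus a mismatch in block $i$ never interferes with block $j\neq i$. With the block structure fixed in this way, the computation is routine and the result follows precisely as in the proof of Lemma \ref{bijective isometry preposition}.
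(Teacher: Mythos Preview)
Your proposal is correct and takes essentially the same approach as the paper: the paper's proof is a one-line invocation of Lemma \ref{bijective isometry preposition}, and you identify this lemma as the relevant general result and unpack the same block-decomposition argument that underlies it.
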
 
\begin{proof}
The result follows from Lemma \ref{bijective isometry preposition}. 
\end{proof}
From Lemma \ref{bijective isometry preposition for Z5}, one can calculate the parameters of constricted DNA codes as given in Theorem \ref{parameter map DNA code for Z5}. 
\begin{theorem}
if $\mathscr{C}$ is an $(n,M,d)$ code over $\mathbb{Z}_5$ then there exists a DNA code $\varphi(\mathscr{C})$ with the parameter $(2n,M,d_H)$, where $d$ = $d_H$. 
\label{parameter map DNA code for Z5}
\end{theorem}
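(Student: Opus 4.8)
The plan is to specialize the general bijective-map framework of Theorem \ref{parameter map DNA code} to the present setting, where $\mathcal{A}_q = \mathbb{Z}_5$, the block length is $t = 2$, and the image set is $\mathscr{D} = \Sigma = \{AA, AC, CA, CC, TC\}$. Since the three parameters of the DNA code $\varphi(\mathscr{C})$ are governed independently, I would verify them one at a time.

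First, for the length, the bijective map $\varphi:\mathbb{Z}_5\rightarrow\Sigma$ sends each field element to a DNA string of length $2$, so that for $\x = (x_1\ x_2\ \ldots\ x_n)\in\mathbb{Z}_5^n$ the image $\varphi(\x) = \varphi(x_1)\varphi(x_2)\ldots\varphi(x_n)$ lies in $\Sigma^n\subseteq\Sigma_{DNA}^{2n}$, giving DNA codewords of length $2n$. Second, for the size, the bijectivity of $\varphi$ on the alphabet makes the induced map $\varphi:\mathbb{Z}_5^n\rightarrow\Sigma^n$ a bijection as well, so distinct codewords of $\mathscr{C}$ map to distinct DNA codewords and $\lvert\varphi(\mathscr{C})\rvert = \lvert\mathscr{C}\rvert = M$.

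Third, for the minimum distance, I would invoke Lemma \ref{bijective isometry preposition for Z5}, which asserts that $\varphi:(\mathbb{Z}_5^n, d)\rightarrow(\Sigma^n, d_H)$ is an isometry, that is, $d(\x,\y) = H(\varphi(\x),\varphi(\y))$ for all $\x,\y\in\mathbb{Z}_5^n$. Taking the minimum of the left-hand side over distinct pairs $\x\neq\y$ in $\mathscr{C}$ and of the right-hand side over the matched distinct pairs in $\varphi(\mathscr{C})$ then yields $d = d_H$, exactly in the manner already used for Lemma \ref{bijective isometry preposition}.

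There is essentially no obstacle here, as every ingredient has already been established in the general development; the theorem is a bookkeeping consequence of the isometry. The only point deserving a moment of care is that the minimum defining $d$ and the minimum defining $d_H$ range over \emph{matched} index sets: one must record that the bijectivity of $\varphi$ places the distinct pairs of $\mathscr{C}$ in one-to-one correspondence with the distinct pairs of $\varphi(\mathscr{C})$, so the two minima are taken over corresponding collections of equal values and therefore coincide.
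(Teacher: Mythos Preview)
Your proposal is correct and follows essentially the same route as the paper: the paper's proof is a one-line invocation of Theorem~\ref{parameter map DNA code} (the general bijective-map result with $t=2$), and your argument simply unpacks that invocation by checking length, size, and distance in turn, with the distance step handled via the isometry Lemma~\ref{bijective isometry preposition for Z5}.
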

\begin{proof}
The proof of the theorem follows from Theorem \ref{parameter map DNA code}.
\end{proof}
 A distance property on DNA strings defined over $\Sigma$ is given in Lemma \ref{complement lemma Z5}.
\begin{lemma}
	For any DNA strings $\x$ and $\y$ each of length $n$ defined over $\Sigma$, the Hamming distance $H(\x,\y^c)\geq n$. 
	\label{complement lemma Z5}
\end{lemma}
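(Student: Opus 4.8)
The plan is to exploit the rigid structure of the block alphabet $\Sigma = \{AA,AC,CA,CC,TC\}$. Since $\x,\y\in\Sigma^n$, both are DNA strings of $2n$ nucleotides over $\Sigma_{DNA}$, and I would view each as a concatenation of $n$ consecutive $2$-nucleotide blocks, writing $\x=\x_1\x_2\ldots\x_n$ and $\y=\y_1\y_2\ldots\y_n$ with $\x_i,\y_i\in\Sigma$. Because Watson--Crick complementation acts position-by-position on nucleotides, it commutes with this block decomposition, so the $i$-th block of $\y^c$ is exactly $(\y_i)^c$. Hence $H(\x,\y^c)=\sum_{i=1}^n H(\x_i,(\y_i)^c)$, and it suffices to prove the per-block bound $H(\x_i,(\y_i)^c)\geq 1$ for every $i$.

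The key observation I would isolate is that the \emph{second} nucleotide of every element of $\Sigma$ lies in $\{A,C\}$: inspecting $AA,AC,CA,CC,TC$ shows their trailing symbols are $A,C,A,C,C$ respectively. Under complementation the set $\{A,C\}$ maps to $\{T,G\}$ (since $A^c=T$ and $C^c=G$), and these two sets are disjoint. Consequently, for each block $i$ the trailing nucleotide of $\x_i$ lies in $\{A,C\}$ while the trailing nucleotide of $(\y_i)^c$ lies in $\{T,G\}$, so these two symbols necessarily disagree, which already forces $H(\x_i,(\y_i)^c)\geq 1$ regardless of the first coordinate.

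Summing the per-block bound over $i=1,2,\ldots,n$ then gives $H(\x,\y^c)\geq n$, which is the claim. There is essentially no hard step here; the only thing to verify carefully is the disjointness $\{A,C\}\cap\{T,G\}=\emptyset$ together with the fact that every block of $\Sigma$ terminates in $A$ or $C$, which pins the guaranteed mismatch in the even-indexed coordinate of each block. I would present the argument coordinate-wise in this way rather than through a case analysis on all $25$ ordered block pairs, since the single structural observation about the trailing coordinate makes the brute-force enumeration unnecessary and keeps the proof short.
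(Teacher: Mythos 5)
Your proposal is correct and takes essentially the same route as the paper: decompose $\x$ and $\y$ into $n$ blocks over $\Sigma$, establish the per-block bound $H(x,y^c)\geq 1$ for $x,y\in\Sigma$, and sum over the blocks. The only difference is that the paper simply asserts the per-block inequality by inspection of $\Sigma$, whereas you pin down the structural reason --- every element of $\Sigma$ has its trailing nucleotide in $\{A,C\}$ while complementation sends $\{A,C\}$ to the disjoint set $\{T,G\}$, forcing a mismatch in the second coordinate of each block --- which is a tidy sharpening of the same argument rather than a different one.
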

\begin{proof}
For any $x,y\in\Sigma$, note the Hamming distance $H(x,y^c)\geq1$.
Therefore,
\begin{equation*}
    \begin{split}
        H(\x,\y^c)=&\sum_{i=1}^nH(x_i,y_i^c) \\ 
        \geq& n.
    \end{split}
\end{equation*}
Hence, it follows the result.
\end{proof}
Now, an instant result on distance of obtained DNA codes using the Lemma \ref{complement lemma Z5} as given in Lemma \ref{rc constraint lemma Z5}. 
\begin{lemma}
	For any ($n,M,d$) code $\mathscr{C}$ over $\mathbb{Z}_5$, if the minimum distance $d\leq n$ then the DNA code $\varphi(\mathscr{C})$ satisfies RC constraint, . 
	\label{rc constraint lemma Z5}
\end{lemma}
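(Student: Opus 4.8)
The plan is to establish the uniform lower bound $H(\x^{rc},\y)\geq n$ for every pair of DNA strings $\x,\y\in\varphi(\mathscr{C})\subseteq\Sigma^n$, and then to combine this with the hypothesis $d\leq n$ and with Theorem \ref{parameter map DNA code for Z5} (which guarantees that $\varphi(\mathscr{C})$ has minimum Hamming distance $d_H=d$) to conclude $H(\x^{rc},\y)\geq n\geq d=d_H$, which is precisely the RC constraint of Section \ref{RC Constraint}. I would note at the outset that this uniform bound is in fact stronger than what the constraint demands: since $H(\x^{rc},\y)\geq n\geq 1$, the side condition $\x^{rc}\neq\y$ is automatic, so no pair of codewords needs to be excluded from the argument.

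The heart of the proof is a blockwise computation that parallels Lemma \ref{complement lemma Z5}. First I would record how reverse-complementation acts on a word of $\Sigma^n$: writing $\x=x_1x_2\ldots x_n$ with each $x_i\in\Sigma\subseteq\Sigma_{DNA}^2$, the $i$-th length-$2$ block of $\x^{rc}$ equals $(x_{n+1-i})^{rc}$, since complementation acts within each block while reversal permutes the blocks into opposite order (and flips the two nucleotides inside each block). This yields
\[
H(\x^{rc},\y)=\sum_{i=1}^{n}H\bigl((x_{n+1-i})^{rc},\,y_i\bigr).
\]
Next I would verify the finite disjointness statement that plays here the role the set $\Sigma^c$ plays in Lemma \ref{complement lemma Z5}: the five reverse-complement words $\{AA^{rc},AC^{rc},CA^{rc},CC^{rc},TC^{rc}\}=\{TT,GT,TG,GG,GA\}$ have no element in common with $\Sigma=\{AA,AC,CA,CC,TC\}$. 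Hence $(x_{n+1-i})^{rc}\neq y_i$ for all $x_{n+1-i},y_i\in\Sigma$, so each summand above is at least $1$ and the total is at least $n$, exactly mirroring the way Lemma \ref{complement lemma Z5} obtains $n$ as a bound for $H(\x,\y^c)$.

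The point that must be handled with care is that Lemma \ref{complement lemma Z5} cannot be quoted verbatim. That lemma bounds $H(\x,\y^c)$ for $\x,\y\in\Sigma^n$, and one is tempted to write $H(\x^{rc},\y)=H(\y,(\x^{r})^{c})$ and apply it with $\x^r$ in the complemented slot; but $\x^r$ need not lie in $\Sigma^n$ (for instance the block $TC$ reverses to $CT\notin\Sigma$), so the hypothesis of Lemma \ref{complement lemma Z5} fails. This is exactly why the disjointness must be checked for $\Sigma^{rc}$ against $\Sigma$ (equivalently, using the identity $H(\x^{rc},\y)=H(\x^r,\y^c)$ recorded after Definition \ref{DNA code definition}, for $\Sigma^r$ against $\Sigma^c$) rather than merely reusing $\Sigma^c\cap\Sigma=\emptyset$. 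Once this single finite check is in place, the blockwise summation gives $H(\x^{rc},\y)\geq n$ and the inequality $d\leq n$ closes the argument at once.
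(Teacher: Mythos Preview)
Your proof is correct and follows essentially the same route as the paper: both compute the reverse-complements of the five elements of $\Sigma$, observe that the resulting set $\{TT,GT,TG,GG,GA\}$ is disjoint from $\Sigma$, deduce the blockwise bound $H(\x^{rc},\y)\geq n$, and then invoke $d_H=d\leq n$. Your version is somewhat more careful in spelling out the block-permutation formula for $\x^{rc}$ and in noting explicitly why Lemma~\ref{complement lemma Z5} cannot be applied directly (since $\x^r$ need not lie in $\Sigma^n$), but the underlying argument is identical.
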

\begin{proof}
Note that $(AA)^{rc}$ = $TT$, $(AC)^{rc}$ = $GT$, $(CA)^{rc}$ = $TG$, $(CC)^{rc}$ = $GG$ and $(TC)^{rc}$ = $GA$.
Thus, for any $x$ and $y$ in the set $\Sigma$, the minimum Hamming distance $H(x,y^{rc})\geq1$. 
Therefore, the minimum Hamming distance
\[
H(\x,\y^{rc})\geq n\geq d \mbox{ for }\x,\y\in\Sigma^n.
\]
Now, if $d\leq n$, then, from Lemma \ref{bijective isometry preposition for Z5}, for $\varphi(\Sigma^n)$, the minimum Hamming distance $d_H\leq n$, and therefore, $d_H\leq H(\x,\y^{rc})$ for each $\x,\y\in\Sigma^n$, where 
\[
d_H = \min\{H(\x,\y):\x,\y\in\varphi(\Sigma^n)\mbox{ and }\x\neq\y\}.
\]
Hence, for any $(2n,M,d_H^*)$ DNA code $\mathscr{C}_{DNA}\subseteq\varphi(\Sigma^n)$, $d_H^*\leq H(\x,\y^{rc})$ for each $\x,\y\in\mathscr{C}_{DNA}$, where 
\[
d_H^* = \min\{H(\x,\y):\x,\y\in\mathscr{C}_{DNA}\mbox{ and }\x\neq\y\}.
\]
It follows the result.
\end{proof}

\subsubsection{Constructions of DNA Codes}
For alphabet size five, from the family of linear codes constructed in \cite{BIER198747}, family of DNA codes are constructed in \cite{9214530}. 
For any integer $k$ = $2,3,4,5$, the generator matrix for the code is given by 
\[
G_k = \left(\begin{array}{ccccc}
\textbf{1}_{1,4^{k-2}} & \textbf{2}_{1,4^{k-2}} & \textbf{3}_{1,4^{k-2}} & \textbf{4}_{1,4^{k-2}} \\
G_{k-1} & G_{k-1} & G_{k-1} & G_{k-1} 
\end{array}\right) 
\mbox{ for }k=3,4,5,
\]
with the initial case
\[
G_2=\left(
\begin{array}{cccc}
1 & 1 & 1 & 1 \\
1 & 2 & 3 & 4 
\end{array}
\right).
\]
Using computation, one can easily obtain the Proposition \ref{Parameter code Z5} as follows.
\begin{proposition}
For $k=2,3,4,5$, if the code $\langle G_k\rangle$ is an $(n,M,d)$ code on $\mathbb{Z}_5$ then 
\begin{itemize}
    \item the length $n = 4^{k-1}$,
    \item the size $M = 5^k$, and
    \item the minimum distance $d = 3\cdot4^{k-2}$.
\end{itemize}
\label{Parameter code Z5}
\end{proposition}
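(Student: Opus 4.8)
The plan is to establish all three parameters by induction on $k$ (ranging over $2,3,4,5$), exploiting the four-fold recursive block structure of $G_k$, which is a quinary analogue of the $(\u\mid\u+\v)$ construction. Throughout I write $N=4^{k-2}$ for the common block length, so that a codeword of $\langle G_k\rangle$ splits into four blocks of length $N$. The base case $k=2$ is checked directly: $G_2$ generates the code $\{(a+b,\,a+2b,\,a+3b,\,a+4b):a,b\in\mathbb{Z}_5\}$, whose two rows are independent (size $5^2$), and whose minimum weight is $3$ since for $b\neq0$ the equation $a+ib=0$ has at most one solution $i\in\{1,2,3,4\}$, while $b=0$ gives an all-equal word of weight $4$.

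For the length, the recursion places four length-$4^{k-2}$ objects side by side, so $\ell(G_k)=4\,\ell(G_{k-1})$ with $\ell(G_2)=4$, giving $n=4^{k-1}$. For the size, I would show $\operatorname{rank}G_k=k$ by induction. The lower $k-1$ rows of $G_k$ form $\bigl(G_{k-1}\mid G_{k-1}\mid G_{k-1}\mid G_{k-1}\bigr)$, so every combination of them is a repeated word $(\c\mid\c\mid\c\mid\c)$ with $\c\in\langle G_{k-1}\rangle$; since $\c\mapsto(\c\mid\c\mid\c\mid\c)$ is injective, these span a space of dimension $k-1$ by the inductive hypothesis. The top row $(\textbf{1}_{1,N}\mid\textbf{2}_{1,N}\mid\textbf{3}_{1,N}\mid\textbf{4}_{1,N})$ has four distinct constant blocks and so cannot be of the repeated form, hence it is independent of the lower rows; thus $\operatorname{rank}G_k=k$ and $M=5^k$ because $\mathbb{Z}_5$ is a field.

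The distance is the substantive step. I would write an arbitrary codeword as $\mathbf{w}=a\,(\textbf{1}_{1,N}\mid\textbf{2}_{1,N}\mid\textbf{3}_{1,N}\mid\textbf{4}_{1,N})+(\c\mid\c\mid\c\mid\c)$ with $a\in\mathbb{Z}_5$ and $\c\in\langle G_{k-1}\rangle$, so that its $i$-th block equals $\c+ia\cdot\textbf{1}_{1,N}$. If $a=0$ and $\c\neq0$, then $\mathbf{w}=(\c\mid\c\mid\c\mid\c)$ has weight $4\,\mathrm{wt}(\c)\geq4\cdot3\cdot4^{k-3}=3\cdot4^{k-2}$ by the inductive distance bound. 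If $a\neq0$, the key observation is coordinatewise: fixing a position $p$ of the blocks, the four entries $c_p+ia$ with $i=1,2,3,4$ run exactly over $\mathbb{Z}_5\setminus\{c_p\}$ because $a$ is a unit, so they contain exactly one zero when $c_p\neq0$ and no zero when $c_p=0$. Summing over the $N$ positions gives $\mathrm{wt}(\mathbf{w})=3\,\mathrm{wt}(\c)+4\bigl(N-\mathrm{wt}(\c)\bigr)=4^{k-1}-\mathrm{wt}(\c)\geq4^{k-1}-N=3\cdot4^{k-2}$, using only the trivial bound $\mathrm{wt}(\c)\leq N$. Both cases yield $\mathrm{wt}(\mathbf{w})\geq3\cdot4^{k-2}$, and equality is attained by taking $a=0$ with $\c$ a minimum-weight word of $\langle G_{k-1}\rangle$, so $d=3\cdot4^{k-2}$.

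The main obstacle is the distance argument, and within it the case $a\neq0$: one must notice that the shifts $a,2a,3a,4a$ exhaust the nonzero residues of $\mathbb{Z}_5$, which collapses the per-coordinate count into the clean identity $\mathrm{wt}(\mathbf{w})=4^{k-1}-\mathrm{wt}(\c)$; everything else is bookkeeping on the recursion. For these four specific values of $k$ the claims can alternatively be confirmed by direct computation, as the authors suggest, but the inductive argument above explains \emph{why} the parameters take this closed form.
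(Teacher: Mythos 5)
Your proof is correct, and it takes a genuinely different route from the paper, which offers no structural argument at all: the text merely remarks that the proposition can ``easily be obtained using computation,'' i.e., by directly checking the four codes $\langle G_2\rangle,\ldots,\langle G_5\rangle$ inherited from \cite{BIER198747}. Your induction replaces that black-box verification with a quinary $(\u\mid \u+\v)$-style analysis: the decomposition of an arbitrary codeword as $\mathbf{w}=a\,(\textbf{1}_{1,N}\mid\textbf{2}_{1,N}\mid\textbf{3}_{1,N}\mid\textbf{4}_{1,N})+(\mathbf{c}\mid\mathbf{c}\mid\mathbf{c}\mid\mathbf{c})$ with $\mathbf{c}\in\langle G_{k-1}\rangle$ is exactly what the lower rows generate; the rank argument (a repeated-block word cannot have four distinct constant blocks) correctly gives $M=5^k$ over the field $\mathbb{Z}_5$; and the key counting identity for $a\neq0$ is sound, since $\{a,2a,3a,4a\}=\mathbb{Z}_5\setminus\{0\}$ forces the four entries above each position $p$ to run over $\mathbb{Z}_5\setminus\{c_p\}$, yielding $\mathrm{wt}(\mathbf{w})=3\,\mathrm{wt}(\mathbf{c})+4\bigl(N-\mathrm{wt}(\mathbf{c})\bigr)=4^{k-1}-\mathrm{wt}(\mathbf{c})\geq3\cdot4^{k-2}$, while the case $a=0$, $\mathbf{c}\neq0$ gives $4\,\mathrm{wt}(\mathbf{c})\geq4\cdot3\cdot4^{k-3}$ by the inductive hypothesis, with equality attained by repeating a minimum-weight word; linearity legitimately converts minimum weight into minimum distance. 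What each approach buys: the computational check is instantaneous for the four fixed values of $k$ but explains nothing, whereas your induction shows the restriction $k\leq5$ is inessential (the parameters $n=4^{k-1}$, $M=5^k$, $d=3\cdot4^{k-2}$ hold for every $k\geq2$) and generalizes verbatim to $\mathbb{Z}_q$ for prime $q$ with $q-1$ blocks, where the same per-coordinate count gives $n=(q-1)^{k-1}$, $M=q^k$, $d=(q-2)(q-1)^{k-2}$.
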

Now, one can obtained the parameters of DNA codes as given in Theorem \ref{family 1 theorem Z5}.
\begin{theorem}
	For $k=2,3,4,5$, the DNA code $\varphi(\langle G_k\rangle)$ is $(2^{2k-1},5^k,3\cdot4^{k-2})$ code.
	\label{family 1 theorem Z5}
\end{theorem}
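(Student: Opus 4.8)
The plan is to obtain the result as a direct composition of the two preceding results, so the proof will be short. First I would invoke Proposition \ref{Parameter code Z5}, which supplies the parameters of the quinary code $\langle G_k\rangle$ over $\mathbb{Z}_5$: namely length $n = 4^{k-1}$, size $M = 5^k$, and minimum distance $d = 3\cdot 4^{k-2}$ for each $k \in \{2,3,4,5\}$.

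Next I would apply the isometric correspondence of Theorem \ref{parameter map DNA code for Z5}, which guarantees that for any $(n,M,d)$ code $\mathscr{C}$ over $\mathbb{Z}_5$ the image $\varphi(\mathscr{C})$ is a DNA code with parameters $(2n, M, d_H)$ where $d_H = d$. Specialising $\mathscr{C} = \langle G_k\rangle$ immediately yields that $\varphi(\langle G_k\rangle)$ has size $5^k$ and minimum Hamming distance $d_H = 3\cdot 4^{k-2}$, since the bijective map $\varphi$ preserves the codebook size while the distance isometry (established via Lemma \ref{bijective isometry preposition for Z5}) carries the minimum distance over unchanged.

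The only remaining task is the length computation. The DNA codeword length is $2n = 2\cdot 4^{k-1}$, and I would rewrite this in base two as $2\cdot 4^{k-1} = 2\cdot 2^{2(k-1)} = 2^{2k-1}$, which matches the claimed length exactly. No step here presents a genuine obstacle: the substantive work has already been done in establishing the quinary parameters (Proposition \ref{Parameter code Z5}) and the distance-preserving property of $\varphi$ (Theorem \ref{parameter map DNA code for Z5}), so this theorem is essentially a repackaging of those two facts together with the elementary identity $2\cdot 4^{k-1} = 2^{2k-1}$.
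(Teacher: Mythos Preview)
Your proposal is correct and follows exactly the same approach as the paper's own proof, which simply cites Theorem \ref{parameter map DNA code for Z5} and Proposition \ref{Parameter code Z5}. Your version is more verbose in spelling out the length computation $2\cdot 4^{k-1} = 2^{2k-1}$, but the underlying argument is identical.
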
 
\begin{proof}
The result can be obtained from Theorem \ref{parameter map DNA code for Z5} and Proposition \ref{Parameter code Z5}.
\end{proof}
From computation, one can obtain the result on the Hamming distance between DNA string and R DNA string as given in Proposition \ref{reverse code property Z5}. 
\begin{proposition}
For all $\x$ and $\y$ of the DNA code $\varphi(\langle G_k\rangle)$ $k=2,3,4,5$, the Hamming distance  $H(\x,\y^{rc})\geq2^{2k-3}$. 
\label{reverse code property Z5}
\end{proposition}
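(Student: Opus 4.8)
The plan is to derive this bound directly from the block-wise reverse-complement estimate already established inside the proof of Lemma \ref{rc constraint lemma Z5}, and then to settle the claim by a one-line comparison of exponents. First I would record, via Proposition \ref{Parameter code Z5} and Theorem \ref{family 1 theorem Z5}, that every codeword of $\varphi(\langle G_k\rangle)$ is the image under $\varphi$ of a word in $\mathbb{Z}_5^{\,n}$ with $n = 4^{k-1}$; hence any $\x,\y \in \varphi(\langle G_k\rangle)$ are concatenations of exactly $n$ blocks drawn from $\Sigma = \{AA,AC,CA,CC,TC\}$, so $\x,\y \in \Sigma^n$ and their DNA-length is $2n = 2^{2k-1}$.

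Next I would reuse the elementary fact already verified in Lemma \ref{rc constraint lemma Z5}: for every pair of blocks $a,b \in \Sigma$ one has $H(a,b^{rc}) \geq 1$, since the reverse-complements $(AA)^{rc},(AC)^{rc},(CA)^{rc},(CC)^{rc},(TC)^{rc}$ equal $TT,GT,TG,GG,GA$, none of which lies in $\Sigma$. Because the total DNA-length $2n$ is even and each block has length two, reverse-complementation preserves the block boundaries, merely reversing the block order and reverse-complementing each block; thus the $i$-th block of $\y^{rc}$ is $y_{n-i+1}^{rc}$, where $y_j \in \Sigma$ denotes the $j$-th block of $\y$. Aligning $\x$ and $\y^{rc}$ block by block then gives
\begin{equation*}
H(\x,\y^{rc}) \;=\; \sum_{i=1}^{n} H\!\left(x_i,\,y_{n-i+1}^{rc}\right) \;\geq\; \sum_{i=1}^{n} 1 \;=\; n.
\end{equation*}

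Finally I would close the argument by the arithmetic identity $n = 4^{k-1} = 2^{2k-2} = 2\cdot 2^{2k-3} \geq 2^{2k-3}$, valid for every $k = 2,3,4,5$, which yields $H(\x,\y^{rc}) \geq 2^{2k-3}$ exactly as claimed (and in fact for all pairs, including $\x=\y$, since the block-wise estimate is insensitive to that distinction).

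I do not anticipate a genuine obstacle here: the block-wise estimate actually delivers the stronger bound $H(\x,\y^{rc}) \geq 2^{2k-2}$, so the stated inequality is slack by a factor of two. The only point warranting a moment of care is confirming that reverse-complementation respects the $2$-bp block decomposition of $\Sigma^n$, which is immediate once one observes that the block length divides the even DNA-length $2n$; after that the result reduces to summing $n$ copies of the trivial lower bound $1$ and comparing the powers of two.
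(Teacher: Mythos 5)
Your proof is correct for the statement as printed, and it is worth noting that the paper itself offers no argument at all for this proposition --- it is introduced with the phrase ``from computation.'' What you have done is observe that the proposition, read literally, is already a consequence of the block-disjointness fact the paper proves inside Lemma \ref{rc constraint lemma Z5}: since $\{(AA)^{rc},(AC)^{rc},(CA)^{rc},(CC)^{rc},(TC)^{rc}\}=\{TT,GT,TG,GG,GA\}$ is disjoint from $\Sigma$, and reverse-complementation of a string in $\Sigma^n$ permutes and reverse-complements the $2$-bp blocks without breaking block boundaries, one gets $H(\x,\y^{rc})\geq n=4^{k-1}=2^{2k-2}$, which is twice the claimed $2^{2k-3}$. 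Your alignment argument and the verification that $a=b^{rc}$ is impossible for $a,b\in\Sigma$ are both sound. One caution, however, about why the paper's bound is slack and why the authors resorted to computation: the prose immediately preceding the proposition speaks of ``the Hamming distance between DNA string and R DNA string,'' and the label suggests a \emph{reverse} property, so the intended statement was very likely $H(\x,\y^{r})\geq 2^{2k-3}$ rather than the printed $H(\x,\y^{rc})$. For the reverse, your method genuinely fails: the block-reverses $(AA)^{r}=AA$, $(AC)^{r}=CA$, $(CA)^{r}=AC$, $(CC)^{r}=CC$ all lie in $\Sigma$, so the per-block lower bound of $1$ is unavailable, and a bound like $2^{2k-3}$ must exploit the specific structure of $\langle G_k\rangle$ --- which is presumably what the authors' computation did. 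So your proof settles the literal claim (with a better constant) but should not be read as replacing the computational check for the reverse-distance property the surrounding text actually uses.
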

Now, the DNA code $\varphi(\langle G_k\rangle)$ $k=2,3,4,5$, 
\begin{itemize}
    \item has the parameters $(2^{2k-1},5^k,3\cdot4^{k-2})$ (from Theorem \ref{Parameter code Z5}), 
    \item satisfies the RC constraint (from Lemma \ref{rc constraint lemma Z5}), and
    \item $H(\x,\y^{rc})\geq2^{2k-3}$ for each $\x,\y\in\langle G_k\rangle$ (from Proposition \ref{reverse code property Z5}). 
\end{itemize}
Further, 
\begin{itemize}
    \item all DNA codewords of the DNA code $\varphi(\langle G_k\rangle)$ ($k=2,3,4,5$) are independent to the secondary structures of stem length two (from Lemma \ref{Sec Str property lemma Z5}), and
    \item DNA strings obtained from concatenation of codewords of the DNA code $\varphi(\langle G_k\rangle)$ is also independent to the secondary structures of stem length two.
\end{itemize}

\section{The Non-Homopolymer Map}\label{my sec:5}
In this section, we have established Non-Homopolymer map and distance.
And also studied their properties in this section.
Further, we have obtained DNA codes those are tandem-free and satisfy $GC$-content, R and RC constraints. 

\subsection{DNA Codes from the Non-Homopolymer Map}
$\ell$ order Non-Homopolymer map:
	For given any integer $\ell$ ($\geq1$) and $\x,\y\in\Sigma_{DNA}^\ell$ such that $\x\neq\y$, consider $\mathscr{S}$ = $\{\x,\y,\x^c,\y^c\}$. 
	Now, define a map 
	\[
	\psi:\mathbb{Z}_2\times\mathscr{S}\rightarrow\mathscr{S}
	\]
	such that 
	\begin{equation*}
	    \begin{array}{llll}
	         \psi(0,\x) = \y, & \psi(0,\x^c) = \y^c, & \psi(0,\y) = \x^c, & \psi(0,\y^c) = \x, \\ 
	         \psi(1,\x) = \y^c, & \psi(1,\x^c) = \y, & \psi(1,\y) = \x, & \psi(1,\y^c) = \x^c.
	    \end{array}
	\end{equation*}
For any $\textbf{a}$ = $(a_1\ a_2\ \ldots\ a_n)\in\mathbb{Z}_2^n$, consider 
\begin{equation}
    \begin{split}
        & \psi(\textbf{a}) = f(a_1)\psi(a_2,f(a_1))\psi(a_3,\psi(a_2,f(a_1)))\ldots \\ &\hspace{3cm}\ldots\psi(a_n,\psi(a_{n-1}\ldots\psi(a_2,f(a_1)\ldots)))\in\mathscr{S}^n
    \end{split}
    \label{conflict free map function}
\end{equation}
where $f:\mathbb{Z}_2\rightarrow\{\x^c,\x\}$ such that $f(0)$ = $\x$ and $f(1)$ = $\x^c$.
Again, for any $\textbf{a}$ = $(a_1\ a_2\ \ldots\ a_n)\in\mathbb{Z}_2^n$, if $\psi(\a)$ = $u_1u_2\ldots u_n$ in $\mathscr{S}^n$ then, using recurrence, 
\begin{equation*}
u_i  = 
    \begin{cases}
    \psi(a_i,u_{i-1}) & \mbox{ for }i=2,3,\ldots,n\mbox{ and } \\
        f(a_1) & \mbox{ for } i=1.
    \end{cases}
\end{equation*}
Now, for any $\mathscr{C}\subseteq\mathbb{Z}_2^n$, $\psi(\mathscr{C})=\{\psi(\x):\x\in\mathscr{C}\}$.
\begin{example}{Example}
If $\x$ = $ATA$ and $\y$ = $CGC$ then the binary string $(0\ 0\ 0\ 0)$ is encoded into a DNA string such that 
\begin{equation*}
\begin{array}{ccllll}
    \psi((0\ 0\ 0\ 0)) & = & f(0) & \psi(0,f(0)) & \psi(0,\psi(0,f(0))) & \psi(0,\psi(0,\psi(0,f(0)))) \\
    & = & \x& \psi(0,\x)& \psi(0,\psi(0,\x))& \psi(0,\psi(0,\psi(0,\x))) \\
    &= & \x& \y& \psi(0,\y)& \psi(0,\psi(0,\y)) \\
    &= & \x& \y& \x^c& \psi(0,\x^c) \\
    &= & \x& \y& \x^c& \y^c \\
    &= & ATA & CGC & TAT & GCG 
\end{array}
\end{equation*}
Thus, $\psi((0\ 0\ 0\ 0))$ = $ATACGCTATGCG$.
Again, for $\a$ = $(0\ 0\ 0\ 0)$, observe $u_1$ = $f(0)$ = $\x$, $u_2$ = $\psi(0,u_1)$ = $\y$,  $u_3$ = $\psi(0,u_2)$ = $\x^c$ and  $u_4$ = $\psi(0,u_3)$ = $\y^c$. 
Therefore, $\psi((0\ 0\ 0\ 0))$ = $u_1u_2u_3u_4$ = $\x \y \x^c \y^c$.
Similarly, 
\[
\begin{array}{cll}
    \psi((0\ 0\ 1\ 1)) & =\x\y\x\y^c & = ATACGCATAGCG, \\
    \psi((1\ 1\ 0\ 0)) & =\x^c\y\x^c\y^c & = TATCGCTATGCG, \mbox{ and }\\
    \psi((1\ 1\ 1\ 1)) & =\x^c\y\x\y^c & = TATCGCATAGCG.
\end{array}
\]
Thus, the binary code 
\[
\mathscr{C} = \{(0\ 0\ 0\ 0), (0\ 0\ 1\ 1), (1\ 1\ 0\ 0), (1\ 1\ 1\ 1)\}
\] 
is encoded into the $(12,4,3)$ DNA code 
\begin{equation*}
    \begin{split}
        &\left\{ATACGCTATGCG,ATACGCATAGCG,TATCGCTATGCG,\right. \\ 
        &\hspace{7cm}\left. TATCGCATAGCG\right\}.
    \end{split}
\end{equation*}
Observe that the binary code $\mathscr{C}$ is a linear code with the generator matrix
\[
G = 
\begin{pmatrix}
0 & 0 & 1 & 1 \\
1 & 1 & 0 & 0
\end{pmatrix}.
\]
	\label{non-homopolymer example}
\end{example}
For any tandem-free DNA string, the properties of the reverse, the complement and the RC DNA strings are given in Proposition \ref{reverse tandem free property}, Proposition \ref{complement tandem free property} and Proposition \ref{RC tandem free property} as follows.
\begin{proposition}
    A DNA string $\x$ is tandem-free DNA string with repeat-length $\ell$ if and only if $\x^r$ is tandem-free DNA string with repeat-length $\ell$.
    \label{reverse tandem free property}
\end{proposition}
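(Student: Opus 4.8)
The plan is to exploit the fact that reversal of DNA strings is an involution, namely $(\x^r)^r = \x$, so that it suffices to prove a single implication and then apply that implication to $\x^r$ to recover the converse. The whole argument rests on one elementary observation about tandem repeats: if a block of length $2m$ is a tandem, say it equals $ww$ for a DNA string $w$ of length $m$, then its reverse is $(ww)^r = w^r w^r$, which is again a tandem, now with block $w^r$. Thus reversal carries tandem repeats to tandem repeats of the same length.

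First I would fix notation, writing $\x = x_1 x_2 \ldots x_n$ so that $\x^r = x_n x_{n-1} \ldots x_1$, and record the position correspondence $j \mapsto n-j+1$ that sends the $j$-th symbol of $\x$ to the $j$-th symbol of $\x^r$. Next I would translate the defining condition. A tandem of repeat-length $m$ starting at position $i$ in $\x$ is exactly the equality $\x(i,i+m-1) = \x(i+m,i+2m-1)$, that is, the block $\x(i,i+2m-1)$ of length $2m$ is a tandem. Under the reflection $j \mapsto n-j+1$ this block occupies positions $n-i-2m+2$ through $n-i+1$ in $\x^r$, and by the observation above it is there a tandem of repeat-length $m$ starting at position $n-i-2m+2$. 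Running the same computation in reverse and using $\x = (\x^r)^r$ shows that, conversely, a tandem of repeat-length $m$ at position $i'$ in $\x^r$ forces a tandem of repeat-length $m$ at position $n-i'-2m+2$ in $\x$.

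I would then verify that the map $i \mapsto n-i-2m+2$ is an involution of the admissible index set $\{1,2,\ldots,n-2m+1\}$ onto itself, so that $\x$ contains a tandem of repeat-length $m$ for some starting position if and only if $\x^r$ does. Since this equivalence holds for every $m = 1,2,\ldots,\ell$, the string $\x$ violates the tandem-free condition with repeat-length $\ell$ precisely when $\x^r$ does, which is exactly the claimed equivalence in its contrapositive form.

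The only genuine obstacle is the index bookkeeping: one must check that when the block $\x(i,i+2m-1)$ is reflected its new left endpoint is exactly $n-i-2m+2$, and that as $i$ ranges over $\{1,\ldots,n-2m+1\}$ this value stays within the same legal range (at $i=1$ it equals $n-2m+1$ and at $i=n-2m+1$ it equals $1$). Everything else follows directly from the identity $(ww)^r = w^r w^r$ together with reversal being its own inverse, so no case analysis on the nucleotides is required.
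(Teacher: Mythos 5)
Your proof is correct. One point of comparison worth knowing: the paper states Proposition~\ref{reverse tandem free property} (and likewise Propositions~\ref{complement tandem free property} and~\ref{RC tandem free property}) with no proof at all, treating it as immediate from the definitions, so your write-up supplies a verification the paper omits rather than paralleling an existing argument. The two ingredients you isolate are exactly what is needed: the identity $(ww)^r = w^r w^r$, which shows reversal sends a tandem block of repeat-length $m$ to another such block, and the reflection of the left endpoint $i \mapsto n-i-2m+2$, which you correctly verify maps the admissible range $\{1,2,\ldots,n-2m+1\}$ onto itself as an involution (the endpoint arithmetic checks out: the block $\x(i,i+2m-1)$ lands on positions $n-i-2m+2$ through $n-i+1$ of $\x^r$). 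Quantifying over all $m=1,2,\ldots,\ell$ and passing to the contrapositive then gives the stated equivalence, and your use of $(\x^r)^r=\x$ to deduce the converse from a single implication is a clean economy. As a bonus, the same template disposes of the paper's two companion propositions: for the complement one replaces your key identity by $(ww)^c = w^c w^c$ with no index shift at all, and the reverse-complement case follows by composing the two, so no case analysis on nucleotides is ever required --- consistent with your closing remark.
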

\begin{proposition}
A DNA string $\x$ is tandem-free DNA string with repeat-length $\ell$ if and only if $\x^c$ is tandem-free DNA string with repeat-length $\ell$.
\label{complement tandem free property}
\end{proposition}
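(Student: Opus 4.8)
The plan is to exploit two elementary facts about Watson--Crick complementation: that it acts position by position (without reversing the string, unlike $\x^r$ or $\x^{rc}$) and that it is a bijection on $\Sigma_{DNA}$. Since the definition of tandem-free compares consecutive sub-strings only for \emph{equality}, and complementation preserves equality and inequality, the property should transfer verbatim between $\x$ and $\x^c$.

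First I would record that the nucleotide map $a\mapsto a^c$ is an involution: from $A^c=T$, $T^c=A$, $C^c=G$, $G^c=C$ one reads off $(a^c)^c=a$, so it is in particular injective, and $a^c=b^c$ if and only if $a=b$. The key structural observation is that, because $\x^c=x_1^c x_2^c\ldots x_n^c$ is built entry-wise, taking sub-strings commutes with complementation: for all $1\le i<j\le n$ we have $(\x^c)(i,j)=(\x(i,j))^c$, since both sides list $x_i^c,x_{i+1}^c,\ldots,x_j^c$ in the same order.

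Next I would fix $m\in\{1,2,\ldots,\ell\}$ and an index $i\in\{1,2,\ldots,n-2m+1\}$, and combine the two facts above. Applying the sub-string/complement identity to each block and then using injectivity of the nucleotide complement gives
\begin{align*}
(\x^c)(i,i+m-1)=(\x^c)(i+m,i+2m-1)
&\iff (\x(i,i+m-1))^c=(\x(i+m,i+2m-1))^c \\
&\iff \x(i,i+m-1)=\x(i+m,i+2m-1).
\end{align*}
Negating both sides, the tandem-free inequality for repeat length $m$ holds for $\x^c$ at position $i$ exactly when it holds for $\x$ at position $i$. Since $m$ and $i$ were arbitrary within their ranges, $\x^c$ is tandem-free with repeat-length $\ell$ if and only if $\x$ is, which is the claim; the converse direction is free because $(\x^c)^c=\x$, making the statement symmetric.

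I do not expect a genuine obstacle here: the entire argument rests on complementation being a position-preserving bijection. The only point that warrants care, and the reason the proof is cleaner than the analogous reverse statement (Proposition~\ref{reverse tandem free property}), is precisely that $\x^c$ does \emph{not} reorder the symbols, so sub-string positions and their left-to-right adjacency are preserved identically; this is what lets the equivalence be checked index-by-index without any re-indexing of the sub-string arguments.
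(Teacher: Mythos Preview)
Your proof is correct. The paper states this proposition without proof, treating it as self-evident, so there is no argument to compare against; your approach---observing that Watson--Crick complementation acts coordinatewise and bijectively, hence commutes with taking sub-strings and preserves (in)equality---is exactly the natural elementary justification one would supply.
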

\begin{proposition}
A DNA string $\x$ is tandem-free DNA string with repeat-length $\ell$ if and only if $\x^{rc}$ is tandem-free DNA string with repeat-length $\ell$.
\label{RC tandem free property}
\end{proposition}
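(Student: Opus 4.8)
The plan is to reduce this statement to the two immediately preceding propositions by exploiting the fact that the reverse-complement operation factors as the composition of the reverse and the complement. First I would record the elementary identity $\x^{rc} = (\x^r)^c$: writing $\x = x_1 x_2 \ldots x_n$, the reverse is $\x^r = x_n x_{n-1} \ldots x_1$, and applying Watson-Crick complementation coordinatewise gives $x_n^c x_{n-1}^c \ldots x_1^c$, which is precisely $\x^{rc}$ by the definition of the reverse-complement DNA string. Thus $\x^{rc}$ is nothing but the complement of $\x^r$.

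With this factorisation, the proof is a two-step chain of equivalences, each step being one of the assumed propositions. By Proposition \ref{reverse tandem free property}, $\x$ is tandem-free with repeat-length $\ell$ if and only if $\x^r$ is tandem-free with repeat-length $\ell$. Applying Proposition \ref{complement tandem free property} to the string $\x^r$, the latter is tandem-free with repeat-length $\ell$ if and only if $(\x^r)^c$ is tandem-free with repeat-length $\ell$. Chaining these two biconditionals and substituting $(\x^r)^c = \x^{rc}$ yields exactly the claimed equivalence.

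I do not expect a genuine obstacle, since the two constituent results are taken as given; the only point demanding care is the coordinate bookkeeping in the identity $\x^{rc} = (\x^r)^c$, which must be checked literally against the index conventions so that the composition is the reverse-complement and not merely equal to it after a relabelling. As a self-contained alternative, one could argue directly: setting $\y = \x^{rc}$ so that $y_i = x_{n-i+1}^c$, a length-$m$ repeat $\y(i,i+m-1) = \y(i+m,i+2m-1)$ becomes, since complementation is a bijection on $\Sigma_{DNA}$, the equality $x_{k+j} = x_{k+m+j}$ for $0 \le j \le m-1$ with $k = n-i-2m+2$, i.e.\ a length-$m$ repeat of $\x$ at the mirrored position. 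This correspondence is reversible and length-preserving, so $\x$ has a forbidden repeat if and only if $\x^{rc}$ does; the compositional proof is shorter, but the direct one makes transparent why the repeat-length $\ell$ is preserved exactly.
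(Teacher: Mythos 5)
Your proposal is correct. Note that the paper states Propositions \ref{reverse tandem free property}, \ref{complement tandem free property} and \ref{RC tandem free property} without proof, so there is no in-paper argument to compare against; your compositional route is exactly the reasoning the paper itself uses for reverse-complement closure elsewhere (compare Lemma \ref{r_rc_exist}, which derives $\z^{rc}\in\varphi(\mathscr{C})$ via $\z^{rc}=(\z^r)^c$). The identity $\x^{rc}=(\x^r)^c$ is immediate from Definition 3 of the paper, and chaining Proposition \ref{reverse tandem free property} with Proposition \ref{complement tandem free property} applied to the string $\x^r$ gives the biconditional, for every repeat length $m=1,2,\ldots,\ell$ simultaneously. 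Your direct alternative also checks out: with $y_i=x_{n-i+1}^c$, a repeat $\y(i,i+m-1)=\y(i+m,i+2m-1)$ translates, after cancelling the complement (a bijection on $\Sigma_{DNA}$), into the repeat $\x(k,k+m-1)=\x(k+m,k+2m-1)$ with $k=n-i-2m+2$; the constraint $i\leq n-2m+1$ on the one side corresponds exactly to $k\geq 1$ on the other, so the correspondence between forbidden repeats is a bijection and the repeat-length $\ell$ is preserved. The only caveat worth flagging is the one you flagged yourself: the compositional proof is only as strong as the two preceding propositions, which the paper likewise leaves unproved, whereas your direct index argument is self-contained and would serve as a proof of all three statements with minor relabelling.
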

\begin{example}{Example}
For the tandem-free DNA string $ATACGCTATGCG$ with repeat-length $6$,
\begin{itemize}
    \item the R DNA string $GCGTATCGCATA$ is the tandem-free DNA string with repeat-length $6$, 
    \item the complement DNA string $TATGCGATACGC$ is the tandem-free DNA string with repeat-length $6$, and
    \item the RC DNA string $CGCATAGCGTAT$ is the tandem-free DNA string with repeat-length $6$. 
\end{itemize}
\end{example}
In Lemma \ref{tandem free lemma}, a property on a tandem-free DNA string is discussed that helps to ensure the property in DNA strings with larger length. 
\begin{lemma}
For any integers $\ell$ and $n$ ($2\ell\leq n$) and some $\x,\y\in\Sigma_{DNA}^\ell$, any binary string of length $n$ will encode into a tandem-free DNA string with repeat-length $\ell$ using the $\ell$ order Non-Homopolymer map, if the DNA strings $\x\y$, $\x\y^c$, $\y\x$ and $\y\x^c$ are also tandem-free DNA strings with repeat-length $\ell$.
\label{tandem free lemma}
\end{lemma}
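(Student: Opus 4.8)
The plan is to exploit the block structure of the code. Writing $Z = \psi(\a) = u_1 u_2 \cdots u_n$, each $u_i \in \mathscr{S} = \{\x,\y,\x^c,\y^c\}$ is a DNA block of length $\ell$, so $Z$ has length $n\ell$. First I would record the transition structure of $\psi$: since $u_1 = f(a_1) \in \{\x,\x^c\}$ and every rule sends a block of the family $\{\x,\x^c\}$ to one of $\{\y,\y^c\}$ and vice versa, consecutive blocks always lie in opposite families; in particular $u_i \neq u_{i+1}$, and every consecutive pair $u_i u_{i+1}$ equals one of the eight strings $\x\y,\x\y^c,\x^c\y,\x^c\y^c,\y\x,\y\x^c,\y^c\x,\y^c\x^c$.

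Next I would promote the hypothesis to all eight of these pairs. The four strings $\x\y,\x\y^c,\y\x,\y\x^c$ are tandem-free with repeat-length $\ell$ by assumption, while the remaining four are their Watson--Crick complements, since complementation distributes over concatenation: $\x^c\y^c=(\x\y)^c$, $\x^c\y=(\x\y^c)^c$, $\y^c\x^c=(\y\x)^c$, $\y^c\x=(\y\x^c)^c$. By Proposition \ref{complement tandem free property} each of these is tandem-free with repeat-length $\ell$ as well, so every consecutive pair $u_i u_{i+1}$ is tandem-free with repeat-length $\ell$.

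I would then argue by contradiction. Suppose $Z$ carries a tandem repeat of half-length $m$ with $1\le m\le\ell$ at some position $i_0$, so that $Z(i_0,i_0+m-1)=Z(i_0+m,i_0+2m-1)$. Its window has length $2m\le 2\ell$, and a window of length at most $2\ell$ can meet at most three consecutive blocks (meeting $u_j$ and $u_{j+3}$ alone would already span more than $2\ell$). If the window meets only two blocks $u_j u_{j+1}$, the equality is a tandem repeat inside the string $u_j u_{j+1}$, contradicting the previous paragraph; this settles every $m$ whose window does not reach a third block.

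The hard part will be the remaining case, where the window straddles three consecutive blocks $u_j u_{j+1} u_{j+2}$; this can occur only when $2m>\ell$, and it forces the middle block $u_{j+1}$ to sit entirely inside the window. Here I would rephrase the repeat as saying that $u_j u_{j+1} u_{j+2}$ has period $m$ on the window, note that its first half lies in the tandem-free pair $u_j u_{j+1}$ and its second half in the tandem-free pair $u_{j+1} u_{j+2}$, and bring in the alternation relation $u_{j+2}\in\{u_j,u_j^c\}$. The crux is to combine the periodicity across the shared block $u_{j+1}$ with $\x\neq\y$ to force a tandem repeat already visible inside one of the two pairs, contradicting their tandem-freeness; I expect this three-block bookkeeping, not the earlier steps, to be where the genuine difficulty lies, and it may require the relevant three-block concatenations $\x\y\x,\x\y\x^c,\ldots$ to be controlled in addition to the two-block strings named in the statement.
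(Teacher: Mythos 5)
Your opening steps coincide with the paper's entire proof: the paper likewise promotes the four hypothesis strings to all eight concatenations $\{\x\y,\x\y^c,\x^c\y,\x^c\y^c,\y\x,\y\x^c,\y^c\x,\y^c\x^c\}$ via Proposition \ref{complement tandem free property}, observes that consecutive blocks alternate between the families $\{\x,\x^c\}$ and $\{\y,\y^c\}$, so that every $u_iu_{i+1}$ lies in this set, and then concludes directly that $\psi(\a)$ is tandem-free. That conclusion tacitly assumes every tandem window of length $2m\leq 2\ell$ sits inside a single pair of consecutive blocks. You correctly refused to grant this and isolated the three-block straddle (possible as soon as $2m\geq\ell+2$, hence for every $\ell\geq2$; only for $\ell=1$ is the two-block reduction automatic) as the real difficulty. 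Since you never close that case, your proposal is incomplete as a proof: the three-block bookkeeping you defer is not a routine verification.

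However, your closing suspicion is exactly right, and the gap is not yours to close: the three-block case cannot be settled from the stated hypotheses, so the lemma as printed is false and the paper's proof is invalid rather than merely terse. Take $\ell=2$, $\x=AC$, $\y=TC$, so $\x^c=TG$ and $\y^c=AG$. Each of $\x\y=ACTC$, $\x\y^c=ACAG$, $\y\x=TCAC$, $\y\x^c=TCTG$ is tandem-free with repeat-length $2$, yet for $n=4\geq2\ell$ the all-zero input gives
\begin{equation*}
\psi((0\ 0\ 0\ 0))=\x\,\y\,\x^c\,\y^c=ACTCTGAG,
\end{equation*}
which contains the tandem $CTCT$ at positions $2$ through $5$ (half-length $m=2=\ell$), straddling $u_1u_2u_3$ in precisely your unresolved configuration: here $x_2=y_2$ and $y_1=x_1^c$, so the periodicity across the middle block is consistent with all two-block hypotheses. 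The statement therefore needs the extra control you anticipated, e.g., tandem-freeness of the realizable three-block concatenations $\x\y\x$, $\x\y\x^c$, etc., or boundary conditions on the first and last letters of $\x,\y$ and their complements; the concrete choices used later in the chapter, such as $\x=AT$, $\y=CG$, do satisfy such stronger conditions, which is why the constructed codes themselves survive. Had you pushed the three-block case through, you would have discovered this failure rather than a proof; in that respect your attempt is more careful than the paper's own argument.
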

\begin{proof}
For given $\x,\y\in\Sigma_{DNA}^\ell$ and $\mathscr{S}$ = $\{\x,\y,\x^c,\y^c\}$, if the DNA strings $\x\y$, $\x\y^c$, $\y\x$ and $\y\x^c$ are all tandem-free DNA string with repeat-length $\ell$ then, from Proposition \ref{complement tandem free property}, all the DNA strings in the set $A$ = $\{\x\y,\x\y^c, \x^c\y, \x^c\y^c, \y\x, \y\x^c, \y^c\x, \y^c\x^c\}$ are tandem-free DNA string with repeat-length $\ell$. 
Thus, for any binary string $\a$ = $(a_1\ a_2\ \ldots\ a_n)\in\mathbb{Z}_2^n$, consider the encoded DNA string $\psi(\a)$ = $\u$ = $u_1u_2\ldots u_n$ in $\mathscr{S}^n$ that is obtained using $\ell$ order Non-Homopolymer map on $\a$, where
\begin{equation*}
u_i  = 
    \begin{cases}
    \psi(a_i,u_{i-1}) & \mbox{ for }i=2,3,\ldots,n\mbox{ and } \\
        f(a_1) & \mbox{ for } i=1.
    \end{cases}
\end{equation*}
Now, for $i=1,2,\ldots,n-1$, $u_iu_{i+1}\in A$, and thus, $u_iu_{i+1}$ is tandem-free DNA string with repeat-length $\ell$ for each $i$.
Hence, the encoded DNA string is tandem-free DNA string with repeat-length $\ell$.
\end{proof}
The $GC$-weight of the DNA string that is obtained from Homopolymer map applied on any binary string is discussed in Lemma \ref{kg gen gc cont}. \begin{lemma}
	For any integers $\ell$ ($\geq1$) and $n$ ($\geq1$), and given DNA strings $\x,\y\in\Sigma_{DNA}^\ell$, the $GC$-weight of any DNA string  $\u\in\psi(\mathbb{Z}_2^n)$ is
	\begin{equation*}
	w_{GC}(\u) = 
	\left\{
	\begin{array}{ll}
	\frac{n}{2}(w_{GC}(u_1)+w_{GC}(u_2)) & \mbox{ if } n \mbox{ is even integer,} \\
	w_{GC}(u_1)+\frac{(n-1)}{2}\left(w_{GC}(u_1)+w_{GC}(u_2)\right) & \mbox{ if } n \mbox{ is odd integer.}
	\end{array}
	\right.
	\end{equation*}
	\label{kg gen gc cont}
\end{lemma}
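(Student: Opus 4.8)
The plan is to reduce the computation to two elementary observations and then to a simple index count. Since $GC$-weight merely counts occurrences of $C$ and $G$, it is additive over concatenation; as $\u = u_1 u_2 \ldots u_n$ with each $u_i \in \mathscr{S} \subseteq \Sigma_{DNA}^\ell$, this gives $w_{GC}(\u) = \sum_{i=1}^n w_{GC}(u_i)$. Hence the whole statement follows once the sequence of block-weights $w_{GC}(u_1), w_{GC}(u_2), \ldots, w_{GC}(u_n)$ is understood.

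The first observation I would establish is that complementation preserves $GC$-weight: for any DNA string $\z$ one has $w_{GC}(\z^c) = w_{GC}(\z)$, because the Watson--Crick complement sends $C \mapsto G$ and $G \mapsto C$ (and $A \mapsto T$, $T \mapsto A$), so the combined count of $C$'s and $G$'s is unchanged. In particular $w_{GC}(\x) = w_{GC}(\x^c)$ and $w_{GC}(\y) = w_{GC}(\y^c)$. Thus every block lying in $\{\x,\x^c\}$ (call it \emph{$\x$-type}) contributes the common value $w_{GC}(\x)$, and every block lying in $\{\y,\y^c\}$ (call it \emph{$\y$-type}) contributes $w_{GC}(\y)$.

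The key structural step, and the heart of the argument, is to show that the type of the blocks alternates along $\u$ irrespective of the bits $a_i$. Inspecting the defining table of $\psi$, one sees that $\psi(b,\x),\psi(b,\x^c)\in\{\y,\y^c\}$ and $\psi(b,\y),\psi(b,\y^c)\in\{\x,\x^c\}$ for both $b\in\{0,1\}$; that is, $\psi$ always carries an $\x$-type argument to a $\y$-type value and a $\y$-type argument to an $\x$-type value, regardless of the bit. Since $u_1 = f(a_1)\in\{\x,\x^c\}$ is $\x$-type and $u_i = \psi(a_i,u_{i-1})$ for $i\geq 2$, a trivial induction shows $u_i$ is $\x$-type for odd $i$ and $\y$-type for even $i$. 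Consequently $w_{GC}(u_i) = w_{GC}(u_1)$ for odd $i$ and $w_{GC}(u_i) = w_{GC}(u_2)$ for even $i$.

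Finally I would collect the sum. Among the indices $1,2,\ldots,n$ there are $\lceil n/2\rceil$ odd and $\lfloor n/2\rfloor$ even, so $w_{GC}(\u) = \lceil n/2\rceil\, w_{GC}(u_1) + \lfloor n/2\rfloor\, w_{GC}(u_2)$. For $n$ even both counts equal $n/2$, giving $\tfrac{n}{2}(w_{GC}(u_1)+w_{GC}(u_2))$; for $n$ odd the counts are $(n+1)/2$ and $(n-1)/2$, and writing $\tfrac{n+1}{2} w_{GC}(u_1) = w_{GC}(u_1) + \tfrac{n-1}{2} w_{GC}(u_1)$ yields exactly $w_{GC}(u_1) + \tfrac{n-1}{2}(w_{GC}(u_1)+w_{GC}(u_2))$, matching the two cases stated. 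There is no genuine obstacle here: once the alternation of types is noted the result is pure bookkeeping, and the only point requiring care is to verify that the alternation holds for both values of the bit, so that it is independent of the particular binary input.
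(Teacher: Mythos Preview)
Your proposal is correct and follows essentially the same approach as the paper: both use additivity of $GC$-weight over concatenation and the fact that $w_{GC}(u_i)$ depends only on the parity of $i$, then finish with an index count. If anything, your argument is more explicit than the paper's on the key structural point: the paper simply asserts $w_{GC}(u_i)=w_{GC}(u_{i+2})$ ``from the $\ell$ order Non-Homopolymer map,'' whereas you spell out the alternation of $\x$-type and $\y$-type blocks via the transition table and the invariance of $GC$-weight under complementation.
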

\begin{proof} 
	For any integers $\ell$ ($\geq1$) and $n$ ($\geq1$), if a binary string $\a$ = $(a_1\ a_2 \ldots a_n)\in\mathbb{Z}_2^n$ is encoded into the DNA string $\u$ = $u_1u_2\ldots u_n\in\psi(\mathbb{Z}_2^n)$ using the $\ell$ order Non-Homopolymer map. Now, the $GC$-weight 
	\begin{equation*}
	    \begin{split}
	        w_{GC}(\u) & = \sum_{i=1}^nw_{GC}(u_i) \\ 
	                   & = 
	                   \begin{cases}
	                   \sum_{j=1}^{n/2}(w_{GC}(u_{2j-1})+w_{GC}(u_{2j})) & \mbox{ if }n\mbox{ is even} \\
	                   w_{GC}(u_1)+\sum_{j=1}^{(n-1)/2}(w_{GC}(u_{2j})+w_{GC}(u_{2j+1})) & \mbox{ if }n\mbox{ is odd}
	                   \end{cases}
	    \end{split}
	\end{equation*}
	But, from $\ell$ order Non-Homopolymer map, the $GC$-weight 
	\[
	w_{GC}(u_i)=w_{GC}(u_{i+2})\mbox{ for }i=1,2,\ldots,n-2.
	\]
	Therefore, 
	\begin{equation*}
	    \begin{split}
	        & w_{GC}(u_{2j-1})+w_{GC}(u_{2j}) = w_{GC}(u_1)+w_{GC}(u_2) \mbox{ for }j=1,2,\ldots,n/2, \mbox{ and } \\
	        & w_{GC}(u_{2j})+w_{GC}(u_{2j+1}) = w_{GC}(u_2)+w_{GC}(u_3)\mbox{ for }j=1,2,\ldots,(n-1)/2. 
	    \end{split}
	\end{equation*} 
	Also, $w_{GC}(u_1)$ = $w_{GC}(u_3)$.
	Thus,
	\begin{equation*}
	    \begin{split}
	        w_{GC}(\u) & = 
	                   \begin{cases}
	                   \sum_{j=1}^{n/2}(w_{GC}(u_1)+w_{GC}(u_2)) & \mbox{ if }n\mbox{ is even} \\
	                   w_{GC}(u_1)+\sum_{j=1}^{(n-1)/2}(w_{GC}(u_2)+w_{GC}(u_3)) & \mbox{ if }n\mbox{ is odd}
	                   \end{cases} \\
	                   & = 
	                   \begin{cases}
	                   \sum_{j=1}^{n/2}(w_{GC}(u_1)+w_{GC}(u_2)) & \mbox{ if }n\mbox{ is even} \\
	                   w_{GC}(u_1)+\sum_{j=1}^{(n-1)/2}(w_{GC}(u_1)+w_{GC}(u_2)) & \mbox{ if }n\mbox{ is odd}
	                   \end{cases} \\
	                   & = 
	                   \begin{cases}
	                   \frac{n}{2}(w_{GC}(u_1)+w_{GC}(u_2)) & \mbox{ if }n\mbox{ is even} \\
	                   w_{GC}(u_1)+\frac{(n-1)}{2}(w_{GC}(u_1)+w_{GC}(u_2)) & \mbox{ if }n\mbox{ is odd}.
	                   \end{cases}
	    \end{split}
	\end{equation*} 
	It follows the result. 
\end{proof}
From Lemma \ref{kg gen gc cont}, one can obtain Proposition \ref{kg gen gc cont proposition}, and further, Proposition \ref{KG GC cont} that ensures the $GC$-weight for encoded DNA codes.
\begin{proposition}
For any integers $\ell$ ($\geq1$) and $n$ ($\geq1$), and given DNA strings $\x,\y\in\Sigma_{DNA}^\ell$, if $w_{GC}(\x)+w_{GC}(\y)$ = $\ell$ then the $GC$-weight of any DNA string $\u\in\psi(\mathbb{Z}_2^n)$ is
	\begin{equation*}
	w_{GC}(\u) = 
	\left\{
	\begin{array}{ll}
	w_{GC}(u_1)+\frac{(n-1)}{2}\ell & \mbox{ if } n \mbox{ is odd integer} \\
	\frac{n}{2}\ell & \mbox{ if } n \mbox{ is even integer}.
	\end{array}
	\right.
	\end{equation*}
	\label{kg gen gc cont proposition}
\end{proposition}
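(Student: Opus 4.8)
The plan is to derive the claim directly from Lemma \ref{kg gen gc cont} by showing that the hypothesis $w_{GC}(\x) + w_{GC}(\y) = \ell$ forces $w_{GC}(u_1) + w_{GC}(u_2) = \ell$, and then substituting this into the two cases of that lemma.

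First I would record the elementary but crucial fact that Watson-Crick complementation preserves $GC$-weight: since $A^c = T$, $T^c = A$, $C^c = G$ and $G^c = C$, each nucleotide and its complement carry the same $GC$-contribution (both $A, T$ contribute $0$, while both $C, G$ contribute $1$). Consequently $w_{GC}(\x^c) = w_{GC}(\x)$ and $w_{GC}(\y^c) = w_{GC}(\y)$ for the two seed strings.

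Next I would use the structure of the $\ell$ order Non-Homopolymer map to locate $u_1$ and $u_2$ within $\mathscr{S} = \{\x, \y, \x^c, \y^c\}$. By definition $u_1 = f(a_1) \in \{\x, \x^c\}$, and inspecting the four relevant values $\psi(0,\x)$, $\psi(1,\x)$, $\psi(0,\x^c)$, $\psi(1,\x^c)$ shows that $u_2 = \psi(a_2, u_1) \in \{\y, \y^c\}$ regardless of the bits $a_1, a_2$. Combining this with the weight-invariance of complementation gives $w_{GC}(u_1) = w_{GC}(\x)$ and $w_{GC}(u_2) = w_{GC}(\y)$, hence $w_{GC}(u_1) + w_{GC}(u_2) = w_{GC}(\x) + w_{GC}(\y) = \ell$.

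Finally I would substitute this identity into the formula of Lemma \ref{kg gen gc cont}: for even $n$ the expression $\frac{n}{2}(w_{GC}(u_1) + w_{GC}(u_2))$ collapses to $\frac{n}{2}\ell$, while for odd $n$ the expression $w_{GC}(u_1) + \frac{(n-1)}{2}(w_{GC}(u_1) + w_{GC}(u_2))$ collapses to $w_{GC}(u_1) + \frac{(n-1)}{2}\ell$, which are exactly the two stated formulas. There is no serious obstacle here; the only point requiring care is the observation that $u_1$ and $u_2$ always come from the $\x$-pair and the $\y$-pair respectively, so that their weights sum to the prescribed $\ell$ independently of the input bits — everything else is direct substitution.
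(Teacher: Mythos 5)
Your proposal is correct and follows essentially the same route as the paper, which derives the proposition directly from Lemma \ref{kg gen gc cont} (the paper states this deduction without spelling it out). The bridging step you make explicit --- that $u_1\in\{\x,\x^c\}$ and $u_2\in\{\y,\y^c\}$ under the map $\psi$, and that Watson--Crick complementation preserves $GC$-weight, so $w_{GC}(u_1)+w_{GC}(u_2)=w_{GC}(\x)+w_{GC}(\y)=\ell$ --- is exactly the implicit substitution the paper relies on, and you verify it correctly.
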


\begin{example}{Example}
For $\ell$ = $2$, if $\x$ = $AT$ and $\y$ = $CG$ then $w_{GC}(\psi(\x))$ = $0$, and $w_{GC}(\psi(\y))$ = $2$. 
\begin{itemize}
    \item Now, for $n$ = $3$ (a odd integer), if $\a\in\mathbb{Z}_2^3$ then $\psi(\a)$ = $\u$ = $u_1u_2u_3$, where $u_1,u_3\in\{AT,TA\}$ and $u_2\in\{GC,CG\}$. 
    Therefore, $w_{GC}(u_1)$ = $w_{GC}(\psi(\x))$ = $0$.
    In this case, from Proposition \ref{kg gen gc cont proposition}, $w_{GC}(\psi(\a))$ = $0+\frac{(3-1)}{2}\cdot2$ = $2$, and it can be verified as follows.
\[
      \begin{array}{c||c|c|c}\hline
          \a        & \psi(\a)     & \u     & w_{GC}(\u) \\ \hline
          (0\ 0\ 0) & \x\y\x^c     & ATCGTA & 2 \\
          (0\ 0\ 1) & \x\y\x       & ATCGAT & 2 \\
          (0\ 1\ 0) & \x\y^c\x     & ATGCAT & 2 \\
          (0\ 1\ 1) & \x\y^c\x^c   & ATGCTA & 2 \\
          (1\ 0\ 0) & \x^c\y^c\x   & TAGCAT & 2 \\
          (1\ 0\ 1) & \x^c\y^c\x^c & TAGCTA & 2 \\
          (1\ 1\ 0) & \x^c\y\x^c   & TACGTA & 2 \\
          (1\ 1\ 1) & \x^c\y\x     & TACGAT & 2 \\ \hline
      \end{array}
\]
    \item Also, for $n$ = $4$ (an even integer), if $\a\in\mathbb{Z}_2^4$ then $\psi(\a)$ = $\u$ = $u_1u_2u_3u_4$, where $u_1,u_3\in\{AT,TA\}$ and $u_2,u_4\in\{GC,CG\}$. 
    Therefore, $w_{GC}(u_1)$ = $w_{GC}(\psi(\x))$ = $0$.
    Again, from Proposition \ref{kg gen gc cont proposition}, $w_{GC}(\psi(\a))$ = $0+\frac{4}{2}\cdot2$ = $4$, and it can be verified as follows.
    \[
\begin{array}{c||c|c|c}
           \a           & \psi(\a)            & \u       & w_{GC}(\u) \\ \hline
           (0\ 0\ 0\ 0) & \x \y \x^c \y^c     & ATCGTAGC & 4 \\
           (0\ 0\ 0\ 1) & \x \y \x^c \y       & ATCGTACG & 4 \\
           (0\ 0\ 1\ 0) & \x \y \x \y         & ATCGATCG & 4 \\
           (0\ 0\ 1\ 1) & \x \y \x \y^c       & ATCGATGC & 4 \\
           (0\ 1\ 0\ 0) & \x \y^c \x \y       & ATGCATCG & 4 \\
           (0\ 1\ 0\ 1) & \x \y^c \x \y^c     & ATGCATGC & 4 \\
           (0\ 1\ 1\ 0) & \x \y^c \x^c \y^c   & ATGCTAGC & 4 \\
           (0\ 1\ 1\ 1) & \x \y^c \x^c \y     & ATGCTACG & 4 \\
           (1\ 0\ 0\ 0) & \x^c \y^c \x \y     & TAGCATCG & 4 \\
           (1\ 0\ 0\ 1) & \x^c \y^c \x \y^c   & TAGCATGC & 4 \\
           (1\ 0\ 1\ 0) & \x^c \y^c \x^c \y^c & TAGCTAGC & 4 \\
           (1\ 0\ 1\ 1) & \x^c \y^c \x^c \y   & TAGCTACG & 4 \\
           (1\ 1\ 0\ 0) & \x^c \y \x^c \y^c   & TACGTAGC & 4 \\
           (1\ 1\ 0\ 1) & \x^c \y \x^c \y     & TACGTACG & 4 \\
           (1\ 1\ 1\ 0) & \x^c \y \x \y       & TACGATCG & 4 \\
           (1\ 1\ 1\ 1) & \x^c \y \x \y^c     & TACGATGC & 4 
      \end{array}
\]
\end{itemize}
\end{example}

\begin{proposition}
For any integers $\ell$ ($\geq1$) and $n$ ($\geq1$), and given DNA strings $\x,\y\in\Sigma_{DNA}^\ell$, the $GC$-weight of any DNA string  $\u\in\psi(\mathbb{Z}_2^n)$ is
\begin{equation*}
	w_{GC}(\u) = 
	\left\{
	\begin{array}{ll}
	\lfloor n\ell/2\rfloor & \mbox{ if } w_{GC}(\x)=\lfloor\ell/2\rfloor\mbox{ and } w_{GC}(\y)=\lceil\ell/2\rceil,  \\
	\lceil n\ell/2 \rceil & \mbox{ if } w_{GC}(\x)=\lceil\ell/2\rceil\mbox{ and } w_{GC}(\y)=\lfloor\ell/2\rfloor.
	\end{array}\right.
	\end{equation*}
	\label{KG GC cont}
\end{proposition}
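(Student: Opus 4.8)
The plan is to reduce this proposition to Proposition \ref{kg gen gc cont proposition}, which already evaluates $w_{GC}(\u)$ whenever $w_{GC}(\x)+w_{GC}(\y)=\ell$. The first thing I would verify is that this hypothesis holds in both cases stated here. Indeed, in either case $w_{GC}(\x)+w_{GC}(\y)=\lfloor\ell/2\rfloor+\lceil\ell/2\rceil=\ell$, so Proposition \ref{kg gen gc cont proposition} applies verbatim and gives $w_{GC}(\u)=\frac{n}{2}\ell$ for even $n$ and $w_{GC}(\u)=w_{GC}(u_1)+\frac{(n-1)}{2}\ell$ for odd $n$.

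Next I would pin down $w_{GC}(u_1)$. From the definition of the $\ell$ order Non-Homopolymer map we have $u_1=f(a_1)\in\{\x,\x^c\}$. Since the Watson-Crick complement interchanges $C$ with $G$ and $A$ with $T$, complementation preserves the combined count of $C$ and $G$, so $w_{GC}(\x^c)=w_{GC}(\x)$. Hence $w_{GC}(u_1)=w_{GC}(\x)$ irrespective of $a_1$: it equals $\lfloor\ell/2\rfloor$ in the first case and $\lceil\ell/2\rceil$ in the second. This is the observation that lets me drop the dependence on $u_1$ and obtain a closed form in $n$ and $\ell$ alone.

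With these two facts the rest is arithmetic. For even $n$ the claim is immediate, since $n\ell$ is then even and $\frac{n}{2}\ell=\lfloor n\ell/2\rfloor=\lceil n\ell/2\rceil$. For odd $n$ I would substitute $w_{GC}(u_1)$ into the odd-$n$ formula and split on the parity of $\ell$. In the first case $\lfloor\ell/2\rfloor+\frac{(n-1)}{2}\ell$ collapses to $n\ell/2$ when $\ell$ is even and to $(n\ell-1)/2$ when $\ell$ is odd (here $n$ odd forces $n\ell$ odd), and both values are exactly $\lfloor n\ell/2\rfloor$; symmetrically, in the second case $\lceil\ell/2\rceil+\frac{(n-1)}{2}\ell$ reduces to $\lceil n\ell/2\rceil$. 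The step I expect to demand the most care is precisely this parity bookkeeping: matching $\lfloor n\ell/2\rfloor$ versus $\lceil n\ell/2\rceil$ requires tracking that $n\ell$ is odd exactly when both $n$ and $\ell$ are odd, and verifying that the half-integer contribution from $\lfloor\ell/2\rfloor$ or $\lceil\ell/2\rceil$ lands on the correct side of the rounding.
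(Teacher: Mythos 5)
Your proposal is correct and follows exactly the route the paper intends: the paper derives Proposition \ref{KG GC cont} from Proposition \ref{kg gen gc cont proposition} (noting $\lfloor\ell/2\rfloor+\lceil\ell/2\rceil=\ell$), which is precisely your reduction. Your two supporting observations --- that $w_{GC}(u_1)=w_{GC}(\x)$ because complementation preserves $GC$-weight, and the parity bookkeeping matching $\lfloor n\ell/2\rfloor$ versus $\lceil n\ell/2\rceil$ --- correctly fill in the arithmetic the paper leaves implicit.
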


\begin{example}{Example}
For $\ell$ = $3$ and $n$ = $3$, if $\x,\y\in\Sigma_{DNA}^3$ and $\a\in\mathbb{Z}_2^3$ then consider $\psi(\a)$ = $\u$ = $u_1u_2u_3$, where $u_1,u_3\in\{\x^c,\x\}$ and $u_2\in\{\y,^c,\y\}$.
Then one can observe the following.
\begin{itemize}
    \item If $\x$ = $ACA$ and $\y$ = $CTC$ then $w_{GC}(\psi(\x))$ = $\lfloor 3/2\rfloor$ = $1$, and $w_{GC}(\psi(\y))$ = $\lceil 3/2\rceil$ = $2$.
    In this case, from Proposition \ref{KG GC cont}, $w_{GC}(\u)$ = $\lfloor 3\cdot3/2\rfloor$ = $4$, and it can be verified as follows.
\[
      \begin{array}{c||c|c|c}\hline
          \a        & \psi(\a)     & \u     & w_{GC}(\u) \\ \hline
          (0\ 0\ 0) & \x\y\x^c     & ACACTCTGT & 4 \\
          (0\ 0\ 1) & \x\y\x       & ACACTCACA & 4 \\
          (0\ 1\ 0) & \x\y^c\x     & ACAGAGACA & 4 \\
          (0\ 1\ 1) & \x\y^c\x^c   & ACAGAGTGT & 4 \\
          (1\ 0\ 0) & \x^c\y^c\x   & TGTGAGACA & 4 \\
          (1\ 0\ 1) & \x^c\y^c\x^c & TGTGAGTGT & 4 \\
          (1\ 1\ 0) & \x^c\y\x^c   & TGTCTCTGT & 4 \\
          (1\ 1\ 1) & \x^c\y\x     & TGTCTCACA & 4 \\ \hline
      \end{array}
\]
    \item Also, if $\x$ = $CGA$ and $\y$ = $CAT$ then $w_{GC}(\psi(\x))$ = $\lceil 3/2\rceil$ = $2$, and $w_{GC}(\psi(\y))$ = $\lfloor 3/2\rfloor$ = $1$.
    In this case, from Proposition \ref{KG GC cont}, $w_{GC}(\u)$ = $\lceil 3\cdot3/2\rceil$ = $5$, and it can be verified as follows.
    \[
\begin{array}{c||c|c|c}
           \a       & \psi(\a)     & \u        & w_{GC}(\u) \\ \hline
          (0\ 0\ 0) & \x\y\x^c     & CGACATGCT & 5 \\
          (0\ 0\ 1) & \x\y\x       & CGACATCGA & 5 \\
          (0\ 1\ 0) & \x\y^c\x     & CGAGTACGA & 5 \\
          (0\ 1\ 1) & \x\y^c\x^c   & CGAGTAGCT & 5 \\
          (1\ 0\ 0) & \x^c\y^c\x   & GCTGTACGA & 5 \\
          (1\ 0\ 1) & \x^c\y^c\x^c & GCTGTAGCT & 5 \\
          (1\ 1\ 0) & \x^c\y\x^c   & GCTCATGCT & 5 \\
          (1\ 1\ 1) & \x^c\y\x     & GCTCATCGA & 5 \\ \hline
      \end{array}
\]
\end{itemize}
\end{example}

\subsection{The Non-Homopolymer Distance and Properties}
Now, we define a distance as given in Definition \ref{NHo distance def} for any alphabet of size $q$ such that the distance is equal to the Hamming distance in the respective DNA codes for a special case of binary alphabet.
\begin{definition}
	For any integer $n$ ($> 1$) and an alphabet $\mathcal{A}_q$ ($q\leq2$), consider $\a$ = $(a_1\ a_2\ldots a_n)$ and $\b$ = $(b_1\ b_2\ldots b_n)$ in $\mathcal{A}_q^n$.
	Now, for the support set 
	\[
	S = \{i:i=1,2,\ldots,n\mbox{ and }a_i\neq b_i\},
	\]
	and the set 
	\begin{equation*}
	T =
	\left\{
	\begin{array}{ll}
	S\cup\{n+1\} & \mbox{ if the size of the set } S \mbox{ is odd}, \\
	S & \mbox{ if the size of the set } S \mbox{ is even}, 
	\end{array}
	\right.
	\end{equation*}
	if the extended support set $T$ is a nonempty set then consider $T$ = $\{t_1,t_2,\ldots,t_{|T|}\}$ such that $t_j<t_{j+1}$ for $j=1,2,\ldots,|T|-1$, where $|T|$ represents the size of the set $T$. 
	For any integer $\ell$ ($\geq1$), define a map 
	\begin{equation*}
	\begin{split}
	    & d_{NHo}:\mathcal{A}_q^n\times\mathcal{A}_q^n\rightarrow\mathbb{R} \mbox{ such that }\\
	    & d_{NHo}(\a,\b) = 
	\left\{
	\begin{array}{ll}
	\ell\sum_{j=1}^{|T|/2} (t_{2j}-t_{2j-1}) & \mbox{ if } |T|>0,\\
	0 & \mbox{ if } |T|=0.
	\end{array}
	\right.
	\end{split}
	\end{equation*}
	\label{NHo distance def}
\end{definition}
\begin{example}{Example}
For $n=5$ and $\ell=3$, consider $\a=(1\ 0\ 0\ 0\ 0)$ and $\b=(1\ 1\ 1\ 0\ 1)$ in $\mathbb{Z}_2^5$.
Then the support set $S$ = $\{2,3,5\}$, and thus, the extended support set $T$ = $\{2,3,5,6\}$.
Therefore, 
\begin{equation*}
    \begin{split}
        d_{NHo}(\a,\b) & =  3\left((3-2)+(6-5)\right) \\
        & = 6.
    \end{split}
\end{equation*}
\end{example}
From Definition \ref{NHo distance def}, one can observe Remark \ref{distance remark 1} and Remark \ref{distance remark 2} as follows.
\begin{remark}
For $\x,\y\in\Sigma_{DNA}^\ell$ and any $\a\in\mathbb{Z}_2^n$, if $\psi(\a)$ = $\u$ = $u_1u_2\ldots u_n$ in $\psi(\mathbb{Z}_2^n)$ then 
\[
u_i\in
\begin{cases}
\{\x^c,\x\} & \mbox{ if }i\mbox{ is odd, and} \\
\{\y,^c,\y\} & \mbox{ if }i\mbox{ is even.}
\end{cases}
\]
\label{distance remark 1}
\end{remark}
\begin{remark}
For $\x,\y\in\Sigma_{DNA}^\ell$ and any $\a,\b\in\mathbb{Z}_2^n$, consider $\psi(\a)$ = $\u$ = $ u_1 u_2\ldots u_n$ and $\psi(\b)$ = $\v$ = $ v_1 v_2\ldots v_n$ in $\psi(\mathbb{Z}_2^n)$ with support set $S$ = $\{t_1,t_2,\ldots,t_s\}$ of size $s$ such that $1\leq t_1<t_2<\ldots<t_s\leq n$.
Then, 
\begin{itemize}
    \item if $t_1>1$ then the DNA sub-strings $ u_1 u_2\ldots u_{t_1-1}$ and $ v_1 v_2\ldots v_{t_1-1}$ exist, and 
    \[
     u_1 u_2\ldots u_{t_1-1} =  v_1 v_2\ldots v_{t_1-1},
    \]
    \item for any odd integer $i$, if $t_i$ and $t_{i+1}$ are in the extended support set $T$ then the DNA sub-strings $ u_{t_i} u_{t_i+1}\ldots u_{t_{i+1}-1}$ and $ v_{t_i} v_{t_i+1}\ldots v_{t_{i+1}-1}$ exist, and
    \[
     u_{t_i} u_{t_i+1}\ldots u_{t_{i+1}-1} =  v_{t_i}^c v_{t_i+1}^c\ldots v_{t_{i+1}-1}^c,
    \]
    \item for any even integer $i$, if $t_i$ and $t_{i+1}$ are in the extended support set $T$ then the DNA sub-strings $ u_{t_i} u_{t_i+1}\ldots u_{t_{i+1}-1}$ and $ v_{t_i} v_{t_i+1}\ldots v_{t_{i+1}-1}$ exist, and
    \[
     u_{t_i} u_{t_i+1}\ldots u_{t_{i+1}-1} =  v_{t_i} v_{t_i+1}\ldots v_{t_{i+1}-1}, and
    \]
    \item if $t_s\leq n$ then the DNA sub-strings $ u_s u_{s+1}\ldots u_n$ and $ v_s v_{s+1}\ldots v_n$ exist, and 
    \[
     u_s u_{s+1}\ldots u_n = 
    \begin{cases}
     v_s v_{s+1}\ldots v_n & \mbox{ if }s\mbox{ is even, and} \\
     v_s^c v_{s+1}^c\ldots v_n^c & \mbox{ if }s\mbox{ is odd.}
    \end{cases}
    \]
\end{itemize}
\label{distance remark 2}
\end{remark}
We have shown in Lemma \ref{Nho distance proof} that the real map as given in Definition \ref{NHo distance def} is a distance.
For that first we need a result that is given in Lemma \ref{recurrence on Non Homopolymer distance}. 
\begin{lemma}
For any integer $\ell$ ($\geq1$), any $\a,\b\in\mathcal{A}_q^n$ and any $a,b\in\mathcal{A}_q$, 
\[
d_{NHo}((\a\ a),(\b\ b)) = 
\begin{cases}
d_{NHo}(\a,\b) & \mbox{ if }a=b\mbox{ and }|S|\mbox{ is even,} \\
\ell+d_{NHo}(\a,\b) & \mbox{ if }a=b\mbox{ and }|S|\mbox{ is odd,} \\
\ell+d_{NHo}(\a,\b) & \mbox{ if }a\neq b\mbox{ and }|S|\mbox{ is even, and} \\
d_{NHo}(\a,\b) & \mbox{ if }a\neq b\mbox{ and }|S|\mbox{ is odd.} 
\end{cases}
\]
\label{recurrence on Non Homopolymer distance}
\end{lemma}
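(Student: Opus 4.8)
The plan is to prove the recurrence by directly tracking how appending the symbols $a$ and $b$ alters the support set and, through it, the extended support set from Definition \ref{NHo distance def}. Writing $S$ for the support set of $\a,\b\in\mathcal{A}_q^n$ and $S'$ for that of $(\a\ a),(\b\ b)\in\mathcal{A}_q^{n+1}$, the first observation is that $S'=S$ when $a=b$, while $S'=S\cup\{n+1\}$ when $a\neq b$, since the only new coordinate is position $n+1$. The second observation is bookkeeping about the ``virtual'' padding endpoint: the extension rule appends $n+1$ to $S$ (for the length-$n$ pair) precisely when $|S|$ is odd, whereas it appends $n+2$ to $S'$ (for the length-$(n+1)$ pair) precisely when $|S'|$ is odd. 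Keeping these two different padding values straight is the heart of the argument.

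First I would fix $S=\{s_1<s_2<\cdots<s_m\}$ and split into the four cases according to whether $a=b$ and the parity of $m=|S|$. In each case I write down $T$ and $T'$ explicitly and compare the telescoping sum $\ell\sum_j(t_{2j}-t_{2j-1})$. When $a=b$ and $m$ is even, $T=T'=S$ and the sums coincide, giving $d_{NHo}((\a\ a),(\b\ b))=d_{NHo}(\a,\b)$. When $a=b$ and $m$ is odd, $T=S\cup\{n+1\}$ but $T'=S\cup\{n+2\}$, so every pair agrees except the final one, whose contribution rises from $n+1-s_m$ to $n+2-s_m$, adding exactly $\ell$. When $a\neq b$ and $m$ is even, $T=S$ while $T'=S\cup\{n+1,n+2\}$, so the sums agree up to the extra final pair $(n+1,n+2)$, again contributing $\ell$. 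Finally, when $a\neq b$ and $m$ is odd, both $T$ and $T'$ equal $S\cup\{n+1\}$, since the genuine difference at position $n+1$ now plays exactly the role previously filled by the virtual endpoint, so the two distances coincide.

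The main obstacle, as flagged above, is the interaction between the parity-driven padding and the newly appended coordinate: the cases $a\neq b$ with $m$ odd and $a=b$ with $m$ odd look superficially similar but behave oppositely, because in one the real coordinate $n+1$ absorbs the padding while in the other the padding simply shifts from $n+1$ to $n+2$. Once the four cases are laid out with explicit endpoints, each reduces to comparing only the last one or two terms of the telescoping sum, and the stated values $0$ or $\ell$ fall out immediately.
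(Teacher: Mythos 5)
Your proposal is correct and follows essentially the same route as the paper's proof: a four-way case split on whether $a=b$ and on the parity of $|S|$, tracking how the support set and then the extended support set change when one coordinate is appended. You actually supply slightly more justification than the paper does (explicitly comparing the final terms of the telescoping sum), and your use of the position indices $n+1$ and $n+2$ for the appended and padding coordinates is the correct reading of what the paper writes, somewhat confusingly, as $|S|+1$ and $|S|+2$.
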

\begin{proof}
For any $\a$ and $\b$ in $\mathcal{A}_q^n$, the support set and extended support set are $S$ and $T$.
For any $a,b\in\mathcal{A}_q$, consider $(\a\ a)$ and $(\b\ b)$ in $\mathcal{A}_q^{n+1}$ along with the support set $S^*$ and extended support set $T^*$. 
Then, from Definition \ref{NHo distance def}, the support set
\begin{equation*}
    S^* =
    \begin{cases}
    S & \mbox{ if }a=b, \\
    S\cup\{|S|+1\} & \mbox{ if }a\neq b,
    \end{cases}
\end{equation*}
and the extended support set 
\begin{equation*}
    T^* =
    \begin{cases}
S & \mbox{ if }a = b\mbox{ and }|S|\mbox{ is even,} \\
S\cup\{|S|+2\} & \mbox{ if }a = b\mbox{ and }|S|\mbox{ is odd,} \\
S\cup\{|S|+1,|S|+2\} & \mbox{ if }a\neq b\mbox{ and }|S|\mbox{ is even, and} \\
S\cup\{|S|+1\} & \mbox{ if }a\neq b\mbox{ and }|S|\mbox{ is odd.} 
    \end{cases}
\end{equation*}
Therefore, from Definition \ref{NHo distance def}, 
\[
d_{NHo}((\a\ a),(\b\ b)) = 
\begin{cases}
d_{NHo}(\a,\b) & \mbox{ if }a=b\mbox{ and }|S|\mbox{ is even,} \\
\ell+d_{NHo}(\a,\b) & \mbox{ if }a=b\mbox{ and }|S|\mbox{ is odd,} \\
\ell+d_{NHo}(\a,\b) & \mbox{ if }a\neq b\mbox{ and }|S|\mbox{ is even, and} \\
d_{NHo}(\a,\b) & \mbox{ if }a\neq b\mbox{ and }|S|\mbox{ is odd.} 
\end{cases}
\]
It follows the result.
\end{proof}
\begin{lemma}
The map $d_{NHo}:\mathcal{A}_q\times\mathcal{A}_q\rightarrow\mathbb{R}$, as given in Definition \ref{NHo distance def}, is a distance.
\label{Nho distance proof}
\end{lemma}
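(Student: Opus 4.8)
The plan is to verify the four metric axioms---non-negativity, the identity of indiscernibles, symmetry, and the triangle inequality---for $d_{NHo}$ on $\mathcal{A}_q^n$. The first three fall out immediately once the defining formula is put in a more transparent shape, so I would begin by proving the reformulation $d_{NHo}(\a,\b)=\ell\sum_{i=1}^n P_i(\a,\b)$, where $P_i(\a,\b)$ denotes the parity of $|S\cap\{1,\dots,i\}|$, i.e. the running parity that toggles at every coordinate where $\a$ and $\b$ differ. To see this, observe that the pairing in Definition \ref{NHo distance def} brackets consecutive mismatch positions, so each gap $t_{2j}-t_{2j-1}$ counts exactly the indices lying in $[t_{2j-1},t_{2j})$, and the union of these half-open intervals is precisely $\{i\in\{1,\dots,n\}:P_i(\a,\b)=1\}$. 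The point $n+1$ appended to $T$ in the odd case is exactly what is needed to close the final such interval at the correct right end, so the total count is $\sum_j(t_{2j}-t_{2j-1})$.

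From this form the three easy axioms are immediate: every summand is non-negative and $\ell\geq1$, whence $d_{NHo}\geq0$; the sum vanishes if and only if $P_i=0$ for all $i$, i.e. if and only if $S=\emptyset$, i.e. if and only if $\a=\b$, which is the identity of indiscernibles; and $S$, hence each $P_i$, is symmetric in its two arguments, giving symmetry.

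The triangle inequality is the crux, and here the hypothesis $q\leq2$ (binary alphabet) enters essentially. For $q=2$ the mismatch indicator is the XOR, $[a_j\neq b_j]=a_j\oplus b_j$, so the running parity factors through each argument separately: writing $\sigma_i(\x)=x_1\oplus\cdots\oplus x_i$ for the prefix-sum, one has $P_i(\a,\b)=\sigma_i(\a)\oplus\sigma_i(\b)=[\sigma_i(\a)\neq\sigma_i(\b)]$. Consequently $d_{NHo}(\a,\b)=\ell\,H(\sigma(\a),\sigma(\b))$ with $\sigma(\x)=(\sigma_1(\x),\dots,\sigma_n(\x))$. Since $\sigma:\mathbb{Z}_2^n\to\mathbb{Z}_2^n$ is a linear bijection (its inverse being the consecutive-difference map $x_i=\sigma_i\oplus\sigma_{i-1}$), $d_{NHo}$ is simply the Hamming metric pulled back along a bijection and scaled by $\ell>0$; all metric axioms, in particular the triangle inequality, transfer verbatim from $H$, which completes the argument.

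I expect the main obstacle to be precisely the factorization $P_i(\a,\b)=[\sigma_i(\a)\neq\sigma_i(\b)]$, since this is the one step where binarity is indispensable. For $q\geq3$ a single coordinate at which $a_j,b_j,c_j$ are pairwise distinct already breaks the per-index estimate $P_i(\a,\c)\leq P_i(\a,\b)+P_i(\b,\c)$, and one can in fact exhibit triples for which the triangle inequality fails outright; thus the restriction $q\leq2$ in Definition \ref{NHo distance def} cannot be removed. An alternative route for the triangle step, again specific to binary, is to induct on $n$ through the recurrence of Lemma \ref{recurrence on Non Homopolymer distance}: over $\mathbb{Z}_2$ the per-coordinate increment $I$ is linear, so $I(\a,\c)=I(\a,\b)\oplus I(\b,\c)\leq I(\a,\b)+I(\b,\c)$, and this together with the inductive hypothesis closes the estimate.
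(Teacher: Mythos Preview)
Your argument is correct and takes a cleaner route than the paper's. The paper handles the first three axioms directly from Definition~\ref{NHo distance def}, much as you do, but for the triangle inequality it runs an induction on $n$ through the recurrence of Lemma~\ref{recurrence on Non Homopolymer distance} and then dismisses the resulting case split with ``for various cases, one can easily obtain\ldots''. Your reformulation $d_{NHo}(\a,\b)=\ell\sum_{i=1}^n P_i(\a,\b)$ as a running-parity count, followed in the binary case by the identification $d_{NHo}(\a,\b)=\ell\,H(\sigma(\a),\sigma(\b))$ via the prefix-XOR bijection $\sigma$, collapses the entire lemma to the observation that a positive scalar multiple of the Hamming metric pulled back along a bijection is again a metric. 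This is both shorter and more explanatory: it shows structurally why the restriction $q\le 2$ is essential---the factorisation $P_i(\a,\b)=\sigma_i(\a)\oplus\sigma_i(\b)$ is precisely the linearity of XOR---whereas the paper's induction, while correct over $\mathbb{Z}_2$, does not make visible where a larger alphabet would break. Your closing remark that the inductive route also goes through, with the per-coordinate increment being $\mathbb{Z}_2$-linear, is effectively the honest version of the case analysis the paper leaves to the reader.
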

\begin{proof} 
A real map is called distance if the map follows non-negative property, identity of indiscernibles property, symmetry property and triangular property.
For the real map $d_{NHo}$, one can observe the following. 
\begin{description}[Identity Of Indiscernibles:]
    \item[Non-Negative Property:] For any integer $\ell$ ($\geq1$) and any $\a,\b\in\mathcal{A}_q^n$, consider the nonempty extended support set $T$ = $\{t_1,t_2,\ldots,t_{|T|}\}$, where $t_j<t_{j+1}$ for $j=1,2,\ldots,|T|-1$.
    Then, 
    \begin{equation*}
        \begin{split}
            & t_{2j}-t_{2j-1}>0 \mbox{ for }j=1,2,\ldots,|T|/2  \\
           \Rightarrow\hspace{0.1cm} & \ell\sum_{j=1}^{|T|/2} (t_{2j}-t_{2j-1})>0 \\
           \Rightarrow\hspace{0.1cm} & d_{NHo}(\a,\b)>0 \mbox{ for any }\a,\b\in\mathcal{A}_q^n. \\
        \end{split}
    \end{equation*}
    Now, if the empty extended support set is empty, $i.e.$, $T$ = $\emptyset$ then the proof for the non-negative property is trivial.
    \item[Identity of Indiscernibles:] For any $\a$ = $(a_1\ a_2\ \ldots\ a_n)$ and $\b$ = $(b_1\ b_2\ \ldots\ b_n)$ in $\mathcal{A}_q^n$, the distance 
    \begin{equation*}
        \begin{split}
            & d_{NHo}(\a,\b) = 0 \\
           \Leftrightarrow\hspace{0.1cm} & T = \emptyset \\
           \Leftrightarrow\hspace{0.1cm} & S = \emptyset \\
           \Leftrightarrow\hspace{0.1cm} & a_i = b_i \mbox{ for }i=1,2,\ldots,n \\
           \Leftrightarrow\hspace{0.1cm} & \a=\b.
        \end{split}
    \end{equation*}
    \item[Symmetry Property:] For any $\a,\b\in\mathcal{A}_q^n$, 
    the support set for the both $d_{NHo}(\a,\b)$ and $d_{NHo}(\b,\a)$ are the same, and thus, $d_{NHo}(\a,\b)$ = $d_{NHo}(\b,\a)$.
    \item[Triangular Property:] Using Mathematical Induction over $n$, we have shown the triangle property for $d_{NHo}$. \\
    Base Case: For $n=1$, it is easy to verify that the map $d_{NHo}$ holds Triangle property $d_{NHo}(a,b)\leq d_{NHo}(a,c)+d_{NHo}(c,b)$ for any $a,b,c\in\mathcal{A}_q$. \\
    Hypothesis: For $n=k$ and any $\a,\b,\c\in\mathcal{A}_q^k$, we assume that the map $d_{NHo}$ holds Triangle property, $i.e.$,  \[d_{NHo}(\a,\b)\leq d_{NHo}(\a,\c)+d_{NHo}(\c,\b).\] \\
    Inductive Step: For any $\a,\b,\c\in\mathcal{A}_q^k$ and any $a,b\in\mathcal{A}_q$, consider support sets $S_{\a,\b}$, $S_{\a,\c}$ and $S_{\c,\b}$ for $d_{NHo}(\a,\b)$, $d_{NHo}(\a,\c)$ and $d_{NHo}(\c,\b)$, respectively.
    Now, from Lemma \ref{recurrence on Non Homopolymer distance}, \\
    $d_{NHo}((\a\ a),(\b\ b))$ =
    \[
\begin{cases}
d_{NHo}(\a,\b) & \mbox{ if }a=b\mbox{ and }|S_{\a,\b}|\mbox{ is even,} \\
\ell+d_{NHo}(\a,\b) & \mbox{ if }a=b\mbox{ and }|S_{\a,\b}|\mbox{ is odd,} \\
\ell+d_{NHo}(\a,\b) & \mbox{ if }a\neq b\mbox{ and }|S_{\a,\b}|\mbox{ is even,} \\
d_{NHo}(\a,\b) & \mbox{ if }a\neq b\mbox{ and }|S_{\a,\b}|\mbox{ is odd,} 
\end{cases}
\]
$d_{NHo}((\a\ a),(\c\ c))$ =  
\[
\begin{cases}
d_{NHo}(\a,\c) & \mbox{ if }a=c\mbox{ and }|S_{\a,\c}|\mbox{ is even,} \\
\ell+d_{NHo}(\a,\c) & \mbox{ if }a=c\mbox{ and }|S_{\a,\c}|\mbox{ is odd,} \\
\ell+d_{NHo}(\a,\c) & \mbox{ if }a\neq c\mbox{ and }|S_{\a,\c}|\mbox{ is even,} \\
d_{NHo}(\a,\c) & \mbox{ if }a\neq c\mbox{ and }|S_{\a,\c}|\mbox{ is odd,} 
\end{cases}
\]
and $d_{NHo}((\c\ c),(\b\ b))$ =   
\[
\begin{cases}
d_{NHo}(\c,\b) & \mbox{ if }c=b\mbox{ and }|S_{\c,\b}|\mbox{ is even,} \\
\ell+d_{NHo}(\c,\b) & \mbox{ if }c=b\mbox{ and }|S_{\c,\b}|\mbox{ is odd,} \\
\ell+d_{NHo}(\c,\b) & \mbox{ if }c\neq b\mbox{ and }|S_{\c,\b}|\mbox{ is even, and} \\
d_{NHo}(\c,\b) & \mbox{ if }c\neq b\mbox{ and }|S_{\c,\b}|\mbox{ is odd.} 
\end{cases}
\]
Now, for various cases, one can easily obtain that 
\[
d_{NHo}(\a,\b)\leq d_{NHo}(\a,\c)+d_{NHo}(\c,\b).
\]
So, the map $d_{NHo}$ follows the triangle property for $n$ = $k+1$.
Thus, from Mathematical Induction, $d_{NHo}$ follows the Triangle property
\end{description}
Hence, from the distance definition, the map given in Definition \ref{NHo distance def} is a distance.
\end{proof}
For any code $\mathscr{C}\subseteq\mathcal{A}_q^n$, the minimum Non-Homopolymer distance is
\[
d_{NHo} = \min\{d_{NHo}(\a,\b):\a,\b\in\mathscr{C}\mbox{ and }\a\neq\b\}.
\]
In Remark \ref{d_Nho and Hamming distance bound}, we have obtained a bound on the minimum Non-Homopolymer distance as follows.
\begin{remark}
For any $\a,\b\in\mathcal{A}_q^n$, from Definition \ref{NHo distance def}, one can observe that the size of the support set is the Hamming distance $H(\a,\a)$.
Therefore, the Non-Homopolymer distance $d_{NHo}(\x,\y)\geq\lceil H(\x,\y)/2\rceil$, and thus, for any code with the minimum Non-Homopolymer distance $d_{NHo}$ and the minimum Hamming distance $d_H$, 
\[
\lceil d_H/2\rceil\leq d_{NHo}.
\]
\label{d_Nho and Hamming distance bound}
\end{remark}
Now, bounds on various Hamming distances are calculated in Theorem \ref{kg reverse condition} and Proposition \ref{kg reverse-complement condition} that helps to study the R and RC constraints in DNA codes obtained from binary codes.
\begin{theorem}
	For any given integers $\ell$ and $n$ ($\ell,n\geq1$), consider $\x,\y\in\Sigma_{DNA}^\ell$.
	Then, for DNA strings $\u,\v\in\psi(\mathbb{Z}_2^n)$, the Hamming distance 
	\begin{equation*}
	H(\u,\v^r) \geq  
	\left\{
	\begin{array}{ll}
	n\min\{H(\x,\y^r),H(\x,\y^{rc})\}, & \mbox{ if } n \mbox{ is even,} \\
	\min\left\{H(\x,\x^r),H(\y,\y^r),H(\x,\x^{rc}),H(\y,\y^{rc})\right\}, & \mbox{ if } n \mbox{ is odd. } 
	\end{array}\right.
	\end{equation*}
	\label{kg reverse condition}
\end{theorem}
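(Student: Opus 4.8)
The plan is to exploit the block structure of the $\ell$-order Non-Homopolymer map. Write $\u = u_1 u_2 \cdots u_n$ and $\v = v_1 v_2 \cdots v_n$, where each $u_i, v_i \in \mathscr{S} = \{\x, \y, \x^c, \y^c\}$ is a block of length $\ell$. Since reversing the whole length-$n\ell$ string reverses the order of the blocks and reverses each block internally, $\v^r = v_n^r\, v_{n-1}^r \cdots v_1^r$. Because the Hamming distance is additive over the $n$ aligned blocks of length $\ell$, I would first record the decomposition
\[
H(\u, \v^r) = \sum_{i=1}^n H\left(u_i, v_{n+1-i}^r\right),
\]
after which everything reduces to bounding each block term from below.

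Next I would use Remark \ref{distance remark 1} to pin down the type of each block by parity: $u_i \in \{\x, \x^c\}$ when $i$ is odd and $u_i \in \{\y, \y^c\}$ when $i$ is even, and likewise for $v_{n+1-i}$ according to the parity of $n+1-i$. Splitting on the parity of $n$ is the organising idea. For even $n$ the index $n+1-i$ has opposite parity to $i$, so each block term pairs an $\x$-type block against the reverse of a $\y$-type block (or vice versa); for odd $n$ the parities agree, so an $\x$-type block is paired against the reverse of an $\x$-type block and a $\y$-type against a $\y$-type.

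The technical heart is to check that, within each block term, the four possible complement combinations collapse to the two canonical values appearing in the statement. I would invoke the elementary identity $H(\mathbf{p}^c, \mathbf{q}^c) = H(\mathbf{p}, \mathbf{q})$ together with the reverse and reverse-complement symmetries $H(\mathbf{p}, \mathbf{q}^r) = H(\mathbf{p}^r, \mathbf{q})$ and $H(\mathbf{p}, \mathbf{q}^{rc}) = H(\mathbf{p}^{rc}, \mathbf{q})$ recorded just after Definition \ref{DNA code definition}. For even $n$ these identities reduce every term $H(u_i, v_{n+1-i}^r)$ to either $H(\x, \y^r)$ or $H(\x, \y^{rc})$, so each term is at least $\min\{H(\x, \y^r), H(\x, \y^{rc})\}$, and summing the $n$ terms gives the claimed bound. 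For odd $n$ the same reductions send each odd-index term into $\{H(\x,\x^r), H(\x,\x^{rc})\}$ and each even-index term into $\{H(\y,\y^r), H(\y,\y^{rc})\}$, so every term is at least the four-way minimum in the statement; since all terms are non-negative, even a single term already yields the stated inequality.

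The main obstacle is purely the bookkeeping in this last step: one must verify, carefully and for each parity class, that all four complement/reverse combinations really do equal one of the two canonical Hamming distances, using only the listed symmetries. No deeper idea is needed once the block decomposition and the parity split are in place. It is worth noting that this argument in fact proves the stronger bound $n \cdot \min\{H(\x,\x^r),H(\y,\y^r),H(\x,\x^{rc}),H(\y,\y^{rc})\}$ in the odd case, of which the stated inequality is an immediate weakening; I would therefore present the clean per-block estimate and then read off both cases.
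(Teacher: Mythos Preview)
Your proposal is correct and follows essentially the same approach as the paper: the paper likewise writes $H(\u,\v^r)=\sum_{j=1}^n H(u_j,v^r_{n-j+1})$, splits on the parity of $n$ to determine whether $u_j$ and $v_{n-j+1}$ lie in the same or opposite block classes $\{\x,\x^c\}$ versus $\{\y,\y^c\}$, and then bounds each term below by the appropriate minimum. Your observation that the odd case actually yields the stronger bound with a factor of $n$ is a nice addition that the paper leaves implicit.
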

\begin{proof}
	For $\x,\y\in\Sigma_{DNA}^\ell$, consider binary strings $\a,\b\in(\mathbb{Z}_2^n)$ of length $n$ and these strings are encoded into DNA strings $\psi(\a)$ = $\u$ = $u_1 u_2\ldots u_n$ and $\psi(\b)$ = $\v$ = $v_1 v_2\ldots v_n$ in $\psi(\mathbb{Z}_2^n)$ using $\ell$ order Non-Homopolymer Map,	where $u_{2i},v_{2i}\in\{\y^c,\y\}$ and $u_{2i-1},v_{2i-1}\in\{\x^c,\x\}$ for $i=1,2,\ldots,n$. 
	Consider \[H(\u,\v^r)=\sum_{j=1}^nH(u_j,v^r_{n-j+1}).\] 
	Now, there are two cases as follows.
	\begin{description}[\hspace{1.2cm}]
	    \item[Odd $n$:] In this case, $j$ is even (odd) if and only if $n-j+1$ is even (odd).
	    Thus, if $j$ is even then $u_j,v_{n-j+1}\in\{\y^c,\y\}$, and if $j$ is odd then $u_j,v_{n-j+1}\in\{\x^c,\x\}$.
	    So, \[H(u_j,v^r_{n-j+1})\geq\min\{H(\x,\x^r),H(\y,\y^r),H(\x,\x^{rc}), H(\y,\y^{rc})\}.\]
	    \item[Even $n$:] In this case, $j$ is even (odd) if and only if $n-j+1$ is odd (even).
	    Thus, if $j$ is even then $u_j\in\{\y,^c,\y\}$ and $v_{n-j+1}\in\{\x^c,\x\}$.
	    And, if $j$ is odd then $u_j\in\{\x^c,\x\}$ and $v_{n-j+1}\in\{\y^c,\y\}$.
	    Thus, \[H( u_j, v_{n-j+1}^r)\geq\min\{H(\x,\y^r),H(\x,\y^{rc})\}.\]
	\end{description}
	Hence, the result follows for any integer $n$.
\end{proof}
\begin{proposition}
	For any given integers $\ell$ and $n$ ($\ell,n\geq1$), consider $\x,\y\in\Sigma_{DNA}^\ell$. 
	Then, for any DNA strings $\u,\v\in\psi(\mathbb{Z}_2^n)$, the Hamming distance 
	\begin{equation*}
	H(\u,\v^{rc}) \geq  
	\left\{
	\begin{array}{ll}
	n\min\{H(\x,\y^r),H(\x,\y^{rc})\}, & \mbox{ if } n \mbox{ is even,} \\
	\min\left\{H(\x,\x^r),H(\y,\y^r),H(\x,\x^{rc}),H(\y,\y^{rc})\right\}, & \mbox{ if } n \mbox{ is odd. } 
	\end{array}\right.
	\end{equation*}
	\label{kg reverse-complement condition}
\end{proposition}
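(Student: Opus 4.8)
The plan is to mirror the proof of Theorem \ref{kg reverse condition}, replacing the reverse of $\v$ by its reverse-complement and tracking the extra complement on each length-$\ell$ block. First I would encode $\a,\b\in\mathbb{Z}_2^n$ as $\psi(\a)=\u=u_1u_2\ldots u_n$ and $\psi(\b)=\v=v_1v_2\ldots v_n$ in $\psi(\mathbb{Z}_2^n)$, where by Remark \ref{distance remark 1} each block satisfies $u_{2i-1},v_{2i-1}\in\{\x^c,\x\}$ and $u_{2i},v_{2i}\in\{\y^c,\y\}$. The first observation is that reverse-complementing the whole concatenated DNA string sends the $j$-th length-$\ell$ block to the reverse-complement of the $(n-j+1)$-th block, so that $\v^{rc}=v_n^{rc}v_{n-1}^{rc}\ldots v_1^{rc}$ and hence
\[
H(\u,\v^{rc})=\sum_{j=1}^n H(u_j,v_{n-j+1}^{rc}).
\]

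Next I would split on the parity of $n$ exactly as in Theorem \ref{kg reverse condition}. When $n$ is odd, $j$ and $n-j+1$ have the same parity, so each term compares two blocks drawn from the same pair: for odd $j$ both $u_j$ and $v_{n-j+1}$ lie in $\{\x^c,\x\}$, giving $v_{n-j+1}^{rc}\in\{\x^{rc},(\x^c)^{rc}\}=\{\x^{rc},\x^r\}$, and for even $j$ both lie in $\{\y^c,\y\}$, giving $v_{n-j+1}^{rc}\in\{\y^{rc},\y^r\}$. When $n$ is even, $j$ and $n-j+1$ have opposite parity, so each term compares an $\x$-block against a $\y$-block; for odd $j$ one gets $u_j\in\{\x^c,\x\}$ and $v_{n-j+1}^{rc}\in\{\y^{rc},\y^r\}$, and symmetrically for even $j$ one gets $u_j\in\{\y^c,\y\}$ and $v_{n-j+1}^{rc}\in\{\x^{rc},\x^r\}$.

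The routine but essential step is to collapse the four a priori possible values of each $H(u_j,v_{n-j+1}^{rc})$ to the two quantities appearing in the bound, using the identities $H(\x,\y^c)=H(\x^c,\y)$ and $H(\x,\y^{rc})=H(\x^{rc},\y)=H(\x^r,\y^c)=H(\x^c,\y^r)$ stated after Definition \ref{DNA code definition}, together with the facts $\x^{rc}=(\x^c)^r=(\x^r)^c$ and $(\x^c)^{rc}=\x^r$. For instance, in the even case one checks $H(\x^c,\y^{rc})=H(\x,\y^r)$ and $H(\x^c,\y^r)=H(\x,\y^{rc})$, and the even-$j$ subcase reduces to the same set by symmetry, so every term is bounded below by $\min\{H(\x,\y^r),H(\x,\y^{rc})\}$; likewise in the odd case every term is bounded below by $\min\{H(\x,\x^r),H(\y,\y^r),H(\x,\x^{rc}),H(\y,\y^{rc})\}$. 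Summing the $n$ terms then yields $n\min\{H(\x,\y^r),H(\x,\y^{rc})\}$ for even $n$, while for odd $n$ the nonnegativity of the terms together with $n\geq1$ gives the stated single-copy lower bound.

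I expect the only real obstacle to be the careful bookkeeping of the complement: one must verify that reverse-complementing the full concatenated string acts blockwise as $v_i\mapsto v_{n-i+1}^{rc}$, and that $(\x^c)^{rc}=\x^r$, so that the candidate distances indeed reduce to the same two families $H(\cdot,\cdot^r)$ and $H(\cdot,\cdot^{rc})$ that appear in Theorem \ref{kg reverse condition}; beyond this the argument is identical to the reverse case.
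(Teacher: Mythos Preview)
Your proposal is correct and is precisely the argument the paper has in mind: the paper states this proposition immediately after Theorem \ref{kg reverse condition} without proof, relying on the evident analogy, and your write-up is exactly that analogy carried through with the extra complement tracked blockwise. The bookkeeping you flag---that $\v^{rc}=v_n^{rc}\ldots v_1^{rc}$ and $(\x^c)^{rc}=\x^r$, so the four candidate per-block distances collapse via $H(\cdot^c,\cdot^c)=H(\cdot,\cdot)$ to the same pair $\{H(\cdot,\cdot^r),H(\cdot,\cdot^{rc})\}$---is straightforward and correct.
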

In Theorem \ref{kg reverse theorem}, a condition on DNA blocks are obtained that ensures the R constraint for the encoded DNA code.
\begin{theorem}
	For any even integer $n$ and an integer $\ell$ ($\ell,n\geq1$), if $\x,\y\in\Sigma_{DNA}^\ell$ such that $H(\x^{rc},\y)$ = $H(\x^r,\y)=\ell$ then, the DNA codes obtained from $\ell$ order Non-Homopolymer map satisfy the R and RC constraints. 
	\label{kg reverse theorem}
\end{theorem}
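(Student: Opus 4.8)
The plan is to derive the theorem directly from Theorem~\ref{kg reverse condition} and Proposition~\ref{kg reverse-complement condition}, treating it essentially as a corollary once the hypothesis is translated into the quantities appearing in those bounds. First I would record that every DNA string $\u\in\psi(\mathbb{Z}_2^n)$ has length $n\ell$, since the $\ell$ order Non-Homopolymer map sends each of the $n$ binary coordinates to a block in $\mathscr{S}\subseteq\Sigma_{DNA}^\ell$. Consequently the Hamming distance between any two such strings is at most $n\ell$, and in particular the minimum Hamming distance $d_H$ of the encoded DNA code satisfies $d_H\le n\ell$.

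Next I would translate the hypothesis using the elementary Hamming identities stated earlier for DNA strings, namely $H(\x^r,\y)=H(\x,\y^r)$ and $H(\x^{rc},\y)=H(\x,\y^{rc})$. The assumption $H(\x^{rc},\y)=H(\x^r,\y)=\ell$ therefore gives $H(\x,\y^r)=H(\x,\y^{rc})=\ell$, so that $\min\{H(\x,\y^r),H(\x,\y^{rc})\}=\ell$. This is exactly the quantity controlling the even-$n$ branch of both earlier bounds, which is why the evenness hypothesis on $n$ is essential: it selects the branch whose value we can pin down from the data, whereas the odd branch would instead involve the uncontrolled quantities $H(\x,\x^r)$, $H(\y,\y^r)$, and so on.

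With $n$ even, Theorem~\ref{kg reverse condition} then yields $H(\u,\v^r)\ge n\min\{H(\x,\y^r),H(\x,\y^{rc})\}=n\ell$ for all $\u,\v\in\psi(\mathbb{Z}_2^n)$, and Proposition~\ref{kg reverse-complement condition} yields $H(\u,\v^{rc})\ge n\ell$ in the same way. Applying once more the identities $H(\u,\v^r)=H(\u^r,\v)$ and $H(\u,\v^{rc})=H(\u^{rc},\v)$, valid since $\u,\v$ are DNA strings of common length, I obtain $H(\u^r,\v)\ge n\ell$ and $H(\u^{rc},\v)\ge n\ell$ for every pair of codewords.

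Finally I would combine the two observations: since $d_H\le n\ell\le H(\u^r,\v)$ and $d_H\le n\ell\le H(\u^{rc},\v)$ for all codewords $\u,\v$, both the R constraint and the RC constraint hold. I do not anticipate a genuine obstacle, as the argument is a bookkeeping combination of the prior bounds with the reverse/reverse-complement Hamming identities. The only point needing care is the observation that the bound $n\ell$ coincides with the full block length and hence automatically dominates $d_H$; one should also note that $H(\u^r,\v)=n\ell>0$ forces $\u^r\ne\v$, so the nondegeneracy condition in the definition of the R (and RC) constraint is met automatically rather than requiring a separate case analysis.
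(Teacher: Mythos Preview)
Your proposal is correct and follows essentially the same route as the paper's own proof: apply the even-$n$ branch of Theorem~\ref{kg reverse condition} and Proposition~\ref{kg reverse-complement condition} to obtain $H(\u^r,\v)\ge n\ell$ and $H(\u^{rc},\v)\ge n\ell$, then observe that the codeword length is $n\ell$ so that $d_H\le n\ell$, giving both constraints. You are slightly more explicit than the paper in invoking the identities $H(\x^r,\y)=H(\x,\y^r)$ and $H(\x^{rc},\y)=H(\x,\y^{rc})$ to match the form of the bounds, and in noting that $H(\u^r,\v)=n\ell>0$ handles the nondegeneracy clause $\u^r\ne\v$; both are welcome clarifications but do not change the argument.
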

\begin{proof}
	If $H(\x^{rc},\y)$ = $H(\x^r,\y)=\ell$ then, from Theorem \ref{kg reverse condition}, 
	\begin{equation*}
	    \begin{split}
	        H(\u^r,\v)&\geq n\min\left\{H(\x^r,\y),H(\x^{rc},\y)\right\} \\
	        & =n\ell.
	    \end{split}
	\end{equation*}
	Similarly from Proposition \ref{kg reverse-complement condition}, 
		\begin{equation*}
	    \begin{split}
	        H(\u^{rc},\v)&\geq n\min\left\{H(\x^r,\y),H(\x^{rc},\y)\right\} \\
	        & =n\ell.
	    \end{split}
	\end{equation*}
	But the length of DNA string $\u$ and DNA string $\v$ are the same and equal to $n\ell$.
	Thus, $d_H\leq H(\u,\v)\leq n\ell$.
	Therefore, $H(\u^r,\v)\geq d_H$ and $H(\u^{rc},\v)\geq d_H$ for any DNA code obtained from $\ell$ order Non-Homopolymer map, where $H(\x^{rc},\y)$ = $H(\x^r,\y)=\ell$.
\end{proof}
\begin{example}{Example}
For $n=4$, $\ell$ = $2$, $\x$ = $AT$ and $\y$ = $CG$,
\[
\begin{array}{ccccc}
      \begin{array}{c|}
           \mathbb{Z}_2^4  \\ \hline
           (0\ 0\ 0\ 0)  \\
           (0\ 0\ 0\ 1)  \\
           (0\ 0\ 1\ 0)  \\
           (0\ 0\ 1\ 1)  \\
           (0\ 1\ 0\ 0)  \\
           (0\ 1\ 0\ 1)  \\
           (0\ 1\ 1\ 0)  \\
           (0\ 1\ 1\ 1)  \\
           (1\ 0\ 0\ 0)  \\
           (1\ 0\ 0\ 1)  \\
           (1\ 0\ 1\ 0)  \\
           (1\ 0\ 1\ 1)  \\
           (1\ 1\ 0\ 0)  \\
           (1\ 1\ 0\ 1)  \\
           (1\ 1\ 1\ 0)  \\
           (1\ 1\ 1\ 1)  
      \end{array}
       & 
      \begin{array}{l}
           \psi(\mathbb{Z}_2^4)  \\ \hline
           \x \y \x^c \y^c  \\
           \x \y \x^c \y  \\
           \x \y \x \y  \\
           \x \y \x \y^c  \\
           \x \y^c \x \y  \\
           \x \y^c \x \y^c  \\
           \x \y^c \x^c \y^c  \\
           \x \y^c \x^c \y  \\
           \x^c \y^c \x \y  \\
           \x^c \y^c \x \y^c  \\
           \x^c \y^c \x^c \y^c  \\
           \x^c \y^c \x^c \y  \\
           \x^c \y \x^c \y^c  \\
           \x^c \y \x^c \y  \\
           \x^c \y \x \y  \\
           \x^c \y \x \y^c  
      \end{array}
      &
      \begin{array}{c}
        \u     \\ \hline
           ATCGTAGC  \\
           ATCGTACG  \\
           ATCGATCG  \\
           ATCGATGC  \\
           ATGCATCG  \\
           ATGCATGC  \\
           ATGCTAGC  \\
           ATGCTACG  \\
           TAGCATCG  \\
           TAGCATGC  \\
           TAGCTAGC  \\
           TAGCTACG  \\
           TACGTAGC  \\
           TACGTACG  \\
           TACGATCG  \\
           TACGATGC  
      \end{array}
      &
      \begin{array}{c}
        \u^r     \\ \hline
           CGATGCTA  \\
           GCATGCTA  \\
           GCTAGCTA  \\
           CGTAGCTA  \\
           GCTACGTA  \\
           CGTACGTA  \\
           CGATCGTA  \\
           GCATCGTA  \\
           GCTACGAT  \\
           CGTACGAT  \\
           CGATCGAT  \\
           GCATCGAT  \\
           CGATGCAT  \\
           GCATGCAT  \\
           GCTAGCAT  \\
           CGTAGCAT  
      \end{array}
      &
      \begin{array}{c}
        \u^{rc}     \\ \hline
           GCTACGAT  \\
           CGTACGAT  \\
           CGATCGAT  \\
           GCATCGAT  \\
           CGATGCAT  \\
           GCATGCAT  \\
           GCTAGCAT  \\
           CGTAGCAT  \\
           CGATGCTA  \\
           GCATGCTA  \\
           GCTAGCTA  \\
           CGTAGCTA  \\
           GCTACGTA  \\
           CGTACGTA  \\
           CGATCGTA  \\
           GCATCGTA  
      \end{array}
    \end{array}
\]
One can easily observe that, for any $\a,\b\in\mathbb{Z}_2^4$, 
\begin{equation*}
    \begin{split}
        & H(\psi(\a)^r,\psi(\b)) = 8 \geq H(\psi(\a),\psi(\b)) \\
        & H(\psi(\a)^{rc},\psi(\b)) = 8 \geq H(\psi(\a),\psi(\b)) \\
        & d_{NHo}(\a,\b) = H(\psi(\a),\psi(\b)) 
    \end{split}
\end{equation*}
Therefore, for any binary code $\mathscr{C}\subseteq\mathbb{Z}_2^4$, the DNA code $\psi(\mathscr{C})$ satisfies R and RC constraints.
\end{example}

Now, the isometry is established between DNA codes and binary codes in the Theorem \ref{distance preserving theorem}.
\begin{theorem}
	For any integers $\ell$ and $n$ ($\ell,n\geq1$), the map 
	\[ 
	\psi:(\mathbb{Z}_2^n,d_{NHo}) \rightarrow (\psi(\mathbb{Z}_2^n), d_H)
	\] is an isometry.
	\label{distance preserving theorem}
\end{theorem}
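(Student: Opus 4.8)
The goal is to verify that $d_{NHo}(\a,\b) = H(\psi(\a),\psi(\b))$ for every pair $\a,\b \in \mathbb{Z}_2^n$; since $\psi$ is injective (the first block $f(a_1)$ recovers $a_1$, and each subsequent $u_i=\psi(a_i,u_{i-1})$ recovers $a_i$ from $u_{i-1}$), this distance-preserving identity is exactly the isometry claim. The plan is to exploit the block structure of the encoding. Writing $\psi(\a) = \u = u_1 u_2 \ldots u_n$ and $\psi(\b) = \v = v_1 v_2 \ldots v_n$ with each $u_i,v_i \in \mathscr{S} \subseteq \Sigma_{DNA}^\ell$, the Hamming distance splits across blocks as $H(\u,\v) = \sum_{i=1}^n H(u_i,v_i)$, because the $i$-th blocks of $\u$ and $\v$ occupy the same $\ell$ coordinate positions of the concatenated DNA strings.

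First I would record the only two possibilities for each block. By Remark \ref{distance remark 1}, for each index $i$ both $u_i$ and $v_i$ lie in the same complementary pair, namely $\{\x,\x^c\}$ when $i$ is odd and $\{\y,\y^c\}$ when $i$ is even. Since the two members of such a pair are Watson-Crick complements of one another, we have either $u_i = v_i$ or $u_i = v_i^c$, with no third option. If $u_i=v_i^c$ then $H(u_i,v_i)=H(v_i^c,v_i)=\ell$, because the Watson-Crick complement flips every one of the $\ell$ nucleotides; and $H(u_i,v_i)=0$ when $u_i=v_i$. Hence each block contributes exactly $0$ or $\ell$, and $H(\u,\v) = \ell\, N$, where $N$ is the number of indices $i$ at which $u_i$ and $v_i$ are complementary.

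It then remains to show $N = \sum_{j=1}^{|T|/2}(t_{2j} - t_{2j-1})$, which is precisely $d_{NHo}(\a,\b)/\ell$ from Definition \ref{NHo distance def}. This is where I would invoke Remark \ref{distance remark 2}: starting from agreement before the first support coordinate, the relation between $u_i$ and $v_i$ toggles between equal and complementary exactly at the coordinates where $a_i \neq b_i$, so the complementary blocks are precisely those lying in the half-open intervals $[t_{2j-1},t_{2j})$ cut out by consecutive pairs of the extended support set $T$. Counting the integers in these intervals yields $N = \sum_{j}(t_{2j}-t_{2j-1})$. Equivalently, one can make the toggling explicit by induction on $i$, showing that $u_i$ and $v_i$ are complementary if and only if the number of support coordinates not exceeding $i$ is odd; the append-one-symbol recurrence of Lemma \ref{recurrence on Non Homopolymer distance} already packages exactly this parity behaviour and can be reused rather than re-derived.

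The one delicate point, and the main bookkeeping obstacle, is the treatment of the tail. When $|S|$ is odd the complementary relation persists from the last support coordinate all the way through position $n$, and this is exactly what the adjunction of $n+1$ to form the extended support set $T$ accounts for, so that the final interval $[t_{|T|-1},t_{|T|})$ correctly tallies those trailing blocks. Matching this tail term against the even-size convention on $|T|$ is the step that requires care; once it is handled, the identification $H(\u,\v)=\ell N = d_{NHo}(\a,\b)$ follows by a routine count, and the isometry is established.
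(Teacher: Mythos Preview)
Your argument is correct and takes a genuinely different route from the paper. The paper proceeds by induction on $n$: it verifies the base case $n=1$ by hand, assumes the identity for length $m$, and then appends one symbol, splitting into four cases according to whether $a_{m+1}=b_{m+1}$ and whether $|S|$ is even or odd, invoking Lemma~\ref{recurrence on Non Homopolymer distance} on the $d_{NHo}$ side and Remarks~\ref{distance remark 1}--\ref{distance remark 2} on the Hamming side. Your approach is instead a global block count: you observe that each aligned block pair $(u_i,v_i)$ is either equal or Watson--Crick complementary, hence contributes $0$ or $\ell$ to $H(\u,\v)$, and then identify the complementary indices as exactly those $i$ for which $|\{j\le i:j\in S\}|$ is odd, which are precisely the integers covered by the half-open intervals $[t_{2j-1},t_{2j})$ of the extended support set. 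This direct parity argument is cleaner and more transparent than the paper's case analysis, and it makes the role of the adjoined element $n+1$ in $T$ (when $|S|$ is odd) immediately visible rather than buried in an inductive tail case. The paper's induction, on the other hand, has the virtue of reusing Lemma~\ref{recurrence on Non Homopolymer distance} verbatim, so no new combinatorics is needed once that lemma is in hand; you acknowledge this alternative yourself.
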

\begin{proof} 
The result is proved using Mathematical Induction on the string length $n$. 
\begin{description}[\hspace{2.1cm}]
\item[Base case:]  For $n=1$, consider $a,b\in\mathbb{Z}_2$.
Now, one can computationally verify that 
\[
d_{NHo}(a,b)=H(\psi(a),\psi(b)).
\] 
\item[Hypothesis:] For $n=m$ and $\a,\b\in\mathbb{Z}_2^m$, assume 
\[
d_{NHo}(\a,\b)=H(\psi(\a),\psi(\b)).
\]
\item[Inductive Step:] Consider binary strings $\a$ = $(a_1\ a_2\ldots a_m)$ and $\b$ = $(b_1\ b_2\ldots b_m)$ of length $m$ with the support set $S$ and the extended support set $T$.
The binary strings are encoded into DNA strings $\psi(\a)$ = $\u$ = $ u_1  u_2\ldots  u_m$ and $\psi(\b)$ = $\v$ = $ v_1 v_2\ldots  v_m$ using $\ell$ order Non-Homopolymer map for $\x,\y\in\Sigma_{DNA}^\ell$. 
For $n=m+1$, consider the binary strings $\a^*$ = $(\a\ a_{m+1})$ = $(a_1\ a_2\ldots a_m\ a_{m+1})$ and $\b^*$ = $(\b\ b_{m+1})$ = $(b_1\ b_2\ldots b_m\ b_{m+1})$ of length $m+1$ with the support set $S^*$ and the extended support set $T^*$, where $a_{m+1},b_{m+1}\in\mathbb{Z}_2$.
For the binary strings $\a^*$ and $\b^*$, consider the DNA strings $\psi(\a^*)$ = $\u^*$ = $\u u_{m+1}$ = $ u_1  u_2\ldots  u_m  u_{m+1}$ and $\psi(\b^*)$ = $\v^*$ = $\v v_{m+1}$ = $ v_1  v_2\ldots  v_m  v_{m+1}$, where $ u_{m+1}, v_{m+1}\in\{\x,\x^c,\y,\y^c\}$. 
Now, for the binary strings $\a^*$ and $\b^*$, the support set and extended support set are 
\begin{equation*}
S^* =
\begin{cases}
S & \mbox{ if }a_m = b_m, \\
S\cup\{|S|+1\} & \mbox{ if } a_m \neq b_m,
\end{cases}
\end{equation*}
and 
\begin{equation*}
T^* =
\begin{cases}
S & \mbox{ if }a = b\mbox{ and }|S|\mbox{ is even,} \\
S\cup\{|S|+2\} & \mbox{ if }a = b\mbox{ and }|S|\mbox{ is odd,} \\
S\cup\{|S|+1,|S|+2\} & \mbox{ if }a\neq b\mbox{ and }|S|\mbox{ is even, and} \\
S\cup\{|S|+1\} & \mbox{ if }a\neq b\mbox{ and }|S|\mbox{ is odd.} 
\end{cases}
\end{equation*}
Now, from Remark \ref{distance remark 1} and Remark \ref{distance remark 2}, one can get $d_{NHo}(\a^*,\b^*)$ = $H(\psi(\a^*),\psi(\b^*))$ for various cases.
It is interesting task to identify those four cases and verify $d_{NHo}(\a^*,\b^*)$ = $H(\psi(\a^*),\psi(\b^*))$ for all the cases. 
Now, from the verification, the hypothesis holds for $n$ = $m+1$. 
\end{description}
Hence, the result follows from Mathematical Induction on the parameter $n$.
\end{proof}
\begin{example}{Example}
For each $\a,\b\in\mathbb{Z}_2^3$ and given integer $\ell$ ($\geq1$), the distance $d_{NHo}(\a,\b)$ is calculated as following. 
\[
      \begin{array}{c|cccccccc}
           d_{NHo}(\a,\b) & (0\ 0\ 0) & (0\ 0\ 1) &  (0\ 1\ 0) & (0\ 1\ 1) & (1\ 0\ 0) & (1\ 0\ 1) & (1\ 1\ 0) & (1\ 1\ 1)   \\ \hline
           (0\ 0\ 0)      & 0         & 1\ell     & 2\ell      & 1\ell     & 3\ell     & 2\ell     & 1\ell     &  2\ell      \\
           (0\ 0\ 1)      & 1\ell     & 0         & 1\ell      & 2\ell     & 2\ell     & 3\ell     & 2\ell     &  1\ell      \\
           (0\ 1\ 0)      & 2\ell     & 1\ell     & 0          & 1\ell     & 1\ell     & 2\ell     & 3\ell     & 2\ell       \\
           (0\ 1\ 1)      & 1\ell     & 2\ell     & 1\ell      & 0         & 2\ell     & 1\ell     & 2\ell     & 3\ell       \\
           (1\ 0\ 0)      & 3\ell     & 2\ell     & 1\ell      & 2\ell     & 0         & 1\ell     & 2\ell     & 1\ell       \\
           (1\ 0\ 1)      & 2\ell     & 3\ell     & 2\ell      & 1\ell     & 1\ell     & 0         & 1\ell     & 2\ell       \\
           (1\ 1\ 0)      & 1\ell     & 2\ell     & 3\ell      & 2\ell     & 2\ell     & 1\ell     & 0         & 1\ell       \\
           (1\ 1\ 1)      & 2\ell     & 1\ell     & 2\ell      & 3\ell     & 1\ell     & 2\ell     & 1\ell     & 0
      \end{array}
\]
For any $\x,\y\in\Sigma_{DNA}^\ell$, the Hamming distance $H(\psi(\a),\psi(\b))$ is calculated as following. 
\[
      \begin{array}{c|cccccccc}
           H(\psi(\a),\psi(\b)) & \x\y\x^c & \hspace{3mm}\x\y\x &  \hspace{3mm}\x\y^c\x & \hspace{3mm}\x\y^c\x^c & \hspace{3mm}\x^c\y^c\x & \hspace{3mm}\x^c\y^c\x^c & \hspace{3mm}\x^c\y\x^c & \hspace{3mm}\x^c\y\x \\ \hline
           \x\y\x^c           & 0         & 1\ell     & 2\ell      & 1\ell     & 3\ell     & 2\ell     & 1\ell     &  2\ell      \\
           \x\y\x             & 1\ell     & 0         & 1\ell      & 2\ell     & 2\ell     & 3\ell     & 2\ell     &  1\ell      \\
           \x\y^c\x           & 2\ell     & 1\ell     & 0          & 1\ell     & 1\ell     & 2\ell     & 3\ell     & 2\ell       \\
           \x\y^c\x^c         & 1\ell     & 2\ell     & 1\ell      & 0         & 2\ell     & 1\ell     & 2\ell     & 3\ell       \\
           \x^c\y^c\x         & 3\ell     & 2\ell     & 1\ell      & 2\ell     & 0         & 1\ell     & 2\ell     & 1\ell       \\
           \x^c\y^c\x^c       & 2\ell     & 3\ell     & 2\ell      & 1\ell     & 1\ell     & 0         & 1\ell     & 2\ell       \\
           \x^c\y\x^c         & 1\ell     & 2\ell     & 3\ell      & 2\ell     & 2\ell     & 1\ell     & 0         & 1\ell       \\
           \x^c\y\x           & 2\ell     & 1\ell     & 2\ell      & 3\ell     & 1\ell     & 2\ell     & 1\ell     & 0
      \end{array}
\]
Recall that, for any $\x,\y\in\Sigma_{DNA}^\ell$, $H(\x,\x^c)$ = $H(\y,\y^c)$ = $\ell$.
Hence, it is clear the $d_{NHo}(\a,\b)$ = $H(\psi(\a),\psi(\b))$ for each $\a,\b\in\mathbb{Z}_2^3$.
\end{example}

The parameters of DNA codes obtained from any given binary codes are given in Theorem \ref{kg DNA code parameters}.
\begin{theorem}
For any $(n, M, d_{NHo})$ binary code $\mathscr{C}$, an $(n\ell, M, d_H)$ DNA code $\psi(\mathscr{C})$ exists, where $d_H=d_{NHo}$.
\label{kg DNA code parameters}
\end{theorem}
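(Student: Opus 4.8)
The plan is to establish the three asserted parameters---length, size, and minimum Hamming distance---separately, leaning almost entirely on the isometry already proven in Theorem \ref{distance preserving theorem} together with the distance axioms verified in Lemma \ref{Nho distance proof}. Structurally this theorem plays the same role for the Non-Homopolymer map that Theorem \ref{parameter map DNA code} and Theorem \ref{distance preserving Gau} played for the bijective and $Gau$ maps: it repackages an isometry into a parameter statement.

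First I would handle the length. By construction, the $\ell$ order Non-Homopolymer map sends each coordinate $a_i$ of a binary string $\a = (a_1\ a_2\ \ldots\ a_n) \in \mathbb{Z}_2^n$ to a block $u_i \in \mathscr{S} = \{\x, \y, \x^c, \y^c\} \subseteq \Sigma_{DNA}^\ell$, each of which is a DNA string of length $\ell$. Hence $\psi(\a) = u_1 u_2 \ldots u_n$ is a concatenation of $n$ blocks each of length $\ell$, so every codeword of $\psi(\mathscr{C})$ has length $n\ell$.

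Next I would show the size is preserved, that is $|\psi(\mathscr{C})| = M$, for which it suffices that $\psi$ be injective on $\mathbb{Z}_2^n$. Since $d_{NHo}$ is a distance (Lemma \ref{Nho distance proof}) it satisfies the identity of indiscernibles, so $d_{NHo}(\a,\b) = 0$ if and only if $\a = \b$; and since $\psi$ is an isometry (Theorem \ref{distance preserving theorem}), $H(\psi(\a),\psi(\b)) = d_{NHo}(\a,\b)$ for all $\a,\b$. Combining these, $\psi(\a) = \psi(\b)$ forces $H(\psi(\a),\psi(\b)) = 0$, hence $d_{NHo}(\a,\b) = 0$, hence $\a = \b$. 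Thus $\psi$ restricted to $\mathscr{C}$ is injective and the image contains exactly $M$ distinct DNA codewords.

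Finally, the distance claim $d_H = d_{NHo}$ follows directly from the isometry. Because $\psi$ is a distance-preserving bijection onto its image, I would pass the minimum over the codebook through $\psi$: the set $\{H(\psi(\a),\psi(\b)) : \a,\b \in \mathscr{C},\ \a \neq \b\}$ coincides with $\{d_{NHo}(\a,\b) : \a,\b \in \mathscr{C},\ \a \neq \b\}$ pair-by-pair, so their minima agree and $d_H = d_{NHo}$. I do not anticipate a genuine obstacle in the present argument: all of the combinatorial difficulty has already been absorbed into Theorem \ref{distance preserving theorem}, whose inductive verification of the four parity cases (via Remark \ref{distance remark 1} and Remark \ref{distance remark 2}) is the substantive step, and the length and injectivity observations here are elementary.
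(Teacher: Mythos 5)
Your proposal is correct and follows essentially the same route as the paper, whose proof consists precisely of invoking Theorem \ref{distance preserving theorem} together with the definition of the $\ell$ order Non-Homopolymer map; you merely spell out the length, injectivity, and minimum-distance steps that the paper leaves implicit. Your derivation of injectivity from the isometry plus the identity of indiscernibles (Lemma \ref{Nho distance proof}) is a valid unpacking rather than a different argument.
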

\begin{proof}
The result is obtained from Theorem \ref{distance preserving theorem} and the definition of $\ell$ order Non-Homopolymer map.
\end{proof}

\subsection{Constructions of DNA Codes}
From Theorem \ref{kg DNA code parameters}, for suitable $\x,\y\in\Sigma_{DNA}^\ell$, DNA codes can be obtained from any binary codes that satisfy
\begin{itemize}
\item Tandem-free constraint with repeat-length $\lfloor n/2\rfloor$,
\item Hamming constraint,
\item R constraint, 
\item RC constraint, and 
\item $GC$-content constraint.
\end{itemize}
Thus, in this section, all the DNA codes discussed in this section satisfy all these properties togeter. 
For example, as given in \cite[Table 4]{Benerjee2021}, one can get 
\begin{itemize}
    \item $(n\ell,2,n\ell)$ DNA code from the binary code $\{(0\ \x),(1\ \x)\}$ for any given $\x\in\mathbb{Z}_2^{n-1}$, 
    \item $(4\ell,2,2\ell)$ DNA code from $[4,1,4]$ repetition code, 
    \item $(7\ell,16,2\ell)$ DNA code from $[7,4,3]$ Hamming code,  
    \item $(15\ell, 256, 3\ell)$ DNA code from $(15, 256, 5)$ Nordstrom-Robinson code, and 
    \item $(23\ell,4096,4\ell)$ DNA code from $[23,12,7]$ Golay code.
\end{itemize}
In particular, for $\ell$ = $2$, if $\x$ = $AT$ and $\y$ = $CG$ then,
\begin{itemize}
    \item from the binary code $\{(0\ 0\ 1\ 0),(1\ 0\ 1\ 0)\}$, one can get the $(6,2,6)$ DNA code with the DNA codewords
\[
\begin{array}{cll}
            \psi((0\ 0\ 1\ 0)) =  & \x\y\x\y & = ATCGTAGC,  \mbox{ and }  \\  
			\psi((1\ 0\ 1\ 0)) =  & \x^c\y^c\x^c\y^c & = TACGATGC.  
\end{array}
\]
    \item from the $[4,1,4]$ binary repetition code, one can get the $(8,2,4)$ DNA code with the DNA codewords
\[
\begin{array}{cll}
            \psi((0\ 0\ 0\ 0)) =  & \x\y\x^c\y^c & = ATCGTAGC,  \mbox{ and }  \\  
			\psi((1\ 1\ 1\ 1)) =  & \x^c\y\x\y^c & = TACGATGC.  
\end{array}
\]
    \item from the $[7,4,3]$ binary Hamming code, one can get the $(21,16,6)$ DNA code with the DNA codewords
\[
\begin{array}{cll}
            \psi((0\ 0\ 0\ 0\ 0\ 0\ 0)) =  & \x\y\x^c\y^c\x\y\x^c         & =  ATCGTAGCATCGTA,  \\  
			\psi((1\ 1\ 1\ 0\ 0\ 0\ 0)) =  & \x^c\y\x\y\x^c\y^c\x         & =  TACGATCGTAGCAT,  \\  
			\psi((1\ 0\ 0\ 1\ 1\ 0\ 0)) =  & \x^c\y^c\x\y^c\x^c\y^c\x     & =  TAGCATGCTAGCAT,  \\  
			\psi((0\ 1\ 1\ 1\ 1\ 0\ 0)) =  & \x\y^c\x^c\y\x\y\x^c         & =  ATGCTAGCATCGTA,  \\  
			\psi((0\ 1\ 0\ 1\ 0\ 1\ 0)) =  & \x\y^c\x\y^c\x\y^c\x         & =  ATGCATGCATGCAT,  \\  
			\psi((1\ 0\ 1\ 1\ 0\ 1\ 0)) =  & \x^c\y^c\x^c\y\x^c\y\x^c     & =  TAGCTACGTACGTA,  \\  
			\psi((1\ 1\ 0\ 0\ 1\ 1\ 0)) =  & \x^c\y\x^c\y\x\y^c\x         & =  TACGTACGATGCAT,  \\  
			\psi((0\ 0\ 1\ 0\ 1\ 1\ 0)) =  & \x\y\x\y\x\y^c\x             & =  ATCGATCGATGCAT,  \\  
			\psi((1\ 1\ 0\ 1\ 0\ 0\ 1)) =  & \x^c\y\x^c\y\x^c\y^c\x^c     & =  TACGTACGTAGCTA,  \\  
			\psi((0\ 0\ 1\ 1\ 0\ 0\ 1)) =  & \x\y\x\y^c\x\y\x             & =  ATCGATGCATCGAT,   \\  
			\psi((0\ 1\ 0\ 0\ 1\ 0\ 1)) =  & \x\y^c\x\y\x\y\x             & =  ATGCATGCATCGAT,  \\  
			\psi((1\ 0\ 1\ 0\ 1\ 0\ 1)) =  & \x^c\y^c\x^c\y^c\x^c\y^c\x^c & =  TAGCTAGCTAGCTA,  \\  
			\psi((1\ 0\ 0\ 0\ 0\ 1\ 1)) =  & \x^c\y^c\x\y\x^c\y\x         & =  TAGCATCGTACGAT,  \\  
			\psi((0\ 1\ 1\ 0\ 0\ 1\ 1)) =  & \x\y^c\x^c\y^c\x\y^c\x^c     & =  ATGCTAGCATGCTA,  \\  
			\psi((0\ 0\ 0\ 1\ 1\ 1\ 1)) =  & \x\y\x^c\y\x\y^c\x^c         & =  ATCGTACGATGCTA, \mbox{ and } \\
			\psi((1\ 1\ 1\ 1\ 1\ 1\ 1)) =  & \x^c\y\x\y^c\x\y\x^c         & =  TACGATGCATCGTA.  
\end{array}
\]
\end{itemize}

\section{Algebraic Bounds on DNA Codes}\label{my sec:6}
All the notations used in this section is defined as follows. 
For the given length $n$ and the minimum Hamming distance $d_H$, 
\begin{description}[\hspace{3.3cm}]
    \item[$A_2(n,d_H)$:] The maximum size of the binary code.
    \item[$A_2(n,d_H,w)$:] The maximum size of the binary constant weight code, where each codeword has the Hamming weight $w$.
    \item[$A_3(n,d_H,w)$:] The maximum size of the ternary constant weight code, where each codeword has the Hamming weight $w$.
    \item[$A_4(n,d_H)$:] The maximum size of the DNA code.
    \item[$A_2^r(n,d_H)$:] The maximum size of the binary code, where the DNA code satisfies R constraint. 
    \item[$A_2^r(n,d_H,w)$:] The maximum size of the binary constant weight code, where each codeword has the Hamming weight $w$ and the binary code satisfies the R constraint.
    \item[$A_3^r(n,d_H,w)$:] The maximum size of the ternary constant weight code, where each codeword has the Hamming weight $w$ and the ternary code satisfies the R constraint.
	\item[$A_4^r(n,d_H)$:] The maximum size of the DNA code, where the DNA code satisfies R constraint.  
	\item[$A_4^{rc}(n,d_H)$:] The maximum size of the DNA code, where the DNA code satisfies RC constraint. 
	\item[$A_4^{GC}(n,d_H,w)$:] The maximum size of the DNA code with $GC$-weight $w$, where the DNA code satisfies fixed $GC$-content constraint with weight $w$. 
	\item[$A_4^{r,GC}(n,d_H,w)$:] The maximum size of the DNA code with $GC$-weight $w$, where the DNA code satisfies R constraint and fixed $GC$-content constraint with weight $w$.  
	\item[$A_4^{rc,GC}(n,d_H,w)$:] The maximum size of the DNA code with $GC$-weight $w$, where the DNA code satisfies RC constraint and fixed $GC$-content constraint with weight $w$.   
	\item[$A_4^{r,rc}(n,d_H)$:] The maximum size of the DNA code, where the DNA code satisfies R and RC constraints. 
	\item[$A_4^{r,rc,GC}(n,d_H,w)$:] The maximum size of the DNA code with $GC$-weight $w$, where the DNA code satisfies R constraint, RC constraint and fixed $GC$-content constraint with weight $w$.  
	\item[$A_4^{GC,Homo}(n,d_H,w)$:] The maximum size of the DNA code with and $GC$-weight $w$, where the DNA code satisfies fixed $GC$-content constraint with weight $w$, and each DNA codeword is free from Homopolymers. 
\end{description}
Now, from the literature, the bounds on DNA codes with various constraints are following.
\begin{enumerate}
    \item \cite[Theorem 3.1]{doi:10.1089/10665270152530818} (Sphere-Packing bound): For given integer $n$ and $1\leq d_H\leq n$, 
    \[
    A_4(n,d_H)\leq\frac{4^n}{\sum_{i=0}^{\lfloor (d_H-1)/2\rfloor}\binom{n}{i}3^i}.
    \]
    \item \cite[Theorem 3.2]{doi:10.1089/10665270152530818} (Gilbert–Varshamov bound): For given integer $n$ and $1\leq d_H\leq n$,
    \[
    A_4(n,d_H)\geq\frac{4^n}{\sum_{i=0}^{d_H-1}\binom{n}{i}3^i}.
    \]
    \item \cite[Theorem 3.3]{doi:10.1089/10665270152530818} (Singleton bound): For given integer $n$ and $1\leq d_H\leq n$,
    \[
    A_4(n,d_H)\leq 4^{n-d_H+1}.
    \]
    \item \cite[Theorem 3.4]{doi:10.1089/10665270152530818} (Plotkin bound): For given integer $n$ and $3n/2<d_H\leq n$,
    \[
    A_4(n,d_H)\leq\frac{4d_H}{4d_H-3n}.
    \]
    \item \cite[Theorem 3.5]{doi:10.1089/10665270152530818} For given integer $n$ and $1\leq d_H\leq n$,
    \begin{itemize}
        \item  $A_4(n,d_H)\geq A_4(n+1,d_H+1)$, and
        \item  $A_4(n,d_H)\geq A_4(n+1,d_H)/4$.
    \end{itemize}
    \item \cite[Theorem 4.1]{doi:10.1089/10665270152530818} For given even integer $n$ and $1\leq d_H\leq n$, 
    \[
    A_4^{rc}(n,d_H) = A_4^r(n,d_H).
    \]
    \item \cite[Theorem 4.1]{doi:10.1089/10665270152530818} For given odd integer $n$ and $1\leq d_H\leq n$,
    \[
    A_4^r(n,d_H+1)\leq A_4^{rc}(n,d_H)\leq A_4^r(n,d_H-1).
    \]
    \item \cite[Proposition 2]{GABORIT200599} For given odd integer $n$ and $1\leq d_H\leq n$,
    \[
    A_4^{rc}(n,d_H)\leq A_4^r(n,d_H)/2.
    \]
    \item \cite[Theorem 4.3]{doi:10.1089/10665270152530818} For given integer $n$ and $1\leq d_H\leq n$, consider a set $S$ of all DNA strings of length $n$ such that, for any $\x,\y\in S$, $H(\x,\y^r)\geq d_H$ and $\x\neq\y$. 
    Then, 
        \[
        A_4^r(n,d_H)\geq\frac{4^{\lceil n/2\rceil}}{2V^+(d_H-1)}\sum_{i=\lceil d_H/2\rceil}^{\lfloor n/2\rfloor}\binom{\lfloor n/2\rfloor}{i}3^i,
        \]
        where $V^+(d_H)$ is the maximum size of the set $S$ for given $d_H$. 
        \item \cite[Theorem 4.4]{doi:10.1089/10665270152530818} (Halving bound): For given integer $n$ and $1\leq d_H\leq n$,
         \[
         A_4^r(n,d_H)\leq A_4(n,d_H)/2,
         \]
        where, for the $(n,A_4^r(n,d_H),d_H)$ DNA code $\mathscr{C}_{DNA}$, if $\x\in\mathscr{C}_{DNA}$ then $\x^r\notin\mathscr{C}_{DNA}$.
        \item \cite[Theorem 4.5]{doi:10.1089/10665270152530818} (Cai's lower bound): For given  integer $n$ and $1\leq d_H\leq n$,
        \[
        A_4^r(2n,2d_H)\geq \lfloor A_4(n,d_H)/2\rfloor.
        \]
        \item \cite[Theorem 4.7]{doi:10.1089/10665270152530818} (Product bound): For given integer $n$ and $1\leq d_H\leq n$,
        \[
        A_4^r(n,d_H)\geq A_2^r(n,d_H)\cdot A_2(n,d_H).
        \]        
        \item \cite[Theorem 4.9]{doi:10.1089/10665270152530818} For given integer $n$ and $1\leq d_H\leq n$,
        \begin{itemize}
            \item $A_4^r(n,d_H)\leq A_4^r(n,d_H-1)$, and 
            \item $A_4^r(n,d_H)/4\leq A_4^r(n-1,d_H)\leq A_4^r(n,d_H)$ for odd $n$.
        \end{itemize}
    \item \cite[Proposition 5]{GABORIT200599} For given odd integer $n$ and $1\leq d_H,w\leq n$,
    \[
    A_4^{rc,GC}(n,d_H,w)\leq A_4^{r,GC}(n,d_H,w)/2.
    \]
    \item \cite[Proposition 9]{GABORIT200599} For given integer $n$ and $1\leq d_H,w\leq n$,
    \[
    A_4^{rc,GC}(n,d_H,w)\geq A_2^r(n,d_H,w)\cdot A_2(n,d_H).
    \]
    \item \cite[page no. 110]{GABORIT200599} For given integer $n$ and $1\leq d_H,w\leq n$,
    \begin{itemize}
        \item $A_4^{rc,GC}(n,d_H,w)\leq A_4^{rc,GC}(n,d_H-1,w)$, and 
        \item $A_4^{rc,GC}(n,d_H,w)\leq A_4^{rc,GC}(n+1,d_H,w)$.
    \end{itemize}
    \item \cite[Proposition 1]{King2003} For given integer $n$ and $1\leq d_H,w\leq n$,
    \begin{itemize}
        \item $A_4^{GC}(n,d_H,w)=A_4^{GC}(n,d_H,n-w)$, and
        \item $A_4^{GC}(n,d_H,0)=A_2(n,d_H)$.
    \end{itemize}
    \item \cite[Theorem 2]{King2003} (Johnson-type bound): For given integer $n$ and $1\leq d_H,w\leq n$,
    \begin{itemize}
        \item $A_4^{GC}(n,d_H,w)\leq\left\lfloor\frac{2n}{w}A_4^{GC}(n-1,d_H,w-1)\right\rfloor$, and 
        \item $A_4^{GC}(n,d_H,w)\leq\left\lfloor\frac{2n}{n-w}A_4^{GC}(n-1,d_H,w)\right\rfloor$.
    \end{itemize}
    \item \cite[Theorem 5]{King2003} For given integer $n$ and $1\leq d_H,w\leq n$, if $2nd_H>n^2+2nw-2w^2$ then 
    \[
    A_4^{GC}(n,d_H,w)\leq\frac{2nd_H}{2nd_H-(n^2+2nw-2w^2)}.
    \] 
    \item \cite[Theorem 8]{King2003} (Gilbert-type bound): For given integer $n$ and $1\leq d_H,w\leq n$, 
    \[
    A_4^{GC}(n,d_H,w)\geq\frac{\binom{n}{w}2^n}{\sum_{r=0}^{d_H-1}\sum_{i=0}^{\min\{\lfloor r/2\rfloor,w,n-w\}}\binom{w}{i}\binom{n-w}{i}\binom{n-2i}{r-2i}2^{2i}}.
    \]
    \item \cite[Theorem 11]{King2003} (Gilbert-type bound): For given integer $n$ and $1\leq d_H,w\leq n$, 
    \[
    A_4^{rc,GC}(n,d_H,w)\geq\frac{\sum_{r=d_H}^nV(n,r,w)}{2\sum_{r=0}^{d_H-1}\sum_{i=0}^{\min\{\lfloor r/2\rfloor,w,n-w\}}\binom{w}{i}\binom{n-w}{i}\binom{n-2i}{r-2i}2^{2i}},
    \]
    where $V(n,r,w)$ is the size of the set 
    \[
    \{\x:H(\x,\x^{rc})=r\mbox{ and }w_{GC}(\x)=w\mbox{ for }\x\in\Sigma_{DNA}^n\}.
    \]
    \item \cite[Proposition 12]{King2003} For given integer $n$ and $1\leq d_H,w\leq n$,  
    \begin{itemize}
        \item $A_4^{rc,GC}(n,d_H,w)$ = $A_4^{r,GC}(n,d_H,w)$ for even $n$, and 
        \item $A_4^{r,GC}(n,d_H+1,w) \leq A_4^{rc,GC}(n,d_H,w)\leq A_4^{r,GC}(n,d_H-1,w)$ for odd $n$.
    \end{itemize}
    \item \cite[Theorem 13]{King2003} For given integer $n$ and $1\leq d_H,w\leq n$,  
    \begin{itemize}
        \item $A_4^{GC}(n,d_H,w)\geq A_2(n,d_H,w)\cdot A_2(n,d_H)$,  
        \item $A_4^{r,GC}(n,d_H,w)\geq A_2^r(n,d_H,w)\cdot A_2(n,d_H)$,
        \item $A_4^{r,GC}(n,d_H,w)\geq A_2(n,d_H,w)\cdot A_2(n,d_H)^r$,
        \item $A_4^{GC}(n,d_H,w)\geq A_3(n,d_H,w)\cdot A_2(n-w,d_H)$,
        \item $A_4^{r,GC}(n,d_H,w)\geq A_3^r(n,d_H,w)\cdot A_2(n-w,d_H)$, and
        \item $A_4^{r,GC}(n,d_H,w)\geq A_3(n,d_H,w)\cdot A_2^r(n-w,d_H)$.
    \end{itemize}
    \item \cite[Theorem 2]{8424143} For given integer $n$ and $1\leq d_H,w\leq n$, 
    \[
    A_4^{GC,Homo}(n,d_H,w)\geq\frac{B(n,w)}{\sum_{r=0}^{d_H-1}\sum_{i=0}^{\min\{\lfloor r/2\rfloor,w,n-w\}}\binom{w}{i}\binom{n-w}{i}\binom{n-2i}{r-2i}2^{2i}},
    \]
    where
    \[
    B(n,w) = \sum_{j=0}^{v-1}2^{2v+1-2j}\binom{v-1}{j}\binom{n-v}{v-j}+\sum_{j=0}^{v-2}2^{2v-1-2j}\binom{v-1}{j}\binom{n-v-1}{v-j-2}, 
    \]
    and $v$ = $\min\{w,n-w\}$.
\end{enumerate}

\section{Some Open Problems}
\label{my sec:7}
The designing of DNA codes with the desired properties is somewhat still an open challenge despite of so much literature. In this chapter, we presented an algebraic 
approach for the construction of DNA codes. We summarise the following research directions that one can explore further. 
%
\begin{description}[\hspace{2cm}]
\item[Problem \ref{my sec:7}.1] Exploring algebraic structures such as other finite rings and finite fields that can yield DNA codes with high minimum Hamming distance.
\item[Problem \ref{my sec:7}.2] Developing techniques for handling new constraints (such as secondary structure formation) via algebraic means arising from DNA storage applications.
\item[Problem \ref{my sec:7}.3] Using computational tools such as Magma together with codes over finite algebraic structures and computational techniques in constructing large set of DNA codes. 
\item[Problem \ref{my sec:7}.4] Updating the Tables of DNA codes by filling the gaps.
\item[Problem \ref{my sec:7}.5] Finding tight bounds on DNA codes with various constraints and properties.
\item[Problem \ref{my sec:7}.6] Finding optimal codes (bounds achieving) DNA codes with various constraints and properties.
\end{description}

\bibliographystyle{plain}
\bibliography{DNA}

\end{document}